\title{Degree Bounded Bottleneck Spanning Trees in Three Dimensions}
\author{Patrick J. Andersen%
  \thanks{Email: \texttt{pat.j.andersen@gmail.com}; Phone: \texttt{+61-413579238}; Corresponding author}
}
\author{Charl J. Ras}
\affil{School of Mathematics and Statistics, The University of Melbourne, Australia}
\newtheorem{lemma}{Lemma}
\newtheorem{theorem}{Theorem}
\newtheorem{proposition}{Proposition}
\newtheorem{corollary}{Corollary}
\newtheorem{conjecture}{Conjecture}
\providecommand{\keywords}[1]{\textit{Keywords: } #1}
\def\@maketitle{%
  \newpage
  \null
  \vskip 2em%
  \begin{center}%
  \let \footnote \thanks
    {\Large\bfseries \@title \par}%
    \vskip 1.5em%
    {\normalsize
      \lineskip .5em%
      \begin{tabular}[t]{c}%
        \@author
      \end{tabular}\par}%
    \vskip 1em%
    {\normalsize \@date}%
  \end{center}%
  \par
  \vskip 1.5em}
\begin{document}
\maketitle
\begin{abstract}
The geometric \textit{$\delta$-minimum spanning tree problem} ($\delta$-MST) is the problem of finding a minimum spanning tree for a set of points in a normed vector space, such that no vertex in the tree has a degree which exceeds $\delta$, and the sum of the lengths of the edges in the tree is minimum. The similarly defined geometric \textit{$\delta$-minimum bottleneck spanning tree problem} ($\delta$-MBST), is the problem of finding a degree bounded spanning tree such that the length of the longest edge is minimum. For point sets that lie in the Euclidean plane, both of these problems have been shown to be NP-hard for certain specific values of $\delta$. In this paper, we investigate the $\delta$-MBST problem in $3$-dimensional Euclidean space and $3$-dimensional rectilinear space. We show that the problems are NP-hard for certain values of $\delta$, and we provide inapproximability results for these cases. We also describe new approximation algorithms for solving these $3$-dimensional variants, and then analyse their worst-case performance.\\\\
\keywords{Minimum spanning trees, bottleneck objective, approximation algorithms, discrete geometry, bounded degree, combinatorial optimisation}
\end{abstract}
\section{Introduction}
Spanning tree problems have been studied throughout the history of combinatorial optimisation. They arise in a multitude of practical settings including computer and telecommunication networks, transportation, plumbing, and electrical circuit design \cite{graham1985history}. In graph theoretic terms, a typical spanning tree problem involves finding a spanning subgraph $G'= (V,E')$ of a given graph $G = (V,E)$, where $G'$ is connected and $|E'| = |V| - 1$, such that $G'$ is optimal with respect some objective function on the set of edges. There may also be additional constraints imposed on $G'$, such as constraints on maximum degree or maximum diameter of the tree. In this paper we will focus on geometric versions of the spanning tree problem, where the given graph will be a complete graph whose vertices lie in a normed vector space. Each edge in the graph will have an associated \textit{length} given by the distance between the endpoints of the edge.\\\\  
The most well-known spanning tree problem is the \textit{minimum spanning tree} (MST) problem, where the aim of this problem is to find a spanning tree such that the sum of the lengths of the edges of the tree is minimum. A similar problem is the \textit{bottleneck spanning tree problem}, in which we require that the length of the longest edge in the tree is minimum. Both of these spanning tree problems can be solved efficiently using polynomial time algorithms (see Kruskal \cite{kruskal1956shortest}, Prim \cite{prim1957shortest} for the MST problem, and Camerini \cite{mbst} for the MBST problem). In fact every algorithm which solves the MST problem can be used to solve the MBST problem since every MST is a MBST.\\\\
In this paper, we focus on variants of the MST and MBST where we include constraints on the maximum degree of the spanning trees. In these variants we have an associated degree bound $\delta$ such that no vertex in the optimal spanning tree can have a degree which exceeds $\delta$. We refer to these degree-constrained versions of the MST and MBST problems as the $\delta$-MST problem and $\delta$-MBST problem respectively. The most general versions of the $\delta$-MST problem and $\delta$-MBST problems are NP-hard for all $\delta \geq 2$ \cite{GareyJohnson}. When restricted to the Euclidean plane, both the $\delta$-MST and $\delta$-MBST problems can be solved in polynomial time when $\delta \geq 5$ \cite{degree5EMST}. It has been shown that the $\delta$-MST problem in the Euclidean plane is NP-hard when $\delta = 2,3$ (see \cite{papavazi}), and $4$ (see \cite{francke2009euclidean}), and the $\delta$-MBST problem in the Euclidean plane is NP-hard when $\delta = 2$ and $3$ \cite{andersen2016minimum}. Approximation algorithms for the Euclidean $\delta$-MST problem have been developed by Khuller et al. \cite{khuller1996low} and Chan \cite{chan2003euclidean} and their approximation ratios for the $\delta$-MBST problem were explored in \cite{andersen2016minimum}. The performances of various heuristic and approximation algorithm schemes for the Euclidean $\delta$-MST and $\delta$-MBST problems were investigated in \cite{andersen2018algorithms}.\\\\
\textbf{Our results:} We analyse the $\delta$-MBST problem in $3$-dimensional Euclidean space, which we denote as the $\delta$-E3MBST problem, and in $3$-dimensional rectilinear space, which we denote as the $\delta$-R3MBST problem. We show that the $\delta$-E3MBST and $\delta$-R3MBST problems are NP-hard for $\delta = 4,5$ and we provide inapproximability results for these cases. We describe new approximation algorithms that are simple extensions of the constant factor approximation algorithm of Khuller et al. to 3-dimensional space. We then analyse the performance of these algorithms by attempting to find instances, through both analytic and experimental means, for which the algorithms perform at their worst-case accuracy. We also describe more sophisticated extensions of algorithm of Khuller et al., for which we provide similar analyses of performance. 


\section{Hadwiger Number}\label{sec:Hadwiger}
Let $C$ be a convex body in $\mathbb{R}^d$. The \textit{Hadwiger number} \cite{toth1975hadwiger} (also sometimes referred to in the literature as the \textit{kissing number}) $H(C)$ is the maximum number of mutually non-overlapping translates of $C$ that can be arranged to touch $C$. The \textit{strict Hadwiger number} $H^*(C)$ is defined similarly, with the further restriction that all the translates of $C$ are pairwise disjoint. For example, if $B^2$ is the unit disc in the Euclidean plane, it is known that $H(B^2) = 6$ and $H^*(B^2) = 5$.\\~\\
Let $\delta(v)$ denote the degree of a vertex $v$, and for a tree $T =(V,E)$, let $\delta(T) = \max\{ \delta(v): v \in V\}$. For a finite set of points $S$ in a metric space, let $\delta^+(S) = \max\{\delta(T): T \text{ is a}$ $\text{minimum spanning tree of } S \}$, and let $\delta(S)^- =  \min\{\delta(T): T \text{ is a minimum}$ $\text{spanning tree of } S \}$. For a metric space $M$, let $\delta^+(M) = \max\{\delta^+(S):S \subset M, S \text{ is finite}\}$, i.e., $\delta^+(M)$ is the maximum possible degree that any vertex in an MST of a set of points in $M$ can achieve. Similarly, let $\delta^-(M) = \max\{\delta^-(S):S \subset M, S \text{ is finite}\}$, i.e, $\delta^-(M)$ is the smallest constant $c$, such that any finite point set in $M$ has a spanning tree with maximum degree at most $c$. For example, for the Euclidean plane, which we denote by $E^2$, it is well known that $\delta^+(E^2) = 6$ and that $\delta^-(E^2) = 5$ \cite{degree5EMST}, hence it can be seen that $\delta^+(E^2) = H(B^2)$ and $\delta^-(E^2) = H^*(B^2)$. For any $d$-dimensional Minkowski space $X^d$ with unit ball $B$, Cieslik \cite{cieslik19911} has shown that $\delta^+(X^d) = H(B)$ and Martini and Swanepoel \cite{martini2006low} have shown that $\delta^-(X^d) = H^*(B)$.\\~\\
Since we are focusing on three dimensional Euclidean space, which we denote by $E^3$, and three dimensional rectilinear space, which we denote by $R^3$, we need only consider the Hadwiger numbers of the 3-dimensional Euclidean unit sphere $B^3$ and the 3-dimensional rectilinear unit octahedron $O^3$. It is well known that $H(B^3) = H^*(B^3) = 12$, thus we can conclude that $\delta^+(E^3) = \delta^-(E^3) = 12$. A corollary of this is that any MST is a $\delta$-MST when $\delta \geq 12$, and that there exists a point set $S$ in $E^3$ for which all minimum spanning trees contain a vertex of degree 12. For example, let $S$ contain the origin and all vertices of a regular icosahedron centred at the origin, then $S$ will have a unique spanning tree in which the vertex at the origin has degree 12. \\~\\
Let $\delta$-E3MST denote the $\delta$-MST problem in $E^3$. Since we can find an MST in polynomial time, we can conclude the following.

\begin{theorem}
The $\delta$-E3MST problem can be solved in polynomial time when $\delta \geq 12$. 
\end{theorem}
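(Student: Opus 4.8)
The plan is to show that the degree constraint becomes vacuous once $\delta \geq 12$, so that solving the $\delta$-E3MST problem reduces to computing an ordinary minimum spanning tree. The engine of the argument is the fact established immediately above, namely that $\delta^+(E^3) = H(B^3) = 12$: every vertex in every MST of a finite point set in $E^3$ has degree at most $12$. Thus the bound $\delta \geq 12$ can never be violated by an MST, and the constraint simply cannot bind.

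Concretely, I would first compute an arbitrary MST $T$ of the given point set $S$ using any standard polynomial-time algorithm, such as Kruskal's or Prim's. Next I would observe that, since $\delta^+(E^3) = 12$, we have $\delta(T) \leq 12 \leq \delta$, so $T$ respects the degree bound and is therefore a feasible solution to the $\delta$-E3MST instance. It then remains only to confirm optimality: every solution to the $\delta$-E3MST problem is by definition a spanning tree of $S$, and the total length of any spanning tree is at least that of a minimum spanning tree. Hence the weight of $T$ is a lower bound on the weight of every feasible (degree-constrained) spanning tree, and since $T$ itself is feasible, it attains this lower bound and is optimal. As $T$ is produced in polynomial time, the entire procedure runs in polynomial time.

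I do not anticipate a genuine obstacle here, since the substantive content—the determination of the Hadwiger number $H(B^3) = 12$ and the resulting equality $\delta^+(E^3) = 12$—has already been cited. The only point requiring a little care is the logical step that a non-binding constraint cannot alter the optimum: one must make explicit that optimality for the unconstrained MST problem transfers to the constrained problem precisely because the unconstrained optimum happens to be feasible for the constrained one.
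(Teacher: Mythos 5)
Your proposal is correct and follows essentially the same route as the paper: the paper likewise invokes $\delta^+(E^3) = H(B^3) = 12$ to conclude that any MST is automatically a $\delta$-MST for $\delta \geq 12$, and then appeals to polynomial-time MST computation. Your additional care in spelling out why feasibility of the unconstrained optimum transfers optimality to the constrained problem is a fuller write-up of the step the paper leaves implicit, not a different argument.
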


For $O^3$ and $R^3$, it has been shown that $H(O^3) = 18$ (see \cite{robins1995low}, \cite{talata1999translative},\cite{larman1999kissing}) and that $13 \leq \delta^-(R^3) \leq 14$ \cite{robins1995low}.

\section{Complexity of the $\delta$-E3MST Problem}
In this section we present computational complexity results for the $\delta$-E3MBST and $\delta$-R3MBST problems for some specific cases of $\delta$. Our results extend the complexity results for the $\delta$-E2MBST and $\delta$-R2MBST problems provided in \citep{andersen2016minimum} into three dimensions. The technique used in \citep{andersen2016minimum} was to transform results for the the $\delta$-MST problem in $E^2$, which we refer to as the $\delta$-E2MST problem, into complexity results for the bottleneck version of the problem. In keeping with this method, we will first start with the complexity results for the $\delta$-E2MST problem and extend them to three dimensions before transforming the results into complexity proofs for the bottleneck version.\\~\\
The $2$-E2MST problem is equivalent to the Euclidean Travelling Salesman Path problem, which is known to be NP-hard \cite{papadimitriou1977euclidean},\cite{GareyJohnson}. Papadimitriou and Vazirani \cite{papavazi} have proven that $\delta$-E2MST problem is NP-hard for $\delta = 3$, and Francke and Hoffman \cite{francke2009euclidean} have proven that the problem is NP-hard for $\delta = 4$. Thus the $\delta$-E2MST problem is NP-hard for $\delta \in \{2,3,4\}$ and hence the $\delta$-E3MST problem is NP-hard for these values of $\delta$ since 2-dimensional Euclidean space is a subset of 3-dimensional Euclidean space. To extend the results to the next highest dimension, we will modify the grid graph approach of Papadimitriou and Vazirani.\\~\\ 
We define a \textit{grid graph} to be any finite connected vertex-induced subgraph of the infinite 2-dimensional grid embedded in the plane so that the vertices are at integer coordinates and vertices are adjacent if and only if they have unit distance from each other. We have the following result used by Papadimitriou and Vazirani \cite{papavazi}. 

\begin{theorem} \label{thm:hamiltonian_path_grid_graph}
 The Hamiltonian path problem for 2-dimensional grid graphs with maximum degree 3 is NP-complete.   
\end{theorem}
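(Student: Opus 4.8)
The plan is to establish both membership in NP and NP-hardness. Membership is immediate: a Hamiltonian path, presented as an ordering $v_1, v_2, \dots, v_n$ of the vertices, can be verified in polynomial time by checking that each consecutive pair is adjacent in the grid graph and that every vertex appears exactly once. Since a grid graph has a description whose size is polynomial in its number of vertices, this certificate is polynomially checkable, so the problem lies in NP.

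For hardness, I would reduce from a suitably restricted Hamiltonicity problem that is already known to be NP-complete, namely the Hamiltonian cycle problem in planar bipartite graphs of maximum degree $3$. Given such an instance $G = (V,E)$, the strategy is to first compute a planar embedding of $G$ on a coarse integer grid (planar graphs admit embeddings of polynomial area with edges drawn as orthogonal polylines), scaling the drawing so that vertices are far apart and edge routes do not interfere. I would then replace each vertex of $G$ by a small \emph{vertex gadget} and each edge by an \emph{edge gadget}, the latter being a corridor of grid vertices tracing the orthogonal route of that edge. Each gadget is itself an induced grid subgraph, and the gadgets are glued together to form a single grid graph $G'$ whose size is polynomial in $|V|$.

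The gadgets must be engineered so that a Hamiltonian cycle of $G$ corresponds precisely to a Hamiltonian path of $G'$: traversing an edge of $G$ should correspond to the path entering and exiting the associated corridor, while an unused edge corridor must still be covered internally without allowing the path to leak between gadgets inconsistently. The bookkeeping that makes this an equivalence, namely showing that any Hamiltonian path of $G'$ decomposes into gadget traversals that project back to a Hamiltonian cycle of $G$, and conversely, is routine but must be verified for each gadget type: the degree-$2$ and degree-$3$ vertex gadgets, the straight corridors, and the turns.

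The main obstacle, and the feature that distinguishes this statement from the unrestricted grid-graph Hamiltonicity result, is enforcing the maximum-degree-$3$ constraint. In a grid a vertex may have up to four neighbours, so at the junctions of the vertex gadgets, particularly at the degree-$3$ vertices of $G$ where three corridors meet, I must design the local structure to branch the potential path into three directions without ever leaving a grid vertex that retains all four of its lattice neighbours. I expect this to require replacing naive plus-shaped junctions with offset or staircase-shaped connectors, and then verifying gadget by gadget that the degree bound holds everywhere while the Hamiltonicity correspondence is preserved.
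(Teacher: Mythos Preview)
The paper does not prove this theorem; it states it as a known result of Papadimitriou and Vazirani and uses it as a black box in the subsequent reductions. So there is no in-paper proof to compare your attempt against.

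On the merits of your sketch: the overall architecture (embed a planar degree-$3$ graph orthogonally on a grid, replace vertices by gadgets and edges by corridors, then argue a Hamiltonicity correspondence) is the standard Itai--Papadimitriou--Szwarcfiter approach and is the right shape. Two points deserve more care. First, you reduce from Hamiltonian \emph{cycle} but the target is Hamiltonian \emph{path}; you need to say how the cycle-to-path conversion is effected in the gadgetry (typically by designating one edge of $G$, splitting its corridor, and forcing the path's endpoints there), otherwise the correspondence you assert does not hold as stated. Second, the phrase ``an unused edge corridor must still be covered internally'' hides the real combinatorial content: you must exhibit corridor and junction gadgets that admit exactly the right set of local traversals (cross-through versus return) while keeping every grid vertex at degree at most $3$. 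This is where the actual work of the proof lives, and without concrete gadgets and a case analysis of their traversals the argument is only a plan, not a proof.
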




By extending the approach used in \cite{papavazi}, we can obtain the following result.

\begin{theorem} \label{thm:deg5NPMST}
 The 5-E3MST problem is NP-complete.   
\end{theorem}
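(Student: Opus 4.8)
The plan is to give a polynomial-time reduction from the Hamiltonian path problem for degree-3 grid graphs (Theorem \ref{thm:hamiltonian_path_grid_graph}) to the decision version of 5-E3MST, following and extending the grid-graph construction of Papadimitriou and Vazirani \cite{papavazi}. The essential difficulty is that the degree-5 bound is never binding for a point set confined to a plane, since $\delta^-(E^2)=5$, so a genuine use of the third dimension is required. The idea is to take the planar point set $S_0$, together with its length threshold $L$, produced by the degree-3 construction of \cite{papavazi}, embed it in the plane $z=0$ of $E^3$, and then attach to each relevant point a pair of \emph{degree-absorbing} gadgets, one displaced in the $+z$ direction and one in the $-z$ direction. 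Each gadget will be joined to its host by a single very short edge, so that in any short spanning tree it consumes exactly one unit of degree at the host; the two gadgets together lower the effective in-plane degree bound at every such point from $5$ to $3$, recovering the situation of the planar degree-3 problem.

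For the forward direction, suppose the grid graph $G$ has a Hamiltonian path. Then $S_0$ admits a spanning tree $T_0$ of maximum degree at most $3$ and total length $L$. Adjoining to $T_0$ the mandatory edge from each host to each of its two gadgets (together with the internal gadget edges) yields a spanning tree $T$ of the full point set $S$ whose maximum degree is at most $3+2=5$ and whose length is $L+P$, where $P$ is the fixed total length contributed by all gadgets. Hence $S$ has a degree-5 spanning tree of length at most $L+P$. Membership in NP is routine: a candidate tree can be checked for the spanning property, the degree bound, and the length bound in polynomial time, with the comparison of sums of Euclidean distances handled exactly as in the planar reductions, by arranging the construction so that the relevant length comparisons are decidable in polynomial time.

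Conversely, I would show that every degree-5 spanning tree of $S$ of length at most $L+P$ must realise each gadget in its canonical form, attached to its intended host by the single short edge, and therefore uses exactly two units of degree per host on gadgets. Restricting such a tree to $S_0$ then produces a spanning tree of $S_0$ of maximum degree at most $3$ and length at most $L$, which by the correctness of the planar construction forces $G$ to contain a Hamiltonian path. This completes the equivalence and, together with NP membership, establishes NP-completeness.

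The main obstacle is the claim that the gadget edges are \emph{forced}. Detaching a gadget from its host both lengthens the tree, since the gadget must then reconnect to $S$ through some much longer edge, and frees a unit of degree at the host, and freeing that degree could in principle permit a cheaper in-plane subtree; the reduction only survives if the reconnection penalty strictly dominates any length that could be saved in-plane by the extra degree. I expect to resolve this by a quantitative scaling argument: choosing the vertical displacement of each gadget small enough that its attaching edge is the unique shortest edge across the cut separating the gadget from the remainder of $S$ (so that the edge lies in every MST), while simultaneously engineering the gadgets so that the cost of any alternative attachment exceeds the total length of the in-plane structure, and hence exceeds any achievable per-degree saving. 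Verifying this balance, and checking that the gadgets introduce no shorter edges that perturb the in-plane structure of $S_0$, is the delicate part of the proof.
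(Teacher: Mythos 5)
Your overall strategy is the same as the paper's: reduce from Hamiltonian path in maximum-degree-3 grid graphs, keep the planar Papadimitriou--Vazirani construction in the plane $z=0$, and attach to each grid vertex two out-of-plane degree absorbers (one above, one below) so that, together with the lateral pseudo node, three units of degree are consumed at every host, leaving at most two for grid edges. Your forward direction and NP membership are handled as in the paper. However, there is a genuine gap exactly at the step you defer, and your proposed resolution would fail as stated. You suggest making the vertical displacement ``small enough that its attaching edge is the unique shortest edge across the cut separating the gadget from the remainder of $S$''. That does force every MST to contain the attaching edges, but it does not give the converse: if all upper gadgets sit at a common height $h$, then the upper gadgets of two \emph{adjacent} grid vertices are at distance exactly $\sqrt{1+(h-h)^2}=1$, i.e., exactly the grid-edge length. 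A spanning tree of weight $L+P$ may then connect two neighbouring host clusters through a gadget--gadget edge rather than through the hosts, which evades the degree bottleneck at the hosts. Concretely, take $G$ to be the grid star with centre $c$ and three neighbours, which has no Hamiltonian path: one can still span the full 3D point set within your threshold by joining two neighbours to $c$ by grid edges and routing the third neighbour's cluster to $c$'s cluster through their two upper gadgets; every host then has degree at most $5$ and every gadget degree at most $2$. So the reduction would answer ``yes'' on a no-instance. Note also that your stated check --- that the gadgets ``introduce no shorter edges'' --- is the wrong test: the dangerous edges are \emph{ties} at length exactly $1$, not edges shorter than the in-plane ones.

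The paper closes this gap by exploiting bipartiteness of the grid graph: it 2-colours $V$ and gives black and white vertices different lateral offsets ($\epsilon$ versus $\delta$) and different vertical offsets ($\epsilon'$ versus $\delta'$, with $\delta' > \epsilon'$ and both chosen larger than $\sqrt{1-\epsilon^2}$, $\sqrt{1-\delta^2}$ respectively), so that \emph{every} pairwise distance other than a host-to-its-own-pseudo-node distance or a grid edge is strictly greater than $1$. With that, the threshold $n-1+\delta n_B + 2\delta' n_B' + \epsilon n_W + 2\epsilon' n_W'$ is precisely the sum of the $N-1$ shortest pairwise distances, so any spanning tree meeting it must consist of exactly the grid edges plus the host--pseudo-node edges; the degree-5 bound then forces the grid edges to form a Hamiltonian path, with no case analysis about alternative attachments. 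This also dissolves the worry you raise about a freed degree permitting ``a cheaper in-plane subtree'': since the threshold equals the MST weight of the whole point set, no tree can save length anywhere, and the only possible failure mode is an equal-weight tree of different structure --- which is exactly the tie problem above, and which the colour-dependent offsets eliminate. (The paper's choice of vertical offsets close to $1$, rather than small, is what additionally makes same-host gadget-to-lateral distances exceed $1$; this is not strictly needed for the min-sum theorem here, but it is what lets the same construction be reused verbatim for the bottleneck and inapproximability results that follow.)
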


\begin{proof}
Suppose we are given a connected grid graph $G= (V,E)$ with maximum degree 3, where we will assume that $|V| \geq 2$. Since $G$ is a grid graph, all its edges have unit length under the Euclidean metric. We perform a 2-colouring of $G$ so that the vertex set $V$ is partitioned into a set of \textit{black} vertices $B$ and \textit{white} vertices $W$ such that $E \subseteq B \times W$ (this is possible since grid graphs are bipartite). For each node $v \in V$, we add a ``pseudo node" $u$ that is a small distance away from $v$ at direction leading towards missing neighbours of $v$ (there will always be such a direction since the degree of $v$ is at most $3$, whereas the maximum degree in a grid graph is $4$). If $v$ is a black vertex, then we let $u$ be a distance of $\epsilon$ from $v$ and refer to these nodes as \textit{black lateral pseudo nodes}; otherwise we let $u$ be a distance of $\delta$ from $v$ and refer to the nodes as \textit{white lateral pseudo nodes}, where $0<\delta<\epsilon<\frac{2-\sqrt{2}}{2} < 1$. Thus, any two black lateral pseudo nodes have distance of least $\sqrt{2}(1-\epsilon) >1$ from each other, any two white lateral pseudo nodes have distance of least $\sqrt{2}(1-\delta)>1$ from each other, and each black lateral pseudo node has a distance of at least $\sqrt{(\epsilon-\delta)^2+1}>1$ from every white lateral pseudo node.\\~\\
In addition to these pseudo nodes, we also add two pseudo nodes $w_1$ and $w_2$ to each vertex $v$ which we will add above and below $v$, assuming the grid graph $G$ lies on the $x$-$y$ plane in 3-dimensional space. If $v$ is a black vertex, then we place $w_1$ directly above $v$ at a distance of $\epsilon'$ away from $v$ and refer to it as a black upper pseudo node, and we place $w_2$ at a distance of $\epsilon'$ directly below $v$ and refer to it as a black lower pseudo node, where $\sqrt{1 - \epsilon^2} < \epsilon' < 1$. If $v$ is a white vertex, we place $w_1$ and $w_2$ directly above and below $v$ at a distance of $\delta'$ from $v$, where $\sqrt{1 - \delta^2} < \delta' < 1$ and $\delta' > \epsilon'$. Observe that $\epsilon',\delta' > \sqrt{\sqrt{2} - \frac{1}{2}}$. Thus any two upper pseudo nodes have a distance of at least $\sqrt{(\delta'-\epsilon')^2+1}>1$ from one another (and similarly for lower pseudo nodes) and the distance between an upper pseudo node and a lower pseudo node is at least $2  \sqrt{\sqrt{2} - \frac{1}{2}} > 1$. Furthermore, the distance between a black upper (lower) pseudo node and a black lateral pseudo node is at least $\sqrt{{\epsilon'}^2 + \epsilon^2} > 1$, and similarly, the distance between a white upper (lower) pseudo node and a white lateral pseudo node is at least $\sqrt{{\delta'}^2 + \delta^2} > 1$.\\~\\
Let $U$ be the set of pseudo nodes. Let $n = |V|$ and let $n_W$ and $n_B$ be the number of black and white lateral pseudo nodes respectively. Let ${n_W}'$ be the number of white upper and lower pseudo nodes, and let ${n_B}'$ be the number of black upper and lower pseudo nodes. We now show that $G$ has a Hamiltonian path if and only if there is a spanning tree $S$ on the set of points $V \cup U$ with a maximum degree of 5, such that $w(S) \leq n-1 + \delta n_B + 2 \delta' {n_B}'  + \epsilon n_W + 2  \epsilon' {n_W}'$, where $w(S)$ denotes the sum of the lengths of the edges in $S$. If $G$ has Hamiltonian path, then for each vertex $v \in V$, we can add the edges $(v,u),(v,w_1),(v,w_2)$ to the path, where $u,w_1,w_2$ are the lateral pseudo node, upper pseudo node, and lower pseudo node of $v$ respectively, to give a spanning tree $S$ with a maximum degree of 5, and with weight $w(S) = n-1 + \delta n_B + 2 \delta' {n_B}'  + \epsilon n_W + 2  \epsilon' {n_W}'$. Conversely, suppose that we have a spanning tree $T$ on the points $V \cup U$ with a maximum degree of 5, and with weight $w(S) \leq n-1 + \delta n_B + 2 \delta' {n_B}'  + \epsilon n_W + 2  \epsilon' {n_W}'$. Note that the only edges whose lengths are strictly less than 1 are edges between the vertices in $V$ and their respective pseudo nodes in $U$. Since the only edges of unit length are those between adjacent vertices in $G$, and since all other edges have lengths strictly greater than 1, we can conclude that $T$ consists only of the edges from $E$ and the edges $(v,u),(v,w_1),(v,w_2)$ between the points $v \in V$ and their respective pseudo nodes in $U$, since the total weight of these edges is $n-1 + \delta n_B + 2 \delta' {n_B}'  + \epsilon n_W + 2  \epsilon' {n_W}'$ and these are the $(n+n_B+ {n_B}'+n_W+{n_W}' - 1)$ shortest edges. In this case, we can remove the edges $(v,u),(v,w_1),(v,w_2)$ from $T$ and we are left with a Hamiltonian path in $G$. Thus, there is a spanning tree $S$ on the set of points $V \cup U$ with a maximum degree of 5, where $w(S) \leq n-1 + \delta n_B + 2 \delta' {n_B}'  + \epsilon n_W + 2  \epsilon' {n_W}'$ if and only if $G$ has a Hamiltonian path. This completes the proof.
\end{proof}

We can now use this result to establish complexity results for the bottleneck version of the problem. It has already been shown that the $2$-E2MBST and $3$-E2MBST problems are NP-complete \cite{andersen2016minimum}, hence it follows immediately that the $2$-E3MBST and $3$-E3MBST problems are NP-complete as well. The proof of the previous theorem implies the following results.

\begin{theorem} \label{thm:deg5NPMBST}
 The 5-E3MBST problem is NP-complete.   
\end{theorem}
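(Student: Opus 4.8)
The plan is to reduce from the Hamiltonian path problem on degree-$3$ grid graphs (Theorem~\ref{thm:hamiltonian_path_grid_graph}), reusing verbatim the point set $V \cup U$ built in the proof of Theorem~\ref{thm:deg5NPMST}. The crucial structural fact established there is that the edge lengths partition into three classes: the grid edges of $E$ have length exactly $1$, each edge joining a vertex $v \in V$ to one of its three pseudo nodes (lateral, upper, lower) has length strictly less than $1$, and every remaining edge has length strictly greater than $1$. I would therefore fix the bottleneck threshold at $1$ and claim that $G$ has a Hamiltonian path if and only if $V \cup U$ admits a spanning tree of maximum degree $5$ whose longest edge has length at most $1$. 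Membership in NP is immediate, since a candidate tree can be checked against the degree bound and the bottleneck constraint in polynomial time, so the work is entirely in the equivalence.

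For the forward direction, I would take a Hamiltonian path of $G$ and attach to each vertex $v$ its three pseudo nodes as leaves, exactly as in the proof of Theorem~\ref{thm:deg5NPMST}. Each interior vertex of the path then has degree $2 + 3 = 5$, each endpoint has degree $1 + 3 = 4$, and every edge used has length at most $1$, so this is a feasible degree-$5$ spanning tree with bottleneck $1$.

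The reverse direction is where the real work lies. Suppose $T$ is a degree-$5$ spanning tree of $V \cup U$ in which every edge has length at most $1$. By the separation of the edge lengths into the three classes above, $T$ can only use grid edges and pseudo-node edges. Since the sole edge of length at most $1$ incident to any pseudo node is the edge to its own vertex $v$, each of the three pseudo nodes of $v$ must appear as a leaf attached to $v$. These three leaves contribute $3$ to the degree of $v$ in $T$, leaving at most $5 - 3 = 2$ edges of $E$ incident to $v$. Deleting all $3n$ pseudo-node leaves therefore yields a spanning tree of $G$ using only grid edges in which every vertex has degree at most $2$, which is precisely a Hamiltonian path of $G$.

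The main obstacle is this degree-accounting step, and it is where I would concentrate the care: I must verify from the strict length inequalities of the construction that every pseudo node is genuinely \emph{forced} to attach to its own vertex (so that no pseudo node can be co-opted to reduce the degree load on a vertex), and that each vertex carries \emph{exactly} three pseudo nodes, so that the degree bound of $5$ collapses the grid-edge degree to the Hamiltonicity bound of $2$. Once these two facts are pinned down, the threshold value $1$ cleanly separates the usable edges from the rest and the equivalence follows, establishing NP-completeness of the $5$-E3MBST problem.
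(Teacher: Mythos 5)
Your proposal is correct and follows essentially the same route as the paper: the same pseudo-node construction from Theorem~\ref{thm:deg5NPMST}, the same bottleneck threshold of $1$, and the same two-directional equivalence with Hamiltonian paths. Your explicit degree-accounting in the reverse direction (each vertex is forced to carry its three pseudo nodes as leaves, leaving at most $5-3=2$ grid edges per vertex, so the residual tree has maximum degree $2$ and is a path) is a welcome elaboration of a step the paper's proof states only implicitly.
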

\begin{proof}
Given a grid graph $G$, perform the same construction with pseudo nodes as in the proof of Theorem~\ref{thm:deg5NPMST} to obtain the point set $P$. We claim that $P$ has a spanning tree of maximum degree 5 whose longest edge is of unit length if and only $G$ has a Hamiltonian path. If $G$ has a Hamiltonian path, then this path is a tree with a bottleneck value of 1. We can then add the edges $(v,u),(v,w_1),(v,w_2)$ between nodes and their respective pseudo nodes to extend the tree to a spanning tree. This will produce a spanning tree with a maximum degree of 5 and a bottleneck value of 1, since none of the edges between nodes and their respective pseudo nodes have a length greater than 1. Now suppose that $P$ has a spanning tree $T$ with maximum degree 5 and unit bottleneck value. Note that the only edges whose lengths are strictly less than 1 are those between nodes and their respective pseudo nodes. All edges of unit length are between adjacent nodes in $G$. Hence we can conclude that $T$ consists of edges of $G$ and edges $(v,u),(v,w_1),(v,w_2)$ between nodes and their respective pseudo nodes. If we remove the edges $(v,u),(v,w_1),(v,w_2)$ for each node $v$, then the resultant tree we are left with is a Hamiltonian path for $G$. Hence the result is proven.                     
\end{proof}

\begin{theorem} \label{thm:deg4NPMBST}
 The 4-E3MBST problem is NP-complete.   
\end{theorem}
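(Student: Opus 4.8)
The plan is to reduce from the Hamiltonian path problem on $2$-dimensional grid graphs of maximum degree $3$ (Theorem~\ref{thm:hamiltonian_path_grid_graph}), reusing the construction of Theorem~\ref{thm:deg5NPMST} but attaching only the \emph{two vertical} pseudo nodes to each grid vertex and discarding the lateral one (retaining all three would over-constrain the grid subgraph to maximum degree $4-3=1$, which cannot span $V$). Membership in NP is clear, since the degree bound and bottleneck value of a candidate tree are checkable in polynomial time. For the hardness reduction, I take a connected grid graph $G=(V,E)$ of maximum degree $3$ with $|V|\geq 2$, $2$-colour $V$ into classes $B$ and $W$ (grid graphs are bipartite), and embed $G$ in the plane $z=0$ so that all edges have unit length. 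To each $v$ I attach exactly two pseudo nodes lying directly above and below $v$ on the $z$-axis, at heights $\pm\epsilon'$ when $v\in B$ and at heights $\pm\delta'$ when $v\in W$, where now it suffices to take $\tfrac12<\epsilon'<\delta'<1$; the stronger bound $\epsilon',\delta'>\sqrt{\sqrt2-\tfrac12}$ of Theorem~\ref{thm:deg5NPMST} was only needed to separate vertical from lateral nodes and is unnecessary here. Write $U$ for the set of pseudo nodes and $P=V\cup U$.

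The first substantive step is to confirm that the only edges of $P$ of length at most $1$ are the unit grid edges of $E$ and the strictly sub-unit edges from each vertex to its own two pseudo nodes. The vertex-to-pseudo-node edges have length $\epsilon'$ or $\delta'$, both below $1$. Every other candidate edge exceeds $1$: the two pseudo nodes sharing a vertex are $2\epsilon'>1$ or $2\delta'>1$ apart (this is what forces $\epsilon',\delta'>\tfrac12$); the like-positioned pseudo nodes of two adjacent vertices sit at horizontal distance $1$ with distinct heights $\epsilon'\neq\delta'$, hence at distance $\sqrt{1+(\delta'-\epsilon')^2}>1$; oppositely-positioned pseudo nodes of adjacent vertices are $\sqrt{1+(\epsilon'+\delta')^2}>1$ apart; pseudo nodes of two same-coloured vertices lie at horizontal distance at least $\sqrt2$, hence at distance greater than $1$; and any pseudo node is at distance at least $\sqrt{1+(\epsilon')^2}>1$ from any non-parent grid vertex. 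These are the vertical specialisations of the inequalities already established in Theorem~\ref{thm:deg5NPMST}.

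Granting this, the equivalence reduces to a degree count. If $G$ has a Hamiltonian path, that path is a maximum-degree-$2$ spanning tree of $V$ built from unit edges, and appending the two pseudo-node edges at each vertex produces a spanning tree of $P$ of bottleneck value $1$ and maximum degree $2+2=4$. Conversely, suppose $P$ has a degree-$4$ spanning tree $T$ of bottleneck value $1$. Since $T$ uses only edges of length at most $1$ and the sole such edge incident to any pseudo node joins it to its parent, every pseudo node is a leaf hanging off its parent, so each grid vertex devotes exactly two of its four available incidences to pseudo edges. Deleting the pseudo-node leaves from $T$ then leaves a spanning tree of $V$ that uses only grid edges and has maximum degree at most $4-2=2$, which is precisely a Hamiltonian path of $G$. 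This yields the reduction.

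The main obstacle is the distance-verification step. With the lateral nodes gone, the entire argument rests on the two vertical nodes simultaneously lying within unit distance of their parent (so the forward direction can attach them) and strictly beyond unit distance from every other point of $P$ (so the backward direction's claim that each pseudo node is a leaf at its parent holds). The two delicate cases are the pair of pseudo nodes on a single vertex, which forces the heights to exceed $\tfrac12$, and the like-positioned pseudo nodes of adjacent vertices, which forces the two colour classes to receive distinct heights $\epsilon'\neq\delta'$; the bipartiteness of $G$ guarantees adjacent vertices differ in colour and so supplies exactly this separation. Once these inequalities are secured, the degree-counting equivalence is entirely routine.
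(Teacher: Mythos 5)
Your proof is correct, and it follows the same reduction template as the paper (Hamiltonian path on degree-3 grid graphs, pseudo nodes, unit-distance threshold argument), but it differs in a genuine construction detail: the paper keeps a \emph{lateral} pseudo node and an \emph{upper} pseudo node for each grid vertex (discarding the lower one), whereas you keep the \emph{two vertical} pseudo nodes and discard the lateral one. The paper's choice lets it state the proof in two sentences, since the construction is literally that of Theorem~\ref{thm:deg5NPMST} minus one node, inheriting all the distance bounds already verified there --- including the lateral-node machinery (placement toward missing neighbours, the parameters $\epsilon,\delta$ with $0<\delta<\epsilon<\frac{2-\sqrt{2}}{2}$, and the constraint $\epsilon'>\sqrt{1-\epsilon^2}$ needed to separate vertical from lateral nodes of the same vertex). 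Your choice eliminates the lateral nodes entirely, so those parameters and constraints vanish: you correctly observe that $\tfrac12<\epsilon'<\delta'<1$ suffices, with $\epsilon'>\tfrac12$ forced by the upper--lower pair at a single vertex and $\epsilon'\neq\delta'$ (supplied by bipartiteness) forced by like-positioned nodes of adjacent vertices; your remaining distance checks (horizontal separation $\geq\sqrt{2}$ for same-coloured vertices, $\sqrt{1+(\epsilon')^2}>1$ to non-parent vertices) are all sound, and the degree count $2+2=4$ and the leaf-deletion argument in the converse are exactly as in the paper. In short, you trade the paper's brevity-by-reference for a self-contained construction with strictly simpler parameter constraints; both are valid proofs of the theorem.
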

\begin{proof}
This proof is the same as the proof of Theorem~\ref{thm:deg5NPMBST}, only we add 2 pseudo nodes to each node of $G$ instead of 3; a lateral pseudo node and an upper pseudo node. Hence, adding the edges $(v,u),(v,w_1)$ to a Hamiltonian path of $G$ produces a spanning tree of maximum degree 4 instead of 5.
\end{proof}

As was the case for the complexity result for the $3$-E2MBST in \citep{andersen2016minimum}, we can also establish inapproximability results as a corollary.

\begin{corollary} \label{cor:deg4NPMBST}
For $\delta = 4,5$, the $\delta$-E3MBST problem cannot be approximated in polynomial time within a constant factor less than $1.0009$ unless P=NP.  
\end{corollary}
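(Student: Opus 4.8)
The plan is to derive the inapproximability result directly from the NP-completeness reductions in Theorems~\ref{thm:deg5NPMBST} and \ref{thm:deg4NPMBST}, exploiting the \emph{gap} in bottleneck values that those reductions already create. The key observation is that in the constructed point set $P$, the spanning tree of maximum degree $\delta$ ($\delta = 4,5$) achieving bottleneck value exactly $1$ corresponds precisely to a Hamiltonian path in $G$, whereas if no Hamiltonian path exists, any spanning tree of maximum degree $\delta$ must use at least one edge strictly longer than $1$. Thus the reduction is a \emph{gap-producing} reduction: deciding between ``optimal bottleneck $=1$'' and ``optimal bottleneck $\geq 1 + g$'' is already NP-hard, where $g>0$ is the smallest amount by which any non-grid edge exceeds unit length.

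First I would pin down the gap $g$ explicitly. In the construction, every edge not joining a grid vertex to one of its own pseudo nodes has length strictly greater than $1$, and the candidate lengths form a finite set determined by the parameters $\epsilon,\delta,\epsilon',\delta'$ and the grid geometry. I would identify the shortest such ``forbidden'' edge over all possible configurations---this is the minimum over the pairwise distances computed in the proof of Theorem~\ref{thm:deg5NPMST}, namely quantities such as $\sqrt{2}(1-\epsilon)$, $\sqrt{(\epsilon-\delta)^2+1}$, $\sqrt{{\epsilon'}^2+\epsilon^2}$, $2\sqrt{\sqrt{2}-\tfrac12}$, and so on. By choosing the free parameters to maximise the minimum of these expressions (subject to the stated constraints $0<\delta<\epsilon<\tfrac{2-\sqrt2}{2}$, $\sqrt{1-\epsilon^2}<\epsilon'<1$, etc.), I would obtain a concrete lower bound on the optimal bottleneck in the ``no Hamiltonian path'' case. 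The claim is that this lower bound can be pushed above $1.0009$.

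Next I would assemble the standard gap-to-inapproximability argument. Suppose, for contradiction, that a polynomial-time algorithm $A$ approximates the $\delta$-E3MBST problem within a constant factor $\rho < 1.0009$. Run $A$ on the instance $P$ produced from an arbitrary grid graph $G$. If $G$ has a Hamiltonian path the optimum is $1$, so $A$ returns a tree of bottleneck at most $\rho < 1.0009 \le 1+g$; if $G$ has no Hamiltonian path the optimum is at least $1+g > \rho$, so $A$ must return a value at least $1+g$. Hence the returned value lets us decide the Hamiltonian path problem in polynomial time, forcing $\mathrm{P}=\mathrm{NP}$. The same argument applies verbatim to the degree-$4$ construction of Theorem~\ref{thm:deg4NPMBST}, since it uses a subset of the same pseudo nodes and therefore the same edge-length gap.

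The main obstacle will be the second step: verifying that the parameters can be tuned so that \emph{every} forbidden edge length exceeds $1.0009$, not merely $1$. This is a finite but delicate optimisation over the admissible parameter ranges, and the binding constraint is likely to be the tightest of the competing distances (for instance $2\sqrt{\sqrt2-\tfrac12}\approx 1.0040$, or the lateral-separation terms as $\epsilon,\delta\to 0$). I would compute each candidate distance as a function of the parameters, take the minimum, and confirm that a feasible choice yields a value strictly above $1.0009$; the specific threshold $1.0009$ in the statement presumably records the numerically optimised bound, matching the two-dimensional figure established in \citep{andersen2016minimum}. Once that numerical verification is in hand, the corollary follows immediately from the gap argument.
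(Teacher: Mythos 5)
Your proposal is correct and takes essentially the same approach as the paper: the paper's proof is exactly this gap argument, running the hypothetical $c$-approximation on the point set of Theorem~\ref{thm:deg5NPMBST} (dropping lower pseudo nodes for $\delta=4$) and observing that no inter-point distance lies in the open interval $(1,1.0009)$, so any returned value below $1.0009$ certifies a Hamiltonian path. The numerical step you defer is completed in the paper by fixing $\epsilon = 0.292249$, $\delta = \epsilon/2$, $\epsilon' = (2-\epsilon^2)/2$ and $\delta' = 0.9999999$, which equalises the binding distances $\sqrt{2}(1-\epsilon)$, $\sqrt{{\epsilon'}^2+\epsilon^2}$ and $\sqrt{(\delta'-\epsilon')^2+1}$ at roughly $1.00091$ (your guess that $2\sqrt{\sqrt{2}-\tfrac{1}{2}}\approx 1.0040$ would bind is a slip: that quantity is about $1.91$, so it is far from tight).
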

\begin{proof}
The arguments for the case in which $\delta =4$ are the same for when $\delta = 5$ (we simply remove the lower pseudo nodes when $\delta = 4$), so we will assume for simplicity that $\delta = 5$. Suppose we had an algorithm $A$ that could approximate the problem within a constant factor $c$ in polynomial time. This means that if we were given an instance of the problem whose optimal bottleneck length is $L$, then if we performed $A$ on the same instance, it would yield a solution with a bottleneck value no worse than $cL$.
If we applied $A$ on set of points $V \cup U$ that we constructed in the proof of Theorem~\ref{thm:deg5NPMBST}, with feasible values for $\epsilon,\delta,\epsilon',\delta'$, then the algorithm would return a value $k \geq 1$. Let $\epsilon = 0.292249$, $\delta = \frac{\epsilon}{2} = 0.1461245$, $\epsilon' = \frac{2-\epsilon^2}{2} \approx 0.95729526$, and $\delta' = 0.9999999$, where $\epsilon, \delta, \epsilon'$ and $\delta'$ were chosen so as to make certain distances between pairs of pseudo nodes approximately equal.\\~\\
Note that the distance between any two black lateral pseudo nodes is at least
\begin{align*}
\sqrt{2}(1- \epsilon)
\approx 1.00091,
\end{align*} 
the distance between any white lateral pseudo node and any black lateral pseudo node is at least
\begin{align*}
\sqrt{(\epsilon-\delta)^2+1}
\approx 1.01062,
\end{align*}
and the distance between a white lateral pseudo node and an original black node is at least
\begin{align*}
\sqrt{1+\delta^2}
\approx 1.01062.
\end{align*}
In addition, the distance between an upper (lower) black pseudo node and a lateral black pseudo node is at least
\begin{align*}
\sqrt{{\epsilon'}^2+\epsilon^2}
\approx 1.00091,
\end{align*}
the distance between an upper (lower) white pseudo node and a lateral white pseudo node is at least
\begin{align*}
\sqrt{{\delta'}^2+\delta^2}
\approx  1.01062,
\end{align*}
and the distance between an upper (lower) black pseudo node and an upper (lower) white pseudo node is at least
\begin{align*}
\sqrt{(\delta'-\epsilon')^2+1}
\approx  1.00091,
\end{align*}

Hence, if $k< 1.0009$, then $k$ must equal $1 $ since no pair of points in $V \cup U$ have a distance that lies in the open interval $(1,1.0009)$. Hence, if $c < 1.0009$, we could use our approximation algorithm to determine if $G$ has a Hamiltonian path in polynomial time. Thus if P $\neq$ NP, we must have $c \geq 1.0009$.
\end{proof}

We also consider the complexity of the rectilinear version of the problem. As in the case of the Euclidean version, it has already been shown that the $2$-R2MBST and $3$-R2MBST problems are NP-complete \cite{andersen2016minimum}, hence it follows that the $2$-R3MBST and $3$-R3MBST problems are NP-complete. To establish complexity results for the cases in which $\delta$ is equal to $4$ and $5$, we can use the same approach as in the Euclidean versions.

\begin{theorem}
The $\delta$-R3MBST problem is NP-complete for $\delta \in \{4,5\}$. Furthermore, the problem cannot be approximated in polynomial time within a constant factor less than  $\frac{6}{5} = 1.2$ unless P=NP.
\end{theorem}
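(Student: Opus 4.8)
The plan is to mirror the three reductions already carried out for the Euclidean case (Theorems~\ref{thm:deg5NPMBST} and~\ref{thm:deg4NPMBST} and Corollary~\ref{cor:deg4NPMBST}), replacing the Euclidean metric by the rectilinear one throughout and recomputing every inter-point distance. I would reduce from the Hamiltonian path problem on grid graphs of maximum degree~$3$ (Theorem~\ref{thm:hamiltonian_path_grid_graph}). Given such a graph $G=(V,E)$ placed in the plane $z=0$, observe first that every edge of $G$ joins two grid points differing by one unit in a single coordinate, so each such edge has rectilinear length exactly $1$, while any two non-adjacent grid points are at rectilinear distance at least $2$. As before I would $2$-colour $V$ into black and white classes and attach pseudo nodes to each vertex: for $\delta=5$ a lateral node (displaced in the $z=0$ plane toward a missing grid neighbour, which exists since $\delta(v)\le 3$) together with an upper and a lower node (displaced in the $+z$ and $-z$ directions); for $\delta=4$ we drop the lower node, exactly as in the passage from Theorem~\ref{thm:deg5NPMBST} to Theorem~\ref{thm:deg4NPMBST}.

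The heart of the argument is to choose the black and white displacement magnitudes so that (i) every edge from a vertex to one of its own pseudo nodes has rectilinear length strictly less than $1$, and (ii) every other pair of points is at rectilinear distance at least $\tfrac{6}{5}$. Property (i) forces, in any spanning tree of bottleneck value $1$, each pseudo node to be a leaf joined to its own original vertex, so the three (resp.\ two) pseudo nodes attached to $v$ consume three (resp.\ two) of its degree budget; hence the restriction of such a tree to $V$ has maximum degree at most $5-3=2$ (resp.\ $4-2=2$), and, being a spanning tree on $V$ that can only use unit-length grid edges, it is precisely a Hamiltonian path of $G$. Conversely, any Hamiltonian path extends to such a tree by attaching the pseudo nodes, all of whose incident edges are shorter than $1$. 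This establishes the equivalence ``$G$ has a Hamiltonian path if and only if the constructed point set admits a degree-$\delta$ spanning tree of bottleneck $1$'', giving NP-hardness; membership in NP is immediate, so the problem is NP-complete for $\delta\in\{4,5\}$.

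For the inapproximability statement I would argue exactly as in Corollary~\ref{cor:deg4NPMBST}: property (ii) guarantees that no pair of points of the construction has distance in the open interval $(1,\tfrac{6}{5})$, so a polynomial-time algorithm with approximation ratio $c<\tfrac{6}{5}$, run on the instance, would return a bottleneck value $k<\tfrac{6}{5}$, forcing $k=1$ and thereby deciding in polynomial time whether $G$ has a Hamiltonian path. Hence $c\ge\tfrac{6}{5}$ unless $\mathrm{P}=\mathrm{NP}$.

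The main obstacle is establishing (ii) with the sharp constant $\tfrac{6}{5}$. Because the rectilinear unit ball is the octahedron $O^3$, distances add coordinatewise rather than through a Pythagorean sum, so the slack exploited in the Euclidean construction (where a near-unit vertical offset $\epsilon'$ made $\sqrt{\epsilon^2+{\epsilon'}^2}>1$ for arbitrarily small lateral offset $\epsilon$) is unavailable: here the distance between a vertex's lateral and vertical pseudo nodes is the \emph{sum} of their offsets, while the separation between two ``facing'' same-colour lateral nodes across an empty grid cell is only $2-2\epsilon$. These competing requirements, together with the need to keep the pseudo nodes of adjacent (oppositely coloured) vertices from landing at distance exactly $1$, must all be reconciled against the bound $\tfrac{6}{5}$. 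I expect this to force a careful, possibly colour-dependent, placement --- for instance giving the black and white lateral nodes small opposite displacements in $z$, so that the adjacency constraint is met through the third coordinate rather than in-plane --- after which the verification reduces to checking a finite list of distance types for an explicit choice of the four offset parameters, in the spirit of the numerical values fixed in Corollary~\ref{cor:deg4NPMBST}.
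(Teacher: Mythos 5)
Your outline follows the paper's proof exactly: the same pseudo-node construction on a maximum-degree-$3$ grid graph, the same forced-leaf/degree-budget argument for the equivalence with Hamiltonian path, and the same gap argument for inapproximability. The genuine gap is precisely the step you defer as the ``main obstacle'': you never exhibit the offsets, and the target you set yourself --- property (i), all pseudo edges \emph{strictly} shorter than $1$, together with property (ii), every other distance at least $\tfrac{6}{5}$ --- is provably unattainable with the four offset parameters $\epsilon,\delta,\epsilon',\delta'$ you propose to fix. Indeed, all of the following configurations are forced in some grid graphs: facing lateral nodes of two diagonally placed same-colour vertices are at rectilinear distance $2(1-\epsilon)$, resp.\ $2(1-\delta)$, forcing $\epsilon,\delta\le\tfrac{2}{5}$; a lateral node lies at distance $1+\delta$, resp.\ $1+\epsilon$, from the adjacent opposite-colour vertex, forcing $\epsilon,\delta\ge\tfrac{1}{5}$; and lateral nodes of two adjacent vertices displaced in the same perpendicular direction are at distance $1+|\epsilon-\delta|$, forcing $|\epsilon-\delta|\ge\tfrac{1}{5}$. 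Hence $\{\epsilon,\delta\}=\{\tfrac{1}{5},\tfrac{2}{5}\}$. But a vertex's lateral and vertical pseudo nodes are displaced orthogonally, so in the $L_1$ metric their distance is the \emph{sum} of the two offsets; the colour whose lateral offset equals $\tfrac{1}{5}$ therefore needs a vertical offset of at least $\tfrac{6}{5}-\tfrac{1}{5}=1$, contradicting (i).

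The idea your plan is missing is how the paper dissolves exactly this tension: it does not insist on a clean gap $(1,\tfrac{6}{5})$. It takes $\epsilon=\tfrac{2}{5}$, $\delta=\tfrac{1}{5}$, $\epsilon'=\tfrac{4}{5}$, and $\delta'=1-\Delta$ for an arbitrarily small $\Delta>0$, so that every distance other than grid edges and pseudo edges is at least $\tfrac{6}{5}-\Delta$. NP-completeness follows as in your argument, and the inapproximability constant is still exactly $\tfrac{6}{5}$ by a limiting step: given any ratio $c<\tfrac{6}{5}$, choose $\Delta<\tfrac{6}{5}-c$; then on a yes-instance the algorithm's output lies below $\tfrac{6}{5}-\Delta$, hence in the empty interval's complement, hence equals $1$, so the algorithm decides Hamiltonian path. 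Your fallback suggestion --- tilting the lateral nodes out of the plane in opposite $z$-directions for the two colours --- does not by itself repair the infeasibility in the $\delta=5$ case: with upper and lower offsets of common magnitude, both colours' binding same-side lateral-to-vertical constraints again pin both vertical offsets into an interval of length less than $\tfrac{1}{5}$ below $1$, contradicting the requirement $|\epsilon'-\delta'|\ge\tfrac{1}{5}$ coming from adjacent upper nodes. (One can in fact attain the exact gap $\tfrac{6}{5}$, but only with colour-mirrored, \emph{asymmetric} upper/lower offsets in addition to the tilt --- considerably more machinery than the paper's one-line $\Delta$-relaxation.) As it stands, your proposal establishes NP-hardness in outline but does not deliver the stated inapproximability bound.
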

\begin{proof}
Once again, we will just consider the $\delta = 5$ case since the arguments for the $\delta = 4$ case are identical, with the exception of not using lower pseudo nodes. Given a grid graph $G$, perform the same construction with pseudo nodes as in the proof of Theorem~\ref{thm:deg5NPMBST} to obtain the point set $P$. However, this time we will set $\epsilon = \frac{2}{5}$, $\delta = \frac{1}{5}$, $\epsilon' = \frac{4}{5}$, $\delta' = 1 - \Delta$, where $\Delta > 0$ is some arbitrarily small number, no bigger than say 0.01.\\~\\
Now note that the distance between any two black lateral pseudo nodes is at least
\begin{align*}
2(1- \epsilon)
= \frac{6}{5},
\end{align*} 
the distance between any white lateral pseudo node and any black lateral pseudo node is at least
\begin{align*}
(\epsilon-\delta)+1
= \frac{6}{5},
\end{align*}
and the distance between a white lateral pseudo node and an original black node is at least
\begin{align*}
1+\delta
= \frac{6}{5}.
\end{align*}
In addition, the distance between an upper (lower) black pseudo node and a lateral black pseudo node is at least
\begin{align*}
\epsilon'+\epsilon
= \frac{6}{5},
\end{align*}
the distance between an upper (lower) white pseudo node and a lateral white pseudo node is at least
\begin{align*}
\delta'+\delta
= \frac{6}{5}-\Delta,
\end{align*}
and the distance between an upper (lower) black pseudo node and an upper (lower) white pseudo node is at least
\begin{align*}
(\delta'-\epsilon')+1
= \frac{6}{5}-\Delta,
\end{align*}

Hence, we can use the arguments of Theorem~\ref{thm:deg5NPMBST} to conclude that the problem is NP-complete. Furthermore, suppose we had an algorithm $A$ that could approximate the problem within a constant factor $c$ in polynomial time, and that $A$ returned a value of $k$ when given the set of constructed points as input. If $k< \frac{6}{5}-\Delta$, then $k$ must equal $1 $ since no pair of points in $V \cup U$ have a distance that lies in the open interval $(1,\frac{6}{5}-\Delta)$. Hence, if $c < \frac{6}{5}$, then for a sufficiently small choice of $\Delta$, we could use our approximation algorithm to determine if $G$ has a Hamiltonian path in polynomial time. Thus if P $\neq$ NP, we must have $c \geq \frac{6}{5}$.

\end{proof}


Since grid graphs have a maximum degree of 6, we cannot extend this pseudo node approach to higher values of $\delta$ without significant alterations to the structure of the proof. It is possible that the approach of Francke and Hoffman \cite{francke2009euclidean} may be extended to three dimensions, however it is an open problem of how to convert their min-sum result into a bottleneck result.

\section{Approximation Algorithms for the $\delta$-E3MBST Problem}
In this section we describe ways of extending Khuller, Raghavachari, and Young's algorithm for the 3-EMST \cite{khuller1996low}, which we will refer to as the KRY algorithm, to algorithms for the $\delta$-E3MBST problem. 
The KRY algorithm, presented here as Algorithm~\ref{alg:khuller}, starts with a rooted MST $T$ for the input point set in the Euclidean plane, which it proceeds to process recursively, performing local edge swaps whenever the current root node has a degree exceeding $3$. The edge swaps themselves are applied in the following manner. If $v$ is the current root of $T$ with children $v_1,v_2, \dots, v_k$, then the edges $(v,v_2), \dots, (v,v_k)$ are replaced by a path through the vertices $v_1,v_2, \dots, v_k$. The algorithm is then applied recursively to each of the subtrees rooted at $v_1,v_2, \dots, v_k$ in turn, which we denote by  $T_{v_1},T_{v_2}, \dots, T_{v_k}$ respectively. See Figure~\ref{fig:ChanPaperKhullerFig} for an illustration. The edges from the root node $v$ to the children are replaced by a path through the children starting at $v$, however this path may not be unique. To adapt this algorithm to the MBST problem, we will assume that the paths chosen are those that minimise the length of the longest edge in the path.\\

\begin{algorithm}[htb]
\caption{: KRY} \label{alg:khuller}
\begin{tabbing}
  ....\=....\=....\=....\=................... \kill \\ [-2ex]
\textbf{Input:} A rooted tree $T$ over a point set $P$ with root $v$ and a partially built\\ solution $T^*$.\\
\\
\textbf{if} $v$ has at least one child\\
\> Let the children of $v$ be $v_1, \dots, v_k$, where $k$ is the number of children of $v$,\\
\> such that $\max_{i \in [1, k-1]} w(v_i, v_{i+1})$ is minimum if $k>1$.\\
\> Add the edge $(v,v_1)$ to $T^*$\\
\> Perform Algorithm~\ref{alg:khuller} with $T:=T_{v_1}$ as input.\\
\> \textbf{if} $k \geq 2$\\
\>\> \textbf{for} $i=1,\dots, k-1$\\
\>\>\> Add the edge $(v_i,v_{i+1})$ to $T^*$\\
\>\>\> Perform Algorithm~\ref{alg:khuller} with $T:=T_{v_{i+1}}$ as input.\\
\\
\textbf{Output:} $T^*$.
\end{tabbing}
\end{algorithm}

\begin{figure} 
        \centering
        \includegraphics[scale=0.7]{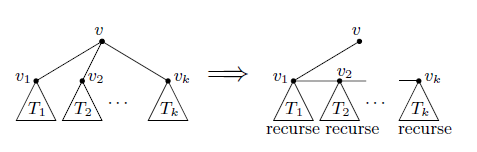}
        \caption{The illustration of the Khuller, Raghavachari and Young algorithm \cite{chan2003euclidean}.}
\label{fig:ChanPaperKhullerFig}        
\end{figure}

It has been shown in \cite{andersen2016minimum} that Algorithm~\ref{alg:khuller} is a 2-factor approximation algorithm for the 3-MBST problem in arbitrary metric spaces, hence we can conclude that it is a 2-factor approximation algorithm for the $3$-E3MBST Problem. One can see that this bound is tight when given an MST such as that of Figure~\ref{fig:KRYwcase2}, which is an MST with a maximum degree of 4 in which all edges are of equal length. In this MST, no matter which vertex is chosen as the initial root vertex, the algorithm will arrive at a local subtree with two children in which the root vertex $v$ and its children $v_1$ and $v_2$ are collinear, with $v$ between $v_1$ and $v_2$. In this situation, the algorithm will swap in the edge $(v_1,v_2)$ whose length is equal to the sum of the lengths of $(v,v_1)$ and $(v,v_2)$, i.e., twice the bottleneck length.\\

\begin{figure} 
        \centering
        \includegraphics[scale=0.4]{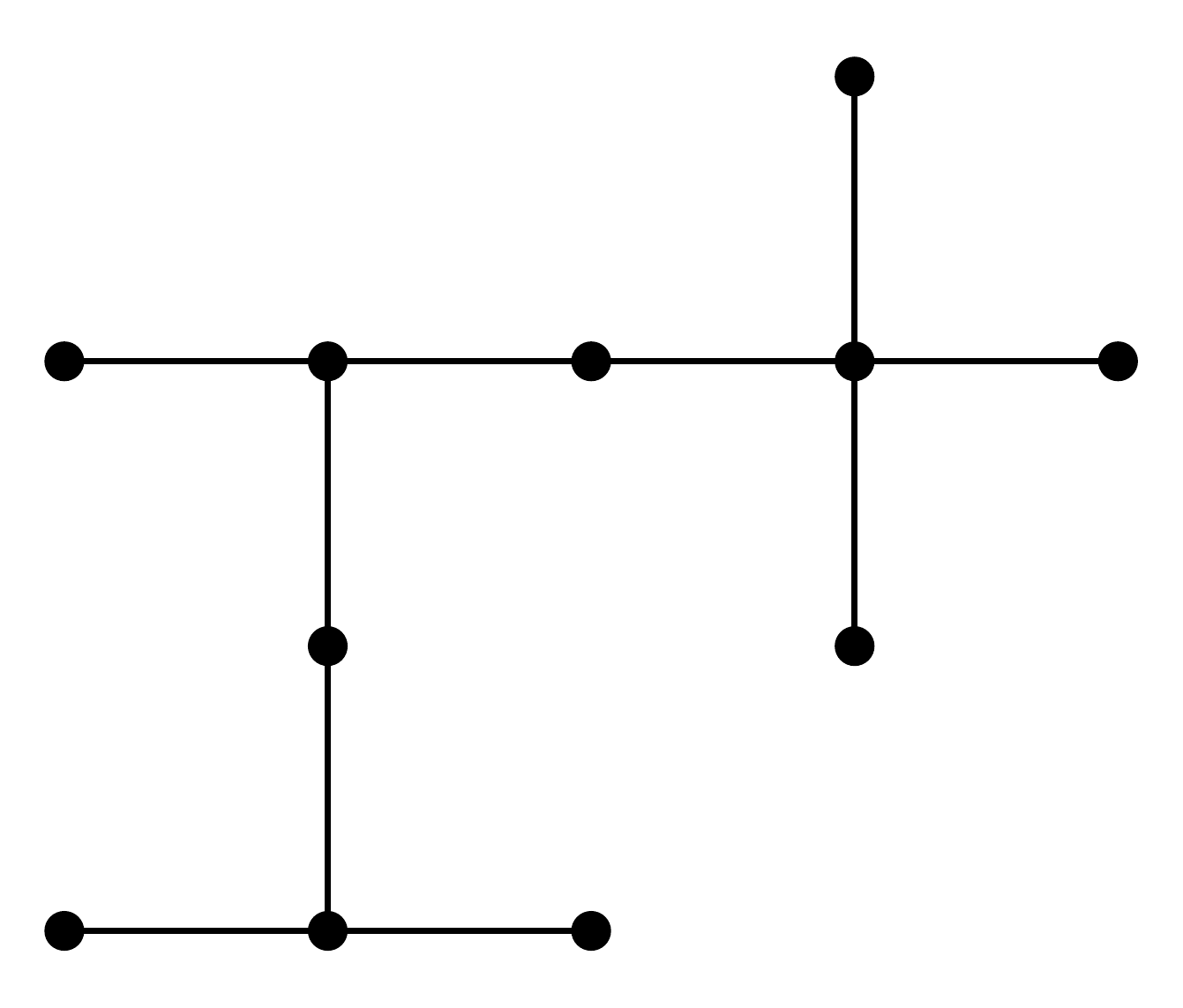}
        \caption{An example of an MST in which the bottleneck value of the KRY algorithm's output will be twice that of the MST no matter which vertex is chosen as the initial root vertex. Distances between adjacent nodes in the MST are assumed to be equal.}
\label{fig:KRYwcase2}        
\end{figure}
As it stands, this algorithm is only suited to the $\delta$-MBST problem when $\delta = 3$ since it cannot guarantee a resultant tree whose maximum degree is at most 2, and it will always produce a tree whose maximum degree is at most 3, even if our degree bound allows for larger degrees.\\~\\ 
There are multiple ways one could extend the approach of the KRY algorithm to higher values of $\delta$. The simplest of these ways is merely to change the minimum number of children a given root vertex must have before an edge swap must take place, whilst maintaining the current form of the edge swaps, i.e., creating a path through the children. Currently, a root vertex $v$ must have at least two children, $v_1$ and $v_2$, before 
an edge swap takes place (i.e., $(v_1,v_2)$ replaces either $(v,v_1)$ or $(v,v_2)$). In this way, the resultant tree is guaranteed to have a maximum degree of no more than 3. If we relax this condition so that the edge swaps are only performed when the current root node has a degree of $\delta - 1$ or higher, then the tree output by the algorithm will have a maximum degree of no more than $\delta$. We refer to such an algorithm as the \textit{naive} KRY algorithm for the $\delta$-MBST, or  $\delta$-NKRY for short. Here the term naive refers to the choice of always using a path through the children as the edge swaps, which results in the root vertex having a degree no more than 3 as per the original KRY algorithm, even though we are allowed to have vertices of higher degree. In this way, the usual KRY algorithm is the $3$-NKRY algorithm. We describe the $\delta$-NKRY algorithm in Algorithm~\ref{alg:Nkhuller}.\\

\begin{algorithm}[htb]
\caption{: $\delta$-NKRY} \label{alg:Nkhuller}
\begin{tabbing}
  ....\=....\=....\=....\=................... \kill \\ [-2ex]
\textbf{Input:} A rooted tree $T$ over a point set $P$ with root $v$ and a partially built\\ solution $T^*$.\\
\\
\textbf{if} $v$ has at least one child\\
\> Let the children of $v$ be $v_1, \dots, v_k$, where $k$ is the number of children of $v$,\\
\> such that $\max_{i \in [1, k-1]} w(v_i, v_{i+1})$ is minimum if $k>1$.\\
\> Add the edge $(v,v_1)$ to $T^*$\\
\> Perform Algorithm~\ref{alg:Nkhuller} with $T:=T_{v_1}$ as input.\\
\> \textbf{if} $k \geq \delta-1$\\
\>\> \textbf{for} $i=1,\dots, k-1$\\
\>\>\> Add the edge $(v_i,v_{i+1})$ to $T^*$\\
\>\>\> Perform Algorithm~\ref{alg:Nkhuller} with $T:=T_{v_{i+1}}$ as input.\\
\\
\textbf{Output:} $T^*$.
\end{tabbing}
\end{algorithm}

\subsection{Performance Ratio of the $\delta$-KRY Algorithm}

In order to analyse the worst case performance ratio of the $\delta$-NKRY algorithm for the $\delta$-E3MBST problem, we wish to characterise the instances in $E^3$ for which the algorithm yields the worst performance. Since the $\delta$-NKRY algorithm uses local edge swaps and does not swap out edges that were swapped in at an earlier stage of the algorithm, we need only characterise instances which yield the worst performance in a single iteration of the algorithm, where an iteration involves edge swaps incident to a given root node. In other words, we need only find a \textit{star} which gives the worst performance for the algorithm, where a star is an MST that contains only a single root vertex and its children.\\~\\ 
Let $P$ be a set of points in $\mathbb{R}^d$. Let $T$ be an MST of $P$ rooted at a vertex $v$, and let $v_1, \dots v_k$ be the children of $v$, where $k \in \mathbb{N}$. In this way, we can think of $v$ and its children as a star with $v$ as the centre vertex. Due to the geometry of the MST in $\mathbb{R}^d$, the arrangement of points in such a star can be seen to have a particular structure, which we will illustrate below.\\ ~\\
Consider the convex hull of the vertices of the star. It was shown in \cite{khuller1996low} that when $d = 2$, the convex hull of $\{v,v_1, \dots, v_k\}$ would contain each of the children $\{v_1, \dots, v_k\}$ on its boundary ($v$ may or may not be an interior point of the convex hull depending on the arrangement of the star). This fact is a consequence of the following result given in \cite{khuller1996low}.

\begin{lemma}\label{lem:krycor}
Let $AB$ and $BC$ be two edges incident to a point $B$ in an MST of a set of points in $\mathbb{R}^d$. Then
\begin{itemize}
\item $\angle ABC \geq 60^\circ$,
\item $\angle BAC, \angle BCA \leq 90^\circ$.
\end{itemize}
\end{lemma}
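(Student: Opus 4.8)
The plan is to reduce everything to the single triangle $ABC$ and to exploit one elementary but crucial consequence of minimality: in an MST the third side $AC$ can never be strictly shorter than either of the two tree edges $AB$ or $BC$. First I would establish this via the standard cut-exchange argument. Since $AB$ is an edge of the MST $T$, deleting it splits $T$ into two components $T_A \ni A$ and $T_B \ni B$; because $BC$ is a \emph{different} edge of $T$ it survives the deletion, so $C$ lies in $T_B$. Thus the pair $\{A,C\}$ straddles the cut, and replacing $AB$ by $AC$ again yields a spanning tree. Minimality forces $|AC| \ge |AB|$, for otherwise we would obtain a strictly lighter spanning tree. The symmetric argument, deleting $BC$ instead, gives $|AC| \ge |BC|$. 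Hence $AC$ is a longest side of triangle $ABC$.

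Next I would translate these two length inequalities into the two angle statements. Although the points live in $\mathbb{R}^d$, the three points $A,B,C$ span an affine plane, so $ABC$ is an ordinary (possibly degenerate) planar triangle whose interior angles are well defined. For the first bullet, since $AC$ is a longest side, the angle opposite it, namely $\angle ABC$, is a largest angle of the triangle; as the three interior angles sum to $180^\circ$, the largest of them is at least $60^\circ$, giving $\angle ABC \ge 60^\circ$. For the second bullet I would invoke the law of cosines at $A$ and at $C$. Writing
\[
\cos(\angle BAC)=\frac{|AB|^2+|AC|^2-|BC|^2}{2\,|AB|\,|AC|},
\]
the inequality $|AC|\ge|BC|$ makes the numerator nonnegative, so $\cos(\angle BAC)\ge 0$ and therefore $\angle BAC\le 90^\circ$; the identical computation with the roles of $A$ and $C$ interchanged, now using $|AC|\ge|AB|$, yields $\angle BCA\le 90^\circ$.

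The only real care I anticipate is in handling degeneracies and keeping the exchange step watertight. I would note explicitly that deleting $AB$ and deleting $BC$ each produce a genuine two-component partition with $A$ and $C$ on opposite sides (this relies on $AB$ and $BC$ being distinct tree edges and on $A\ne C$), and that the inequalities $|AC|\ge|AB|$ and $|AC|\ge|BC|$ hold for \emph{every} MST even when the MST is not unique, since equality merely corresponds to an alternative optimal exchange. These same inequalities conveniently rule out the collinear configurations in which $A$ or $C$ lies between the other two points, because those would force $|AC|<|BC|$ or $|AC|<|AB|$, leaving only the harmless case in which $B$ lies between $A$ and $C$, where all the stated bounds hold trivially. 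Beyond this bookkeeping the argument is routine, so I do not expect a genuine obstacle.
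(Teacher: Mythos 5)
Your proof is correct: the cut--exchange step establishing $|AC|\ge|AB|$ and $|AC|\ge|BC|$ is watertight, the law-of-cosines translation into the two angle bounds is right, and your handling of the collinear degeneracies is sound. Note that the paper itself offers no proof of this lemma --- it is quoted directly from Khuller, Raghavachari and Young --- and your argument is precisely the standard exchange argument by which that cited result is proved, so there is nothing to reconcile between the two.
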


We now use the previous lemma to extend the result about convex hulls to higher dimensions.

\begin{lemma}
 Let $T$ be an MST of a set of points in $\mathbb{R}^d$. Let $v$ be the root of $T$ and let $v_1, \dots v_k$ be the children of $v$. Then the convex hull of $\{v,v_1, \dots, v_k\}$ contains every point in $\{v_1, \dots, v_k\}$ on its boundary.
\end{lemma}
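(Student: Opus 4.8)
The plan is to show that every child $v_i$ lies on a \emph{supporting} hyperplane of the convex hull $\mathcal{H} = \mathrm{conv}\{v, v_1, \dots, v_k\}$; since any point of a convex set that lies on one of its supporting hyperplanes is necessarily a boundary point, this immediately yields the claim. The crucial observation is that, although the segment $v_i v_j$ between two children is in general \emph{not} an MST edge, the two edges $v v_i$ and $v v_j$ \emph{are} both MST edges sharing the endpoint $v$, so Lemma~\ref{lem:krycor} applies with $B = v$, $A = v_i$, $C = v_j$.

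Fix a child $v_i$ and set $\hat u = (v - v_i)/\lVert v - v_i \rVert$, which is well defined since $v \neq v_i$. For each $j \neq i$, I would apply Lemma~\ref{lem:krycor} to the incident MST edges $v_i v$ and $v v_j$ to obtain the base-angle bound $\angle v v_i v_j \leq 90^\circ$, i.e.\ the angle at $v_i$ in the triangle $v v_i v_j$ is non-obtuse. Rewriting this as an inner product, $\angle v v_i v_j \leq 90^\circ$ is equivalent to $\langle \hat u, v_j - v_i \rangle \geq 0$, so every child $v_j$ lies in the closed half-space $H = \{ x : \langle \hat u, x - v_i \rangle \geq 0 \}$. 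The root $v$ lies in $H$ as well, since $\langle \hat u, v - v_i \rangle = \lVert v - v_i \rVert > 0$.

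Thus all of $\{v, v_1, \dots, v_k\}$ lie in $H$ (the point $v_i$ trivially lying on its boundary), whence $\mathcal{H} \subseteq H$, while $v_i$ lies on the bounding hyperplane $\{ x : \langle \hat u, x - v_i \rangle = 0 \}$. This hyperplane therefore supports $\mathcal{H}$ and meets it at $v_i$, so $v_i \in \partial \mathcal{H}$, as required. The degenerate case $k = 1$ is handled automatically: the half-space condition on the $v_j$ is vacuous, while $v \in H$ still furnishes a supporting hyperplane through $v_1$.

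I expect the main obstacle to be conceptual rather than computational: one must resist the temptation to apply Lemma~\ref{lem:krycor} at the vertex $v_i$, where the relevant pair of segments $v_i v$ and $v_i v_j$ are not both MST edges, and instead apply it at $v$, reading off the base-angle bound $\angle v v_i v_j \leq 90^\circ$ rather than the apex bound $\angle v_i v v_j \geq 60^\circ$. Once the correct angle inequality is secured, the remainder is routine: the identification of a non-obtuse angle with membership in a half-space, and the standard fact that a point of a set lying on a supporting hyperplane is a boundary point of its convex hull.
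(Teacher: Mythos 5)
Your proof is correct, but it is not the paper's argument; it is a genuinely different (and cleaner) route to the same conclusion. The paper proceeds by contradiction: it assumes some child $x$ lies in the interior of the hull, scales $x$ radially away from $v$ to a boundary point $x'$, rules out $x' \in \{v_1,\dots,v_k\}$ using the $60^\circ$ apex bound of Lemma~\ref{lem:krycor}, picks two points $v_i, v_j$ on a facet containing $x'$, projects $x$ and $x'$ onto the plane through $v, v_i, v_j$, and argues that the planar polygon $(v,v_i,y,v_j)$ is non-convex at the projection $y$ of $x$, which forces one of the angles $\angle(v,x,v_i)$, $\angle(v,x,v_j)$ past $90^\circ$, contradicting the same base-angle bound you invoke. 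Your argument is direct instead: applying Lemma~\ref{lem:krycor} at $B=v$ (correctly so, since only $(v,v_i)$ and $(v,v_j)$ are MST edges) gives $\angle v\, v_i\, v_j \leq 90^\circ$, i.e.\ $\langle v - v_i, v_j - v_i\rangle \geq 0$ for every $j$, so the hyperplane through $v_i$ orthogonal to $v - v_i$ supports the hull and $v_i$ is a boundary point. What your approach buys: it is dimension-free with no projection step, it needs no case split on $k$ and no facet-selection argument, it uses only the second bullet of Lemma~\ref{lem:krycor}, and it yields an explicit supporting hyperplane at each child. It also sidesteps a slight looseness in the paper, which concludes an angle ``greater than or equal to $90^\circ$'' and calls that a contradiction of an ``at most $90^\circ$'' bound, when the strict inequality implicit in its reflex-angle argument is what is actually needed. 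What the paper's approach buys is chiefly geometric intuition --- a picture (Figure~\ref{fig:convex1}) of how an interior child would create a reflex angle --- rather than any gain in generality or brevity.
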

\begin{proof}
The statement is trivially true when $k \leq 2$, so we will assume that $k \geq 3$. In order to treat the points $v,v_1, \dots, v_k$ as vectors, we will also assume that the points have been translated so that $v$ lies at the origin in $\mathbb{R}^d$. Let $C$ be the convex hull of $\{v,v_1, \dots, v_k\}$. Suppose there exists a point $x$ in $\{v_1, \dots, v_k\}$ that lies in the interior of $C$. Let $x'$ be the unique point on the boundary of $C$ that is obtained by scaling $x$ in the positive direction. Clearly $x' \notin \{v_1, \dots, v_k\}$ by Lemma~\ref{lem:krycor} considering the edges $(v,x)$ and $(v,x')$. Let $v_i, v_j$ be two points in $\{v,v_1, \dots, v_k\}$ such that $v_i, v_j$ and $x'$ all lie on the same facet of $C$. Let $\theta_1 =\angle (v,x,v_i)$ and  $\theta_2 =\angle (v,x,v_j)$.\\~\\
Let $P$ be the unique plane which passes through $v_i, v_j$ and $v$. Let $y$ be the projection of $x$ onto $P$, an let $y'$ be the projection of $x'$ onto $P$. Hence $v,v_i,v_j,y,y'$ are coplanar, and $v_i,y',v_j$ are collinear. The polygon on $P$ given by $(v,v_i,y',v_j)$ will be a triangle, and the polygon on $P$ defined $(v,v_i,y,v_j)$ will be non-convex (see Figure~\ref{fig:convex1}). In particular, the interior angle of $y$ in the non-convex polygon will be greater than $180^\circ$. Since this angle is also the sum of the angles given by $\omega_1 = \angle (v,y,v_i)$ and  $\omega_2 = \angle (v,y,v_j)$, this implies that one of the angles $\omega_1, \omega_2$ is greater than or equal to $90^\circ$. However, we have that $\theta_1 \geq \omega_1$ and  $\theta_2 \geq \omega_2$ since $y$ is a projection of $x$ onto the plane defined by $v_i$ and $v_j$. Hence one of the angles $\theta_1, \theta_2$ is greater than or equal to $90^\circ$, which contradicts Lemma~\ref{lem:krycor}.
\end{proof}


\begin{figure}[ht!]
\centering
 \includegraphics[scale =1]{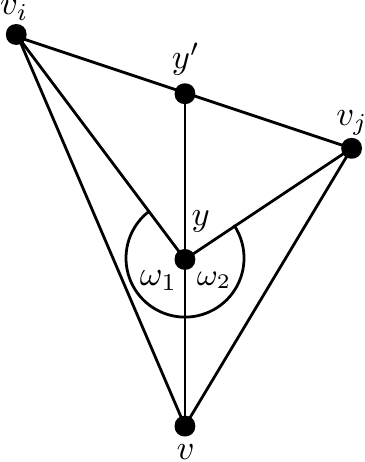}
 \caption{An example of the points and angles described by Lemma~\ref{lem:krycor}.}
 \label{fig:convex1} 
\end{figure}

To simplify the analyses of our algorithms, we will only consider the cases in which the children of a given root vertex $v$ are all equidistant from $v$. We are able to restrict our cases to this specification since the ratios obtained under this assumption are an upper bound to the ratios that would be obtained otherwise. This is due to the following lemma.

\begin{lemma}\label{lem:sidez}
Let $x$,$y$ and $z$ be the lengths of edges of a triangle with $x \leq y$, and the angle between the sides of $x$ and $y$ is $\theta \geq 60^\circ$. If $\theta$ and $y$ are fixed, and if the rule $x \leq y$ must be maintained, then $z$ is maximised when $x = y$.
\end{lemma}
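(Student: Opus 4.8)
The plan is to reduce the whole statement to a one-variable optimization via the law of cosines. Taking $z$ to be the side opposite the angle $\theta$ formed by the sides of lengths $x$ and $y$, the law of cosines gives
\begin{equation*}
z^2 = x^2 + y^2 - 2xy\cos\theta.
\end{equation*}
Since $y$ and $\theta$ are held fixed and $z \ge 0$, maximizing $z$ is equivalent to maximizing the right-hand side as a function of $x$ over the feasible interval $0 < x \le y$. I would name this function $f(x) = x^2 - 2(y\cos\theta)\,x + y^2$ and work with it for the rest of the argument.

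Next I would observe that $f$ is a quadratic in $x$ with positive leading coefficient, so its graph is an upward-opening parabola whose vertex at $x = y\cos\theta$ is a \emph{minimum}. Hence $f$ cannot attain its maximum over the closed interval $[0,y]$ at an interior point, and the maximum must occur at one of the two endpoints $x = 0$ or $x = y$. This reduces the problem to a direct comparison of the two endpoint values, which I expect to be entirely routine.

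Finally I would compute $f(0) = y^2$ and $f(y) = 2y^2(1-\cos\theta)$ and form the difference $f(y) - f(0) = y^2(1 - 2\cos\theta)$. This is where the hypothesis enters: $\theta \ge 60^\circ$ forces $\cos\theta \le \tfrac12$, so $1 - 2\cos\theta \ge 0$ and therefore $f(y) \ge f(0)$. Consequently the maximum of $f$ on $(0,y]$ is attained at $x = y$, and since $z$ is an increasing function of $f$, the same holds for $z$, which is exactly the claim.

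The point that requires care — more a subtlety than a genuine obstacle — is the interplay between the constraint $x \le y$ and the hypothesis $\theta \ge 60^\circ$. Because the parabola opens upward, the maximizer is automatically pushed to an endpoint, so the only real question is which endpoint wins; the $60^\circ$ threshold is precisely the value of $\theta$ at which $x = y$ overtakes the degenerate limit $x \to 0$. I would also remark that this hypothesis is not an artificial restriction: by Lemma~\ref{lem:krycor} the angle between any two edges incident to a common vertex in an MST is at least $60^\circ$, so the condition is met automatically in the intended application.
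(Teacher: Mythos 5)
Your proof is correct and follows essentially the same route as the paper's: the law of cosines, convexity of the quadratic $f(x)=x^2+y^2-2xy\cos\theta$ forcing the maximum to an endpoint of $[0,y]$, and the endpoint comparison using $\cos\theta\leq\tfrac{1}{2}$ for $\theta\geq 60^\circ$. The only cosmetic difference is that you compare the endpoints via the sign of $f(y)-f(0)$ directly, whereas the paper phrases the same comparison as a short contradiction argument.
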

\begin{proof}
Using the cosine rule,
\[z^2 = x^2 +y^2 - 2xy \cos(\theta).\]
Using calculus, we can see that the function $f(x) = z^2$ is convex so the maximum value of the function will occur at either $x = 0$ or $x =y$, assuming $x \in [0,y]$. We have that $f(0) = y^2$ and $f(y) = 2y^2(1-\cos(\theta))$. If we have that $f(0) > f(y)$, then
\[y^2 > 2y^2(1-\cos(\theta)),\]
\[ \Rightarrow 1 > 2-2\cos(\theta),\]
\[ \Rightarrow \cos(\theta) > \frac{1}{2},\]
which is a contradiction since $60^\circ \leq \theta \leq 180^\circ$. Hence $f(x)$ is maximised when $x = y$.
\end{proof}

Let $V$ be a set of vertices in a metric space and let $v \in V$. We define the \textit{optimal bottleneck TSP-path} through $V$ starting at $v$ as the path $P$ such that $P$ visits all vertices of $V$, $v$ is an endpoint of $P$, and the length of the longest edge of $P$ is minimum, and we let $b_v(V)$ denote the length of the longest edge in $P$. Also for simplicity, we will say that a star is \textit{valid} if it does not contain a pair of children that form an angle less than $60^\circ$ with the root vertex. In order to justify our assumption that the children are equidistant from the root vertex $v$ in the worst case, we apply Lemma~\ref{lem:sidez} in the following theorem.

\begin{theorem} \label{thm:equidistant}
Let $S$ be a valid star with root vertex $v$ and children $C = \{v_1, \dots, v_k\}$. Let $l_{\max}$ be the maximum distance $d(v,v_i)$ for $i \in \{1, \dots, k\}$ and let $P$ be the optimal bottleneck TSP-path through $\{v,v_1, \dots, v_k\}$ that starts at $v$. Then there exists a valid star $S'$ with root vertex $v$ and children $C' = \{v'_1, \dots, v'_k \}$ such that $d(v,v'_i) = l_{\max}$ for $i \in \{1, \dots, k\}$, and $b_v(C') \geq b_v(C)$.

\end{theorem}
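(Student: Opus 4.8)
The plan is to construct $S'$ by radially scaling each child outward, and then to compare bottleneck TSP-paths in the \emph{right} direction: by pushing an optimal path for the new configuration back onto the old one, rather than the reverse. Concretely, for each $i$ let $v'_i$ be the point on the ray from $v$ through $v_i$ at distance exactly $l_{\max}$ from $v$, so that $d(v,v'_i)=l_{\max}$ and $v_i$ lies on the segment $v v'_i$. Radial scaling fixes every angle at $v$, so for any pair $v'_i,v'_j$ the angle $\angle(v'_i,v,v'_j)$ equals $\angle(v_i,v,v_j)\ge 60^\circ$; hence $S'$ is again valid, and each triangle $v,v_i,v_j$ retains an apex angle of at least $60^\circ$, which is what Lemma~\ref{lem:sidez} will need. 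The key strategic point is the direction of the argument: stretching the optimal path $P$ of $C$ out to $C'$ would only bound $b_v(C')$ below by the bottleneck of a possibly suboptimal path on $C'$, which does not chain to the desired inequality. Instead I would start from an optimal bottleneck TSP-path $P'$ through $\{v\}\cup C'$ and contract it back to $C$.

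Let $\tilde P$ be the path through $\{v\}\cup C$ with the same combinatorial structure as $P'$ (replace each $v'_i$ by $v_i$); it still starts at $v$ and visits every vertex, so it is a feasible bottleneck TSP-path for $C$. I would then show edge-by-edge that $\tilde P$ is nowhere longer than $P'$. An edge incident to the root contracts trivially, since $d(v,v_i)\le l_{\max}=d(v,v'_i)$. For an edge between two children $v_i,v_j$, consider the triangle $v,v_i,v_j$ with apex angle $\theta=\angle(v_i,v,v_j)\ge 60^\circ$, which is unchanged by scaling, and assume without loss of generality $d(v,v_i)\le d(v,v_j)$. Applying Lemma~\ref{lem:sidez} with $x=d(v,v_i)$, $y=d(v,v_j)$ and fixed $\theta$ shows that raising the shorter radius to $d(v,v_j)$ does not decrease the opposite side $d(v_i,v_j)$; a subsequent uniform scaling of the resulting isosceles triangle from common radius $d(v,v_j)$ up to $l_{\max}$ multiplies its base by the factor $l_{\max}/d(v,v_j)\ge 1$, again not decreasing it. Composing these two steps yields $d(v_i,v_j)\le d(v'_i,v'_j)$, so this edge contracts as well.

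Since every edge of $\tilde P$ is at most the corresponding edge of $P'$, the longest edge of $\tilde P$ is at most the longest edge of $P'$, that is, $\tilde P$ is a feasible TSP-path through $\{v\}\cup C$ starting at $v$ whose bottleneck is at most $b_v(C')$. As $P$ is a minimum-bottleneck such path, $b_v(C)$ is no larger than the bottleneck of $\tilde P$, hence $b_v(C)\le b_v(C')$, as required. I expect the main obstacle to be the child-child edge comparison: Lemma~\ref{lem:sidez} only permits moving a single endpoint while holding the other radius fixed, so the comparison must be decomposed into an equalise-the-radii step (the lemma) followed by a scale-both-to-$l_{\max}$ step (a similarity), and one must verify that the apex angle remains at least $60^\circ$ throughout so the lemma's hypothesis stays valid, which holds precisely because radial motion never alters angles at $v$.
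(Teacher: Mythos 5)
Your proof is correct, and its skeleton matches the paper's: both construct $S'$ by scaling the children radially out to distance $l_{\max}$, both rest on Lemma~\ref{lem:sidez}, and both close with the same pull-back-and-optimality step (the paper phrases it as a contradiction: the optimal path for the scaled configuration cannot have a smaller bottleneck than $P$ without contradicting the optimality of $P$). Where you genuinely differ is in the decomposition of the scaling. The paper scales iteratively: at each step only the children at the second-largest radius are pushed out to $l_{\max}$, and the affected pairs are handled by cases --- a moved child against a child already at $l_{\max}$ uses Lemma~\ref{lem:sidez}, a moved child against a child at strictly smaller radius uses a cosine-rule monotonicity argument, and pairs within the moved group scale uniformly --- after which the process repeats until every child sits at $l_{\max}$. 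You instead scale all children in one shot, which forces you to compare a pair $v_i,v_j$ whose two distinct radii both change; your two-step decomposition (equalize the radii via Lemma~\ref{lem:sidez}, then apply the homothety with ratio $l_{\max}/d(v,v_j)\geq 1$) handles this cleanly and eliminates both the iteration and the separate cosine-rule case. The one-shot argument is shorter and makes the edge-by-edge comparison with the pulled-back path $\tilde P$ fully explicit; the paper's iterative version avoids ever having to reason about pairs whose endpoints move at different rates. Both arguments hinge on the observation --- which you state explicitly and the paper leaves implicit --- that radial scaling preserves all angles at $v$, so validity and the hypothesis $\theta \geq 60^{\circ}$ of Lemma~\ref{lem:sidez} survive the deformation.
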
 

\begin{proof}
Let $v_1, \dots, v_k$ be the children of $v$, where $k \geq 2$, and let $P$ be the optimal bottleneck TSP-path starting as defined in the theorem. If not all children are equidistant from $v$, then there exists one or more children of distance $l_{\max}$ from $v$, and there exists one or more children whose distances from $v$ are strictly less than $l_{\max}$. Let $V$ be the set of children whose radial distances from $v$ are $l_{\max}$. Let $l$ be maximum radial distance of any child such that $l < l_{\max}$, i.e., $l$ is the second largest radial distance, and let $U$ be the set of children whose radial distances are $l$. Let $W$ be the set of remaining children that are in neither $V$ nor $U$. Simultaneously scale all radial distances of all the vertices in $U$ so that their radial distances become $l_{\max}$ and let $P'$ be the optimal bottleneck TSP-path for this modified set of children. We claim that the longest edge in $P'$ is no shorter than the longest edge in $P$. To see this, first note that for any pair of vertices $(w,u)$ where $w \in W$ and $u \in U$, the radial distances of the vertices in $U$ were strictly greater than those in $W$ before the scaling, so by the cosine rule, $d(w,u)$ will increase after the scaling. Also, for any pair of vertices $(v,u)$ where $v \in V$ and $u \in U$, Lemma~\ref{lem:sidez} implies that $d(v,u)$ will not decrease after the scaling. Since the distances between all other pairs of vertices will either remain the same or increase, we can conclude that it is impossible for $P'$ to have a bottleneck length strictly less than that of $P$ without contradicting the optimality of $P$. By repeating this process, we can conclude that if we scale the radial distances of all children so that they were of distance $l_{\max}$ from $v$, then the optimal bottleneck TSP-path $\hat{P}$ through $v$ and all the children in this scaled set, where $\hat{P}$ starts at $v$, would have a bottleneck value that was greater than or equal to that of $P$.
\end{proof}



Therorem~\ref{thm:equidistant} can be used to restrict the types of instances that need to be considered when attempting to find an instance for which the algorithm yields a large performance ratio. Given a value for $\delta$, in order to obtain a lower bound for the approximation factor for the $\delta$-NKRY algorithm in three dimensions, we would need only find a way of arranging $\delta - 1$ points on the surface of a unit sphere, where no pair of points produces an angle less than $60^\circ$ with the centre of the sphere, such that the value of the solution to the Euclidean bottleneck TSP-path problem for the points is maximised. The bottleneck value of this path will give the lower bound for the worst case performance ratio. In this way, we will attempt to produce stars that yield the worst possible performance ratio for the algorithm, where the center of the sphere is the root vertex and the children are the points on the surface of the sphere. For the remainder of this section, we aim to find these worst case arrangements of $k$ points on the sphere for the various values of $k \in \{2, \dots, 10 \}$ as these would provide lower bounds for the worst case performance of the $k+1$-NKRY algorithm.

\subsection{Worst Case Ratio for the $3$-NKRY Algorithm (2 Children)}
As mentioned previously, it is known that the $3$-NKRY algorithm is a 2-factor approximation algorithm in any metric space. The arrangement of points on the surface of a unit sphere that produces this result is the arrangement in which the two children are polar opposite, as seen in Figure~\ref{fig:points2}. Thus, given a root vertex at the centre of a unit sphere with two children
at opposite poles of the sphere, the KRY algorithm will replace one of the unit radii between the root and a child with the diameter length edge between the two children, hence we will obtain a performance ratio of 2 for this star. Note that it is not possible to position two points on the surface of a unit sphere so that the distance between the points is larger than 2.
\begin{figure}[ht!]
\centering
 \includegraphics[scale =0.7]{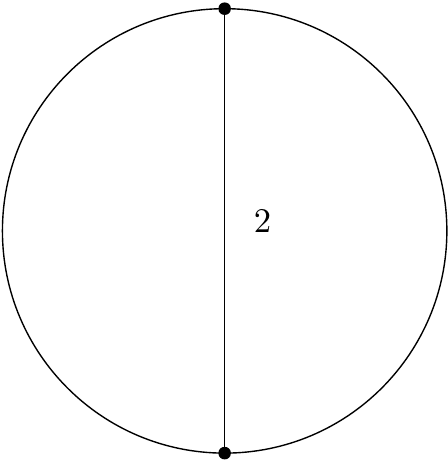}
 \caption{The worst case arrangement of 2 points on a unit sphere.}
 \label{fig:points2} 
\end{figure}

\subsection{Worst Case Ratio for the $4$-NKRY Algorithm (3 Children)}
To analyse this algorithm, we aim to find an arrangement of 3 points on the surface of the unit sphere such that minimum bottleneck value of all paths through these 3 vertices is maximised, where no pair of vertices produce an angle less than $60^\circ$ with the centre (i.e., every vertex has a distance of least 1 from every other vertex). We find such an arrangement in the following theorem.

\begin{theorem} \label{thm:3childarrangement}
The $4$-NKRY algorithm has an approximation factor of at least 1.931 for the $4$-E3MBST problem.
\end{theorem}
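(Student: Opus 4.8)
The plan is to exhibit a single valid star on which the $4$-NKRY algorithm is forced into a costly edge swap, and to compute the resulting ratio; by the reduction established before the theorem, it suffices to analyse one such star. By Theorem~\ref{thm:equidistant} I may assume the three children lie on the unit sphere centred at the root $v$, so the star itself is an MST of maximum degree $3 \le 4$, i.e.\ an optimal $4$-E3MBST with bottleneck value $1$. Since here $k = 3 = \delta - 1$, however, the algorithm must replace two of the unit radii by a bottleneck-optimal Hamiltonian path through the three children. A path on three points uses exactly two of the three pairwise edges, so the optimal such path discards the \emph{longest} pairwise distance and its bottleneck equals the \emph{median} (second-largest) of the three pairwise distances; the retained edge $(v,v_1)$ has length $1$, which does not exceed this median. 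Hence the algorithm's output bottleneck, and therefore its performance ratio on the star, equals the median pairwise distance. The problem reduces to maximising the median pairwise distance of three points on the unit sphere subject to every pairwise distance being at least $1$, which by Lemma~\ref{lem:krycor} and the law of cosines is the validity condition that every central angle be at least $60^\circ$.

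Next I would pass from chords to central angles. Writing the pairwise central angles as $\theta_1 \le \theta_2 \le \theta_3$, each chord is $2\sin(\theta_i/2)$, increasing in $\theta_i$ on $[0^\circ,180^\circ]$, so maximising the median chord is the same as maximising the median angle $\theta_2$. The constraints are $\theta_1 \ge 60^\circ$ together with the realisability conditions for three unit vectors, which I would obtain from positive semidefiniteness of their Gram matrix: in $[0^\circ,180^\circ]^3$ this amounts to the triangle inequalities $\theta_i \le \theta_j + \theta_k$ and the bound $\theta_1 + \theta_2 + \theta_3 \le 360^\circ$. Using $\theta_3 \ge \theta_2$ and $\theta_1 \ge 60^\circ$, the sum bound gives $360^\circ \ge \theta_2 + (\theta_1 + \theta_3) \ge 2\theta_2 + 60^\circ$, so $\theta_2 \le 150^\circ$, with equality exactly when $\theta_1 = 60^\circ$ and $\theta_2 = \theta_3 = 150^\circ$ (the sum constraint is the binding one at this point, and the three points then lie on a great circle).

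Finally I would record the extremal configuration explicitly: three points on a great circle at angular positions $0^\circ, 60^\circ, 210^\circ$, giving pairwise central angles $60^\circ, 150^\circ, 150^\circ$ and pairwise distances $1,\ 2\sin 75^\circ,\ 2\sin 75^\circ$, where $2\sin 75^\circ = \tfrac{\sqrt 6 + \sqrt 2}{2} \approx 1.9319$. Every pairwise distance is at least $1$, so the star is valid and the star is an MST of bottleneck $1$; every Hamiltonian path through the three children has longest edge $2\sin 75^\circ$, so the algorithm's output has bottleneck $2\sin 75^\circ > 1.931$ against an optimum of $1$, yielding the claimed lower bound of $1.931$.

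The step I expect to require the most care is the realisability characterisation in the middle paragraph: I must justify that three unit vectors with prescribed pairwise angles exist iff their Gram matrix is positive semidefinite, and translate that determinant condition into the triangle inequalities and the sum bound $\theta_1+\theta_2+\theta_3 \le 360^\circ$, then verify that it is precisely the sum bound (not a triangle inequality) that binds at the optimum. The remaining work is routine trigonometry. Note that strictly only the explicit configuration of the last paragraph is needed for the stated ``at least'' bound; the optimisation argument is included to confirm that this arrangement is in fact the worst case among valid stars.
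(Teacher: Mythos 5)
Your proof is correct, and it lands on exactly the configuration the paper itself uses: the three children on a great circle with central angles $60^\circ,150^\circ,150^\circ$, child--child distances $1,\,2\sin 75^\circ,\,2\sin 75^\circ$, output bottleneck $2\sin 75^\circ=\tfrac{\sqrt{2}+\sqrt{6}}{2}\approx 1.9319$ against an optimum of $1$ (the paper writes the same value as $\sin 75^\circ/\sin 30^\circ$). Where you genuinely differ is in how extremality of this arrangement is justified. The paper argues variationally and somewhat informally: it assumes a maximising arrangement, uses a rotation argument to force $v_2,v_3$ to be equidistant from the pole $v_1$, and then pushes all three points onto a common great circle with $d(v_2,v_3)=1$. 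You instead translate everything into central angles and invoke the realisability conditions for three unit vectors --- positive semidefiniteness of the Gram matrix, equivalently the spherical triangle inequalities together with $\theta_1+\theta_2+\theta_3\le 360^\circ$ --- after which the bound $\theta_2\le 150^\circ$ on the median angle falls out of the angle-sum constraint in one line, with the equality case identified precisely. Your route buys rigour (the paper's perturbation claims are not fully justified) and pinpoints which constraint binds, at the cost of having to supply the classical equivalence between Gram positive semidefiniteness and the spherical triangle conditions, which you correctly flag as the delicate step. You are also right that the stated theorem is only a lower bound, so the explicit configuration plus the direct check (any Hamiltonian path on three children must use one of the two long edges, and the retained radius has length $1$) already suffices; the optimisation is needed only to assert that this star is the worst valid star, which is what the paper's proof is really after, since that exactness is what would combine with the paper's monotonicity conjecture to yield a matching upper bound on the performance ratio.
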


\begin{proof}
Suppose we had an arrangement of 3 points as described above. Let $P$ be a path through the vertices that gives the minimum bottleneck value. Let $v_1$ be a point on this sphere that is incident to a longest edge in $P$. We will assume without loss of generality that $v_1$ is on the north pole of the sphere. Let $v_2$ and $v_3$ be the remaining vertices. Then the bottleneck edge of $P$ is the smaller of the two edges $(v_1,v_2)$ and $(v_1,v_3)$. Observe that $v_2$ and $v_3$ must be equidistant from $v_1$, since if this were not the case, we could rotate $v_2$ and $v_3$ around the surface of the sphere, in the direction of the arc between $v_2$ and $v_3$, so that the smaller of $d(v_1,v_2)$ and $d(v_1,v_3)$ is increased whilst keeping the length of $d(v_2,v_3)$ fixed, which contradicts the maximality of the arrangement. Hence we conclude that $d(v_1,v_2) = d(v_1,v_3)$. These lengths are maximised when $v_1, v_2$ and $v_3$ lie on a common circle of radius 1 which will contain the south pole. Hence $d(v_1,v_2)$ and $d(v_1,v_3)$ are maximised when $v_2$ and $v_3$ are as close together as possible on the southern part the circle, i.e., $d(v_2,v_3) = 1$. Figure~\ref{fig:points3} shows the worst case arrangement for which one can use trigonometry to establish that the bottleneck length of the path is $\frac{\sin 75^\circ}{\sin 30^\circ} = \frac{\sqrt{2} + \sqrt{6}}{2} \approx 1.931$. Hence we conclude that $4$-NKRY algorithm has an approximation factor of at least 1.931 for the $4$-E3MBST problem.

\begin{figure}[ht!]
\centering
 \includegraphics[scale =0.7]{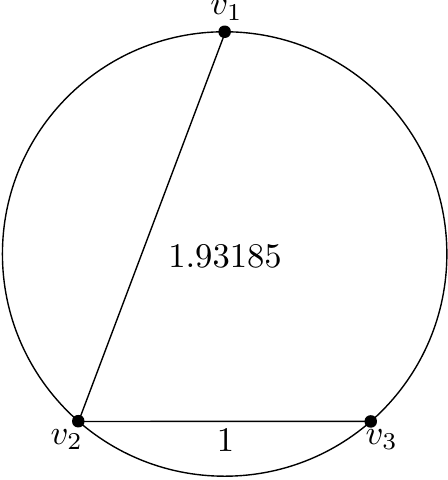}
 \caption{The worst case arrangement of 3 points on a unit sphere.}
 \label{fig:points3} 
\end{figure}

\end{proof}

\subsection{Worst Case Ratio for the $5$-NKRY Algorithm (4 Children)}
Analysing the worst case arrangement of 4 points is slightly more difficult than with 3 points as there are more cases to consider. The details are presented in the following theorem.

\begin{theorem} \label{thm:4childarrangement}
The $5$-NKRY algorithm has an approximation factor of at least 1.906 for the $5$-E3MBST problem.
\end{theorem}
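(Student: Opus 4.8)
The plan is to mirror the argument of Theorem~\ref{thm:3childarrangement}, now placing four children on the unit sphere. By the discussion following Theorem~\ref{thm:equidistant}, it suffices to exhibit four points on the surface of the unit sphere, no two of which subtend an angle smaller than $60^\circ$ at the centre (equivalently, pairwise Euclidean distance at least $1$), so as to maximise the bottleneck value $b_v$ of the optimal Hamiltonian path through the four children; since every child sits at distance $1$ from the root, the $5$-NKRY algorithm retains one spoke of length $1$ and rebuilds the children into this optimal path, so the output bottleneck equals $b_v$ while the MST bottleneck is $1$, and $b_v$ is exactly the performance ratio we seek.

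First I would exhibit the candidate configuration achieving $1.906$: place one child $v_1$ at the north pole, and place the remaining three children $v_2,v_3,v_4$ as an equilateral triangle of mutual chord distance $1$ centred on the south pole, i.e.\ equally spaced in longitude at a common angular distance $\rho$ from the south pole. Applying the spherical law of cosines to two cluster points (longitude gap $120^\circ$, required angular separation $60^\circ$) gives $\cos^2\rho-\tfrac12\sin^2\rho=\tfrac12$, hence $\tfrac32\cos^2\rho=1$ and $\cos\rho=\sqrt{2/3}$. Each cluster point then lies at colatitude $180^\circ-\rho$ from $v_1$, so its distance from $v_1$ is
\[
2\cos(\rho/2)=2\sqrt{\tfrac{1+\sqrt{2/3}}{2}}\approx 1.906 .
\]
The star is valid: the three cluster edges equal $1$ and the three spokes from $v_1$ equal $1.906>1$, so no pair of children subtends less than $60^\circ$ at the root, consistent with Lemma~\ref{lem:krycor}.

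Next I would pin down the bottleneck of the optimal path for this star. Because the three edges incident to $v_1$ all have length $\approx 1.906$ while the three cluster edges have length $1$, every Hamiltonian path through $\{v_1,v_2,v_3,v_4\}$ must use at least one edge incident to $v_1$ and so has bottleneck at least $1.906$; conversely the path $v_1$–$v_2$–$v_3$–$v_4$, in which $v_1$ is an endpoint, attains exactly $1.906$. Hence $b_v(\{v_1,v_2,v_3,v_4\})=1.906$, which yields the claimed lower bound of $1.906$ on the approximation factor.

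The main obstacle is not this construction but the verification — signalled by the ``more cases to consider'' remark — that no valid four-point arrangement forces a larger bottleneck, which is what makes the four-child case genuinely harder than the three-child case. Here I would argue as in Theorem~\ref{thm:3childarrangement} that in an extremal arrangement the endpoint $v_1$ incident to the bottleneck edge may be taken at a pole and the other children pushed maximally towards the opposite pole subject to the $60^\circ$ constraint (invoking Theorem~\ref{thm:equidistant} to keep all spokes equal), and then dispose of the competing combinatorial type in which the four points split into two pairs at mutual distance $1$: a short computation of the four inter-pair distances shows that the single forced crossing edge is shorter than $1.906$, so the isolated-point-versus-cluster configuration above is the one that governs the bound.
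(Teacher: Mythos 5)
Your proposal is correct and takes essentially the same approach as the paper: your configuration (one child at the north pole, the other three forming a unit-side equilateral triangle centred on the south pole) is precisely the paper's worst-case arrangement, with the same bottleneck value $2\cos(\rho/2)=\sqrt{2+2\sqrt{2/3}}\approx 1.906$. The only difference is one of emphasis: you correctly note that exhibiting this valid star already yields the stated lower bound, whereas the paper spends most of its proof on the (rather informal) case analysis showing the competing arrangements --- the co-circular one and the two-pairs one, each giving $\sqrt{3}$ --- are no worse, which is needed only to argue that this configuration is the exact single-star worst case.
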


\begin{proof}
As in the proof of Theorem~\ref{thm:3childarrangement}, we assume we have a set of points that are in an arrangement such that the bottleneck value of the minimum bottleneck path $P$ through the points is maximised. Let $v_1$ be a vertex on the north pole incident to the bottleneck edge, and let $v_2,v_3,v_4$ be the remaining vertices. First, consider the case where $v_1$ is an endpoint of $P$. Intuition would suggest that $v_2,v_3$ and $v_4$ are on the surface of the southern hemisphere as far away from $v_1$ as possible. Consider the plane containing $v_2,v_3$ and $v_4$. If $v_1$ is also on this plane, then $v_1,v_2,v_3$ and $v_4$ are co-circular and we can use similar arguments to those of the previous section to justify the possible arrangement of points as in Figure~\ref{fig:points4a} with a minimum bottleneck value of $\sqrt{3}$, where $v_3$ is on the south pole with a distance of 1 from $v_1$ and $v_2$ which are either side of $v_3$.\\

\begin{figure}[ht!]
\centering
 \includegraphics[scale =0.7]{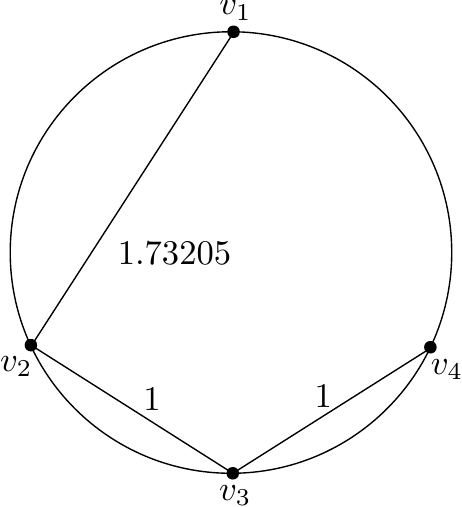}
 \caption{An arrangement of 4 points on a unit sphere such that the minimum bottleneck path length is $\sqrt{3}$.}
 \label{fig:points4a} 
\end{figure} 
If $v_1$ is not co-planar with $v_2,v_3$ and $v_4$, then consider the geodesic triangle of $v_2,v_3$ and $v_4$ formed by the geodesics between the vertices on the surface of the sphere. If the south pole does not lie in the interior or boundary of the triangle, then we could rotate the triangle around the sphere towards the south pole to increase the distance between $v_1$ and all other points, contradicting the maximality of the arrangement. If $v_2,v_3$ and $v_4$ are co-linear on the surface of the sphere, then we can use similar arguments as in the previous section to arrive at the possible arrangement in Figure~\ref{fig:points4a}. Assume that $v_2,v_3$ and $v_4$ are not co-linear so that the geodesic triangle of $v_2,v_3$ and $v_4$ on the surface of the sphere has non-zero area and assume that the south pole lies in the triangle's interior. If any single point of the triangle can be moved towards the centre whilst keeping the other points fixed, then such a move would not decrease the minimum bottleneck value of the arrangement, so we will assume that the vertices are too close together to allow such a movement. Hence there is vertex of the triangle that is of unit distance from the other two vertices in the triangle, and so the bottleneck edge will be between $v_1$ and its closest neighbour in the triangle. If the plane containing $v_2,v_3$ and $v_4$ is not parallel to the tangent plane to the sphere at $v_1$, then it would be possible to rotate the triangle around the surface of the sphere in such a way as to increase the distance between $v_1$ and its nearest neighbour, which is a contradiction. The only possibility for this arrangement is with the triangle being equilateral with the south pole at its centre, shown in Figure~\ref{fig:points4b}. This yields a bottleneck value of approximately 1.906.\\

\begin{figure}[ht!]
\centering
\begin{tikzpicture}
  
  \fill[fill=black] (0,2,0) circle (2pt) node[below right]{$v_1$};
  \fill[fill=black] (1.1547,-1.6330,0) circle (2pt) node[above right]{$v_2$};
  \fill[fill=black] (-0.5774,-1.6330,1.0000) circle (2pt) node[above]{$v_3$};
  \fill[fill=black] (-0.5774,-1.6330,-1.0000) circle (2pt)  node[above]{$v_4$};
  \draw (0,2,0) -- node[above right]{1.90604} (1.1547,-1.6330,0);
  \draw (1.1547,-1.6330,0) -- node[below]{1} (-0.5774,-1.6330,1.0000);
  \draw (-0.5774,-1.6330,1.0000) -- node[above]{1} (-0.5774,-1.6330,-1.0000);
  \draw[dashed] (1.1547,-1.6330,0) --  (-0.5774,-1.6330,-1.0000);
  \draw[dashed] (0,2,0) --  (-0.5774,-1.6330,-1.0000);
  \draw[dashed] (0,2,0) --  (-0.5774,-1.6330,1.0000);
\end{tikzpicture}
\caption{An arrangement of 4 points on a unit sphere such that the minimum bottleneck path length is 1.906.}
 \label{fig:points4b} 
\end{figure}
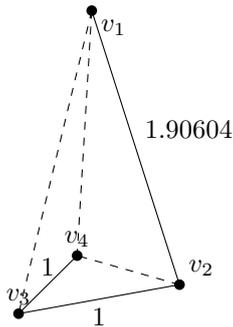 


Finally, we will consider the case that $v_1$ is not an endpoint of the path and that it is also incident to an edge of the path whose length is strictly less than the bottleneck value. Let the path $P$ be $(v_2,v_1,v_4,v_3)$. Due to symmetry, we also assume that the length of $(v_4,v_3)$ is less than the bottleneck value, otherwise we could use similar reasoning to the previous case. In this way we can partition the vertices into sets $\{v_1,v_2\}, \{v_3,v_4\}$ with the bottleneck edge going between sets. Since both $d(v_1,v_2)$ and $d(v_3,v_4)$ are less than the bottleneck value, the length of the bottleneck edge is the minimum distance between the two sets. Hence the two sets must be as far from one another as possible and the maximum distance will occur when the edges $(v_2,v_1)$ and $(v_4,v_3)$ are parallel and when $d(v_1,v_2) = d(v_3,v_4) = 1$. The optimal arrangement in this case is given by Figure~\ref{fig:points4c} and has a minimum bottleneck value of $\sqrt{3}$.\\

\begin{figure}[ht!]
\centering
 \includegraphics[scale =0.7]{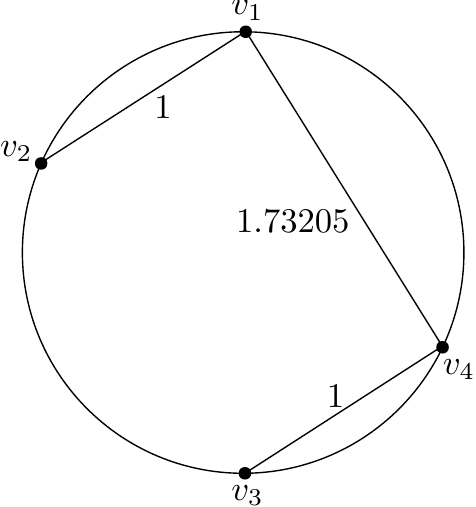}
 \caption{An alternate arrangement of 4 points on a unit sphere such that the minimum bottleneck path length is $\sqrt{3}$.}
 \label{fig:points4c} 
\end{figure}

Thus the arrangement that gives the maximum bottleneck value for the optimal bottleneck path is the one in which the vertices form a pyramid with an equilateral triangle base and we conclude that approximation factor of the $5$-NKRY algorithm for the $5$-E3MBST problem is at least 1.906.
\end{proof}

\subsection{Worst Case Ratio for the $\delta$-NKRY Algorithm for $\delta = 6,7,8$}
For the cases with 5 or more children, we will not use analytic techniques to establish proofs of the worst case ratios of the $\delta$-NKRY algorithm due to the complexity of the problem. Instead, we opt for computer search techniques in order find configurations of points experimentally. Whilst the numbers we obtain through our experiments are not confirmed analytically as absolute worst case ratios, they provide us with lower bounds for the worst case performances of the algorithm. For the cases with 6,7 and 8 children, our experimental results seem to suggest that the worst case configurations of points occur when the points are arranged in certain easily describable symmetric geometric patterns. We refer to these arrangements as \textit{representatives}.\\~\\
For 5 children, our representative was a square pyramid with the apex of the pyramid on the north pole of the sphere and the base of the pyramid having unit side length, where the vertices of the base are on the surface of the southern hemisphere (see Figure~\ref{fig:points5}). When we calculate the length of the bottleneck edge, which in this case is any the edge between the apex and a vertex of the base of the pyramid, we find that the length is approximately $1.8478$. Hence a lower bound for the worst-case approximation ratio of the $6$-NKRY Algorithm is $1.8478$.\\

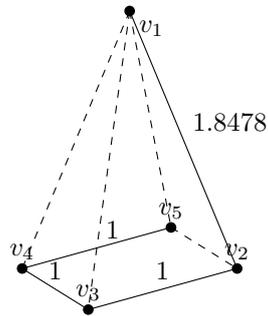
\begin{figure}[ht!]
\centering
\begin{tikzpicture}
  
  \fill[fill=black] (0,2,0) circle (2pt) node[below right]{$v_1$};
  \fill[fill=black] (1.4141,-1.4141,0) circle (2pt) node[above]{$v_2$};
  \fill[fill=black] (0,-1.4144,1.4141) circle (2pt) node[above]{$v_3$};
  \fill[fill=black] (-1.4141,-1.4144,0) circle (2pt)  node[above]{$v_4$};
  \fill[fill=black] (0,-1.4144,-1.4144) circle (2pt)  node[above]{$v_5$};
  \draw (0,2,0) -- node[above right]{1.8478} (1.4141,-1.4141,0);
  \draw (1.4141,-1.4141,0) -- node[above]{1} (0,-1.4144,1.4141);
  \draw (0,-1.4144,1.4141) -- node[above]{1} (-1.4141,-1.4144,0);
  \draw (0,-1.4144,-1.4144) -- node[above right]{1} (-1.4141,-1.4144,0);
  \draw[dashed] (0,-1.4144,-1.4144) --  (1.4141,-1.4141,0);
  \draw[dashed] (0,2,0) --  (0,-1.4144,1.4141);
  \draw[dashed] (0,2,0) --  (-1.4141,-1.4144,0);
  \draw[dashed] (0,2,0) --  (0,-1.4144,-1.4144);
\end{tikzpicture}
\caption{An arrangement of 5 points on a unit sphere such that the minimum bottleneck path length is 1.8478.}
 \label{fig:points5} 
\end{figure} 

Similarly, for 6 children, the computer search yielded a representative in the form of a pentagonal pyramid whose base has unit side length (see Figure~\ref{fig:points6}). In this case, the bottleneck edge has an approximate length of $1.7468$.\\

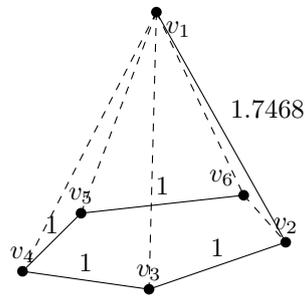
\begin{figure}[ht!]
\centering
\begin{tikzpicture}
  
  \fill[fill=black] (0,2,0) circle (2pt) node[below right]{$v_1$};
  \fill[fill=black] (1.7014,-1.0513,0) circle (2pt) node[above]{$v_2$};
  \fill[fill=black] (0.5258,-1.0513,1.6181) circle (2pt) node[above]{$v_3$};
  \fill[fill=black] (-1.3765,-1.0513,1.0001) circle (2pt)  node[above]{$v_4$};
  \fill[fill=black] (-1.3765,-1.0513,-1.0001) circle (2pt)  node[above]{$v_5$};
  \fill[fill=black] (0.5258,-1.0513,-1.6181) circle (2pt)  node[above left]{$v_6$};
  \draw (0,2,0) -- node[above right]{1.7468} (1.7014,-1.0513,0);
  \draw (1.7014,-1.0513,0) -- node[above]{1} (0.5258,-1.0513,1.6181);
  \draw (0.5258,-1.0513,1.6181) -- node[above]{1} (-1.3765,-1.0513,1.0001);
  \draw (-1.3765,-1.0513,1.0001) -- node[above]{1} (-1.3765,-1.0513,-1.0001);
  \draw (0.5258,-1.0513,-1.6181) -- node[above]{1} (-1.3765,-1.0513,-1.0001);
  \draw[dashed] (0.5258,-1.0513,-1.6181) --  (1.7014,-1.0513,0);
  \draw[dashed] (0,2,0) --  (0.5258,-1.0513,1.6181);
  \draw[dashed] (0,2,0) --  (-1.3765,-1.0513,1.0001);
  \draw[dashed] (0,2,0) --  (-1.3765,-1.0513,-1.0001);
  \draw[dashed] (0,2,0) --  (0.5258,-1.0513,-1.6181);
\end{tikzpicture}
\caption{An arrangement of 6 points on a unit sphere such that the minimum bottleneck path length is 1.7468.}
 \label{fig:points6} 
\end{figure} 

Finally, for 7 children, our representative is the pentagonal bipyramid whose apices are on diametrically opposite poles of the sphere, i.e., north and south pole. The other 5 vertices form a regular pentagon with a side length of approximately 1.018 and each of the five vertices is unit distance away from the south pole apex (see Figure~\ref{fig:points7}). The bottleneck length in this case is the length of an edge from the north pole apex to any one of the 5 vertices of the pentagon, which is calculated to be $\sqrt{3} \approx 1.73205$.

\begin{figure}[ht!]
\centering
\begin{tikzpicture}
  
  \fill[fill=black] (0,2,0) circle (2pt) node[below right]{$v_1$};
  \fill[fill=black] (1.7471,-0.9736,0) circle (2pt) node[above]{$v_2$};
  \fill[fill=black] (0.5399,-0.9736,1.6615) circle (2pt) node[above]{$v_3$};
  \fill[fill=black] (-1.4134,-0.9736,1.0269) circle (2pt)  node[above]{$v_4$};
  \fill[fill=black] (-1.4134,-0.9736,-1.0269) circle (2pt)  node[above]{$v_5$};
  \fill[fill=black] (0.5399,-0.9736,-1.6615) circle (2pt)  node[above left]{$v_6$};
  \fill[fill=black] (0,-2,0) circle (2pt) node[below right]{$v_7$};
  \draw (0,2,0) -- node[above right]{1.73205} (1.7471,-0.9736,0);
  \draw (1.7471,-0.9736,0) -- node[above right]{$s$} (0.5399,-0.9736,1.6615);
  \draw (0.5399,-0.9736,1.6615) -- node[above]{$s$} (-1.4134,-0.9736,1.0269);
  \draw (-1.4134,-0.9736,1.0269) -- node[above]{$s$} (-1.4134,-0.9736,-1.0269);
  \draw (-1.4134,-0.9736,-1.0269) -- node[above]{$s$} (0.5399,-0.9736,-1.6615);
  \draw (0,-2,0) -- node[above]{1} (0.5399,-0.9736,-1.6615);
  \draw[dashed] (0.5399,-0.9736,-1.6615) --  (1.7471,-0.9736,0);
  \draw[dashed] (0,2,0) --  (0.5399,-0.9736,1.6615);
  \draw[dashed] (0,2,0) --  (-1.4134,-0.9736,1.0269);
  \draw[dashed] (0,2,0) --  (-1.4134,-0.9736,-1.0269);
  \draw[dashed] (0,2,0) --  (0.5399,-0.9736,-1.6615);
  \draw[dashed] (0,-2,0) --  (1.7471,-0.9736,0);
  \draw[dashed] (0,-2,0) --  (0.5399,-0.9736,1.6615);
  \draw[dashed] (0,-2,0) --  (-1.4134,-0.9736,1.0269);
  \draw[dashed] (0,-2,0) --  (-1.4134,-0.9736,-1.0269);
\end{tikzpicture}
\caption{An arrangement of 7 points on a unit sphere such that the minimum bottleneck path length is $\sqrt{3}$. The side length $s$ is approximately 1.018.}
 \label{fig:points7} 
\end{figure}
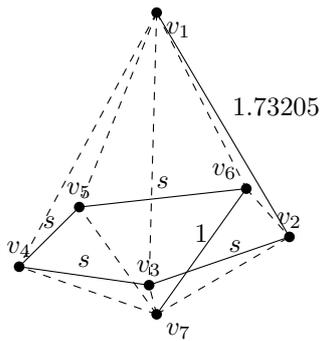

\subsection{Worst Case Ratio for the $\delta$-NKRY Algorithm for $\delta = 9,10,11$}
For the cases with 8 or more children, we were unable to obtain configurations through our computer search that approach known geometric patterns, as was the case for the previous examples. As such, we will simply present the best configurations, i.e., the configurations of children $C$ on the surface of a unit sphere with centre $v$ that yielded that largest value for $b_v(C)$, that were obtained after several searches. We do not claim that these configurations are the true worst-case configurations, in fact the likelihood of the computer search producing a solution that is sub-optimal seems to increase as the number of points increase, due to additional local optima. However, our outputs are useful for obtaining lower bounds as well as potentially giving insight into the geometric structure of the true worst-case configuration.\\~\\
For 8 children we were able to obtain a bottleneck value of approximately $1.5105$, using the point set given in Table~\ref{tab:deg9}. This point arrangement could best be described as a single point on one pole, five points almost arranged in a pentagonal shape in the opposite hemisphere, with the remaining two points situated close to the opposite pole. A plot of the point set is given in Figure~\ref{fig:pointsdeg9}.

\begin{table}[H]
\centering
\caption{A set of points on the surface of a unit sphere that has an optimal bottleneck TSP-path value of approximately $1.5105$.}
\label{tab:deg9}
\begin{tabular}{@{}ccc@{}}
\toprule
 $x$                   &  $y$                   & $z$                    \\ \midrule
0.43706   & 0.25681  & 0.86199  \\
0.19162   & 0.96778  & 0.16333  \\
-0.76029  & 0.64488  & 0.078046 \\
-0.53892  & 0.037955 & 0.8415  \\
0.57761   & -0.67051 & 0.4656  \\
-0.43604  & -0.82836 & 0.35169  \\
0.97339   & 0.22864  & 0.015302 \\
-0.082855 & -0.20074 & -0.97613 \\ \bottomrule
\end{tabular}
\end{table}

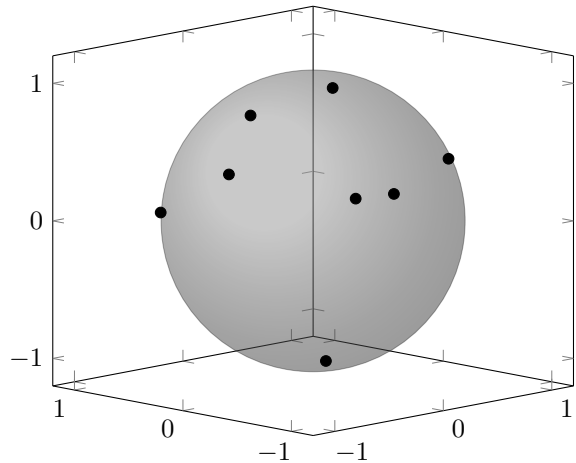
\begin{figure}[H]
\centering
\begin{tikzpicture}
\def\R{1} 
 \begin{axis}[view = {-45}{12},xmin = -1.2, xmax = 1.2,ymin = -1.2, ymax = 1.2,zmin = -1.2, zmax = 1.2]
\addplot3[black,only marks]
coordinates {
	(0.43706,0.25681,0.86199) 
	(0.19162,0.96778,0.16333)
	(-0.76029  , 0.64488  , 0.078046)
	(-0.53892  , 0.037955 , 0.8415  )
	(0.57761   , -0.67051 , 0.4656  )
	(-0.43604  , -0.82836 , 0.35169 ) 
	(0.97339   , 0.22864  , 0.015302) 
	(-0.082855 , -0.20074 , -0.97613)  
};
\filldraw[ball color=white, opacity=0.3] (axis
            cs:0,0,0)  circle [radius = 2cm];
\end{axis}
\end{tikzpicture}
\caption{The point set given in Table~\ref{tab:deg9}.}
 \label{fig:pointsdeg9} 
\end{figure} 

For 9 children, we obtained a point set with a bottleneck value of approximately $1.4095$, using the point set given in Table~\ref{tab:deg10}. This configuration could best be described as a point on one pole, 5 points in an almost pentagonal shape near the equator and 3 points forming a triangle around the opposite pole. A plot of the point set is given in Figure~\ref{fig:pointsdeg10}.\\

\begin{table}[H]
\centering
\caption{A set of points on the surface of a unit sphere that has an optimal bottleneck TSP-path value of approximately $1.4095$.}
\label{tab:deg10}
\begin{tabular}{@{}ccc@{}}
\toprule
 $x$                   &  $y$                   & $z$                    \\ \midrule
0.0016014 & -0.14111  & -0.98999 \\
-0.13233  & 0.79515   & 0.59181  \\
0.77068   & 0.33782   & 0.5403  \\
-0.87494  & -0.24761  & -0.41615 \\
-0.61366  & 0.717670   & -0.3292 \\
0.11711   & -0.94182  & 0.31506  \\
-0.83887  & 0.069402  & 0.53988  \\
0.43883   & 0.84718   & -0.29953 \\
0.87953   & -0.055801 & -0.47256 \\ \bottomrule
\end{tabular}
\end{table}

\begin{figure}[H]
\centering
\begin{tikzpicture}
\def\R{1} 

 \begin{axis}[view = {-87}{10},xmin = -1.2, xmax = 1.2,ymin = -1.2, ymax = 1.2,zmin = -1.2, zmax = 1.2]
\addplot3[black,only marks]
coordinates {
	(0.0016014 , -0.14111  , -0.98999)
	(-0.13233  , 0.79515   , 0.59181)
	(0.77068   , 0.33782   , 0.5403)
	(-0.87494  , -0.24761  , -0.41615)
	(-0.61366  , 0.717670   , -0.3292)
	(0.11711   , -0.94182  , 0.31506 )
	(-0.83887  , 0.069402  , 0.53988 )
	(0.43883   , 0.84718   , -0.29953)
	(0.87953   , -0.055801 , -0.47256)  
};
\filldraw[ball color=white, opacity=0.3] (axis
            cs:0,0,0)  circle [radius = 2.6cm];
\end{axis}
\end{tikzpicture}
\caption{The point set given in Table~\ref{tab:deg10}.}
 \label{fig:pointsdeg10} 
\end{figure}
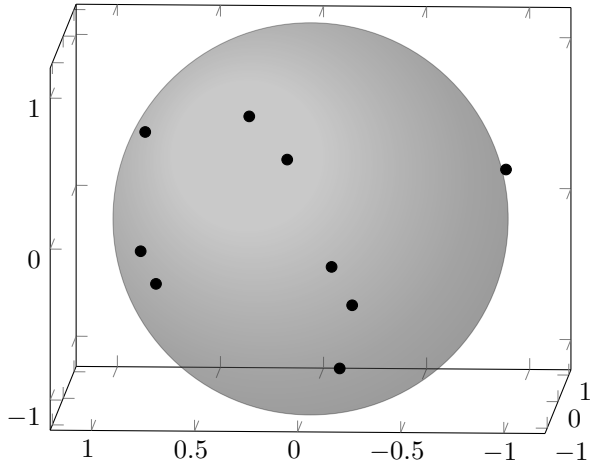 

Finally, for 10 children, we obtained a point set with a bottleneck value of approximately $1.3314$, using the point set given in Table~\ref{tab:deg11}. This configuration could best be described as a point on one pole, 6 points in an almost hexagonal shape near the equator and 3 points forming a triangle around the opposite pole. A plot of the point set is given in Figure~\ref{fig:pointsdeg11}.

\begin{table}[H]
\centering
\caption{A set of points on the surface of a unit sphere that has an optimal bottleneck TSP-path value of approximately $1.3314$.}
\label{tab:deg11}
\begin{tabular}{@{}ccc@{}}
\toprule
 $x$                   &  $y$                   & $z$                    \\ \midrule
0.9707 & 0.015168   & 0.23983   \\
0.5  & 0.86603    & 0                   \\
-0.49562 & 0.85955    & 0.12461   \\
-0.98247 & 0 & -0.1864 \\
0.5  & -0.86603   & 0                   \\
-0.49986 & -0.86579   & -0.023435 \\
0.68647  & 0.13024    & -0.7154  \\
-0.24313 & 0.55298    & -0.79693  \\
-0.10477 & -0.4889   & -0.86603  \\
-0.12718 & -0.084591  & 0.98827   \\ \bottomrule
\end{tabular}
\end{table}

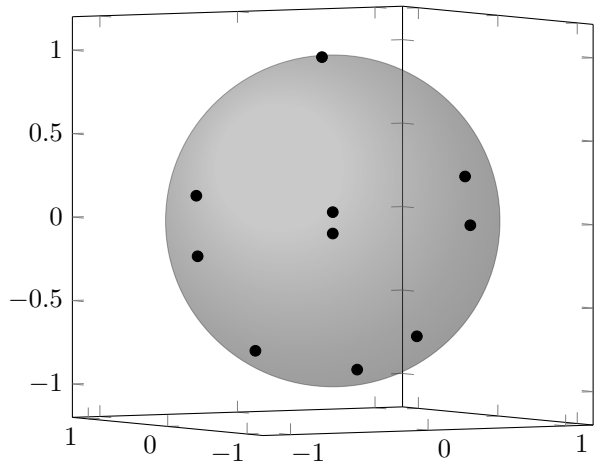
\begin{figure}[H]
\centering
\begin{tikzpicture}
\def\R{1} 

 \begin{axis}[view = {-30}{3},xmin = -1.2, xmax = 1.2,ymin = -1.2, ymax = 1.2,zmin = -1.2, zmax = 1.2]
\addplot3[black,only marks]
coordinates {
	(0.9707 , 0.015168   , 0.23983   )
	(0.5  , 0.86603    , 0          )
	(-0.49562 , 0.85955    , 0.12461   )
	(-0.98247 , 0 , -0.1864 )
	(0.5  , -0.86603   , 0  )                 
	(-0.49986 , -0.86579   , -0.023435 )
	(0.68647  , 0.13024    , -0.7154  )
	(-0.24313 , 0.55298    , -0.79693  )
	(-0.10477 , -0.4889   , -0.86603  )
	(-0.12718 , -0.084591  , 0.98827  ) 
};
\filldraw[ball color=white, opacity=0.3] (axis
            cs:0,0,0)  circle [radius = 2.2cm];
\end{axis}
\end{tikzpicture}
\caption{The point set given in Table~\ref{tab:deg11}.}
 \label{fig:pointsdeg11} 
\end{figure}

The point sets we have obtained, both analytically and experimentally, seem to suggest that as the number of children increases, the length of the bottleneck edge in a worst-case arrangement of the children decreases. We believe that this is case, and hence we make the following conjecture.

\begin{conjecture} \label{conj:monotone}
Let $v$ be a root vertex with children $\{v_1, \dots, v_k\}$, where $k \geq 2$ such that angles between children with respect to $v$ are at least $60^{\circ}$ and the radial distances of all children are equal to 1. Let $P$ be the optimal bottleneck path that starts at $v$ and visits all children and suppose that $\{v_1, \dots, v_k\}$ are placed so as to maximise the bottleneck length $b$ of $P$.\\
Also let $w$ be a root vertex with children $\{w_1, \dots, w_{k+1}\}$, such that angles between children with respect to $w$ are at least $60^{\circ}$ and the radial distances of all children are equal to 1. Let $P'$ be the optimal bottleneck path that starts at $w$ and visits all children and suppose that $\{w_1, \dots, w_{k+1}\}$ are placed so as to maximise the bottleneck length $b'$ of $P'$.\\
Then $b \geq b'$.
\end{conjecture}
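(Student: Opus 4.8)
The plan is to prove the statement by a \emph{deletion argument}. Since both root vertices play the identical geometric role (the centre of a unit sphere, with all children at radial distance $1$), I will identify them and write $v$ for the common root. Starting from an optimal configuration of $k+1$ children, I would delete a single child so as to obtain a \emph{valid} configuration of $k$ children whose optimal bottleneck path still has bottleneck at least $b'$. Because $b$ is by definition the maximum of $b_v(\cdot)$ over all valid $k$-child configurations, exhibiting even one such configuration immediately yields $b \geq b'$. Concretely, let $C' = \{w_1,\dots,w_{k+1}\}$ attain $b_v(C') = b'$, let $P'$ be an optimal bottleneck TSP-path starting at $v$, and let $e^*$ be a bottleneck edge of $P'$, so that the length of $e^*$ is $b'$. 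Note that deleting any child preserves validity, since the pairwise ``$\geq 60^\circ$ with $v$'' condition is inherited by subsets.

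The next step is to reformulate matters through a threshold graph. Let $H$ be the graph on $\{v\} \cup C'$ in which two points are adjacent exactly when their distance is strictly less than $b'$. Because $b_v(C') = b'$ is the \emph{minimum} bottleneck over all paths from $v$, the graph $H$ admits no Hamiltonian path starting at $v$; otherwise such a path would be a TSP-path from $v$ all of whose edges are shorter than $b'$, contradicting $b_v(C')=b'$. Equivalently, every TSP-path from $v$ through $C'$ must use an edge of length at least $b'$. The crux is then the following \textbf{removal lemma}: there exists a child $u \in C'$ such that the induced subgraph $H - u$ still admits no Hamiltonian path starting at $v$. Granting this, set $C = C' \setminus \{u\}$; then every TSP-path from $v$ through $C$ uses an edge of length at least $b'$, so $b_v(C) \geq b'$, and since $C$ is a valid configuration of $k$ children we conclude $b \geq b_v(C) \geq b'$.

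The main obstacle is precisely this removal lemma, since deleting a vertex from a graph can in general \emph{create} a Hamiltonian path that was previously absent (for instance when the deleted vertex was an obstruction). To control this I would exploit the geometry rather than treat $H$ combinatorially. First, using Lemma~\ref{lem:krycor} (the $60^\circ$ angle property of MST edges) together with Theorem~\ref{thm:equidistant}, each child has only boundedly many others within distance $b' \leq 2$, so $H$ is sparse and geometrically structured; its non-Hamiltonicity-from-$v$ ought therefore to be robust under deleting a suitably chosen near-leaf vertex. A natural candidate is an endpoint of $P'$ lying farthest (in path order) from $e^*$, or a child incident to $e^*$ on its more crowded side. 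I would then argue that deleting such a vertex cannot merge the short-edge components of $H$ into a single path from $v$, by a case analysis of how a hypothetical new short path would have to route through the neighbourhood of the deleted point, each case contradicting either the optimality of $P'$ or the angle constraint.

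I should be candid that this final step is where the genuine difficulty lies, and is the reason the result is stated here only as a conjecture: verifying that some deletion preserves non-Hamiltonicity appears to require detailed control of the extremal configurations, which (as the experimental subsections for $k \geq 8$ indicate) lack any clean closed form. An alternative, \emph{continuous} route would fix the combinatorial type of the optimal path and deform $C'$ by sliding one child toward the antipode of $v$ while maintaining validity, tracking $b_v$ along the motion and arguing it never drops below $b'$ as that child is ejected; but the constraint that no two children lie within $60^\circ$ of each other obstructs a clean limiting configuration, so I expect the discrete removal lemma, fed by the geometric sparsity of $H$, to be the more promising of the two approaches.
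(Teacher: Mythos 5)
This statement is left \emph{open} in the paper: the authors say explicitly that they were unable to establish a rigorous proof of Conjecture~\ref{conj:monotone}, citing precisely ``the lack of assumptions that can be made in regards to how the optimal path changes when local changes are applied.'' Your proposal does not close that gap, and to your credit you say so yourself. What you do establish is sound: identifying the two roots, observing that the $60^\circ$ validity condition is inherited by subsets, and recasting $b_v(C') = b'$ as the absence of a Hamiltonian path starting at $v$ in the threshold graph $H$ (edges being pairs at distance strictly less than $b'$) are all correct, and together they reduce the conjecture to your removal lemma: some child $u$ can be deleted so that $H - u$ still has no Hamiltonian path starting at $v$. But that lemma is where the entire difficulty of the conjecture now sits, and you give no proof of it, only the hope that geometry makes non-traceability ``robust'' under a well-chosen deletion. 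That hope is not an argument. Abstractly, the unrooted analogue of your lemma is \emph{false}: hypotraceable graphs have no Hamiltonian path at all, yet every single-vertex deletion produces one; so no purely combinatorial principle can supply what you need, and the geometric content (which children can lie within distance $b'$ of which, via Lemma~\ref{lem:krycor}) would have to do all the work. Your sketch of that geometric step --- delete ``an endpoint of $P'$ farthest from $e^*$'' or ``a child incident to $e^*$ on its more crowded side,'' then run ``a case analysis'' --- names candidates but carries out no cases. Note also that your sparsity claim buys nothing: all points lie on a unit sphere, the threshold $b'$ can be close to $2$, and the number of children is at most $12$ anyway (the kissing-number bound of Section~\ref{sec:Hadwiger}), so $H$ is a small graph that may well be dense.

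A second, structural caution: even if the conjecture is true, the deletion strategy may simply not be available. It is conceivable that for the extremal $(k+1)$-child configuration \emph{every} single-child deletion strictly lowers the optimal bottleneck below $b'$, with the conjecture nevertheless holding because the extremal $k$-child configuration is unrelated to any subconfiguration of the $(k+1)$-child one. Your argument is sufficient for the conjecture but not necessary, so a counterexample to your removal lemma would sink your proof without deciding the conjecture either way. This is exactly the obstruction the paper identifies --- optimal bottleneck paths do not behave predictably under local modifications such as deleting a child --- and the experimentally found near-extremal configurations for $8$, $9$ and $10$ children (Tables~\ref{tab:deg9}--\ref{tab:deg11}) exhibit no clean structure on which your case analysis could be hung. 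As it stands, your proposal is a reasonable research plan built on a correct reduction, but it is not a proof.
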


If Conjecture~\ref{conj:monotone} were true, then it would imply that the star for which $\delta$-NKRY algorithm gives the worst performance will have $\delta - 1$ children, since the algorithm does not perform edge swaps when the root node has $\delta -2$ or fewer children. Hence, the stars that were proven analytically to be the worst case arrangements, namely the stars with 2, 3 and 4 children, would yield upper bounds for the performance ratios of the $\delta$-NKRY algorithm. This would in turn yield exact values for the performance ratios of the $4$-NKRY and $5$-NKRY algorithms (since the performance ratio of the $3$-NKRY algorithm has already been shown), and we could use the worst-case arrangement of $k-1$ children on the unit sphere, for any $k > 2$, to give an upper bound for the performance ratio of the $\delta$-NKRY algorithm for all $\delta \geq k$.\\~\\
The statement of Conjecture~\ref{conj:monotone} seems somewhat intuitive; the more children the root node has, the less space we have in which to spread children out from one another. Furthermore, all of our experimentation thus far seems to agree with this observation. However, we have been unable to establish a rigorous proof of Conjecture~\ref{conj:monotone} due to the difficulty of the bottleneck TSP-path problem and the lack of assumptions that can be made in regards to how the optimal path changes when local changes are applied.

\subsection{A More General Adaptation of the KRY Algorithm for the $\delta$-E3MBST Problem} \label{EPKRY}
In the previous section, we considered the naive KRY algorithm as a generalisation of the KRY algorithm. Whilst the $\delta$-NKRY is a relatively simple generalisation, this simplicity may come at the cost of accuracy, as the $\delta$-NKRY algorithm has the tendency to reduce the degrees further than what is necessary. For instance, consider the $8$-NKRY algorithm. This algorithm will perform edge swaps whenever the root node has a degree of 7 or more. However, after the edge swap, the node will have a degree of at most 3 in the final tree, due to the edge swap being a path from the root node through its children.\\~\\
To address this issue, we propose the a more sophisticated generalisation of the KRY algorithm, namely the \textit{$k$-partition} KRY algorithm for the $\delta$-E3MBST problem, or $(\delta,k)$-PKRY for short. The $(\delta,k)$-PKRY algorithm is similar to the $\delta$-NKRY in that it recursively performs edge swaps with respect to nodes of a rooted MST if the degree of the current root node is strictly greater than $\delta - 2$. The difference between this partition version and the naive version of the KRY algorithm is in the search space of possible edge swaps. For a given root node, the $\delta$-NKRY algorithm only considers edge swaps which result in a path starting at the root node and passing through all children. Of the potential paths, it chooses the optimal bottleneck TSP path. The partition version however, may consider paths but can also consider other sets of edges which depend on certain partitionings of the children of the current root node. The parameter $k$ indicates the number of sets we will partition the children into, and therefore we will find that $1 \leq k \leq \delta - 2$ in order for the degree constraint to be satisfied.\\~\\
Given a current root node with $j$ children $(\delta,k)$-PKRY considers all possible partitions of children into $k$ sets. For a given partition, the chosen edge swap consists of the union of minimum bottleneck TSP-paths that start at the root node and visit all nodes in a single partition. For example, if our root node is $v$, and the partition of children is $\{\{v_1,v_2\},\{v_3\},\{v_4,v_5\}\}$ ($k = 3$), then the edges after the edge swap could consist of $\{(v,v_1),(v_1,v_2),(v,v_3),(v,v_4),(v_4,v_5)\}$ (we would replace $(v,v_1)$ by $(v,v_2)$ or $(v,v_4)$ by $(v,v_5)$ if it reduced the length of the longest edge). See Figure~\ref{fig:PKRYexample} for an illustration of this example. Note that no node in the resultant subtree other than the root will have a degree of more than 2. The algorithm searches through all the possible partitions that use $k$ sets and chooses the one that results in edge swaps where the longest edge swapped in is minimum. As such, if an edge swap was performed with respect to a root node, the largest degree that the root node can have in the final tree is $k+2$. We present the $(\delta,k)$-PKRY algorithm as Algorithm~\ref{alg:PKRY}.\\

\begin{algorithm}[htb]
\caption{: $(\delta,k)$-PKRY} \label{alg:PKRY}
\begin{tabbing}
  ....\=....\=....\=....\=................... \kill \\ [-2ex]
\textbf{Input:} A rooted tree $T$ over a point set $P$ with root $v$ and a partially built\\ solution $T^*$.\\
\\
\textbf{if} $v$ has at least one child\\
\> Let the children of $v$ be $v_1, \dots, v_c$, where $c$ is the number of children of $v$.\\
\> \textbf{if} $c \geq \delta-1$,\\
\>\> \textbf{For each} partition $Q$ of $\{v_1, \dots ,v_c\}$ into $k$ subsets,\\
\>\>\> Let $C_i$ be the $i$-th subset of $Q$.\\
\>\>\> Let $P_{Q,i}$ be the optimal bottneck TSP-path through $C_i$ starting at $v$.\\
\>\>\> Let $b_Q = \max_{i = 1}^{k}(b_v(C_i))$, where $b_v(C_i)$ denotes the length of the longest\\ \>\>\> edge in $P_{Q,i}$.\\
\>\> Choose the partition $Q^*$ such that $b_{Q^*}$ is minimum.\\
\>\> \textit{For} $i = 1, \dots ,k$,\\
\>\>\> Add the edges of $P_{Q^*,i}$ to $T^*$.\\
\> \textbf{else} \\
\>\> Add each edge $(v,v_i)$ to $T^*$, where $i = 1, \dots, c$.\\
\> Perform Algorithm~\ref{alg:PKRY} with $T:= T_{v_i}$ as input for $i = 1, \dots, c$, where $T_{v_i}$ denotes\\ \> the subtree rooted at $v_i$.\\
\\
\textbf{Output:} $T^*$.
\end{tabbing}
\end{algorithm}

\begin{figure}[ht!]  
  \begin{subfigure}[b]{0.5\linewidth}
    \centering
    \includegraphics[scale =0.7]{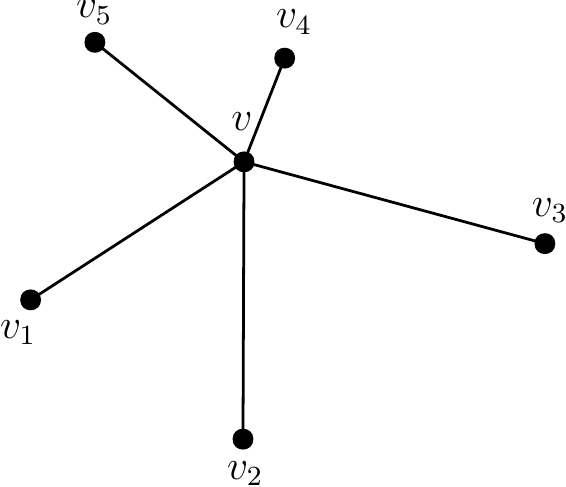} 
    
  \end{subfigure}
  \hspace{4ex}
  \begin{subfigure}[b]{0.5\linewidth}
    \centering
    \includegraphics[trim = 0 -0 0 0,clip,scale =0.7]{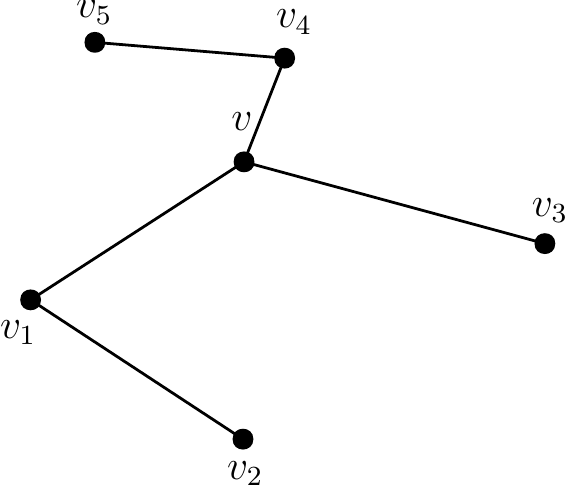} 
  \end{subfigure}
  \caption{An example edge swap for the $(\delta,k)$-PKRY algorithm, where $k=3$.} 
    \label{fig:PKRYexample} 
\end{figure}

Note that the $\delta$-NKRY algorithm is the same as the $(\delta,1)$-PKRY algorithm since it only ever considers the partition of children into a single set. One can also observe that the algorithm improves in performance as $k$ increases, a fact which we make explicit in the following lemma.

\begin{lemma}\label{lem:partitionlemma}
Given a point set $P$, the length of the longest edge in the output tree of $(\delta,k-1)$-PKRY applied to $P$ is an upper bound for the length of the longest edge in the output tree of $(\delta,k)$-PKRY applied to $P$, where $1<k\leq \delta - 2$ and $\delta \geq 4$.
\end{lemma}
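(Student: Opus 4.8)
The plan is to exploit the fact that both $(\delta,k)$-PKRY and $(\delta,k-1)$-PKRY, when run on the same input MST, traverse exactly the same recursion tree: by inspection of Algorithm~\ref{alg:PKRY}, the recursive calls are always made on the subtrees $T_{v_i}$ of the \emph{original} rooted tree, and the decision of whether to perform an edge swap at a node depends only on its number of children $c$ and the threshold $\delta-1$, not on $k$. Consequently the two algorithms differ only in the set of edges they insert during the swap performed at each node of degree $c \ge \delta-1$. Since every edge of the output tree is inserted while processing exactly one node, the length of the longest edge of the output equals the maximum, taken over all nodes $v$, of the longest edge inserted locally at $v$. At nodes where no swap occurs the two algorithms insert identical edges, so it suffices to prove the inequality node-by-node at the swap nodes, which reduces the whole lemma to the single claim
\[
\min_{Q \text{ a } k\text{-partition of } C} b_Q \ \le\ \min_{Q \text{ a } (k-1)\text{-partition of } C} b_Q,
\]
where $C$ is the child set of a swap node $v$ and $b_Q = \max_i b_v(C_i)$.

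To establish this claim I would argue by refinement: given an optimal $(k-1)$-partition $Q' = \{C_1,\dots,C_{k-1}\}$ with value $b_{Q'}$, I construct a $k$-partition $Q$ with $b_Q \le b_{Q'}$. Because a swap occurs only when $c \ge \delta-1$, and $k \le \delta-2$ forces $\delta - 1 \ge k+1$, we have $c \ge k+1 > k-1$; hence by the pigeonhole principle at least one block, say $C_j$, contains two or more children. I split this block into two nonempty blocks, leaving the other $k-2$ blocks untouched, to obtain a genuine $k$-partition $Q$. The untouched blocks retain their bottleneck values (all $\le b_{Q'}$), so everything hinges on bounding the bottleneck of the two new blocks by $b_v(C_j) \le b_{Q'}$.

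Here lies the main obstacle, and the reason a naive shortcutting argument fails: splitting a block forces each half to be served by its \emph{own} path starting at $v$, and the newly required initial edge from $v$ need not be short in a general metric (bottleneck TSP-paths are not monotone under shortcutting). I would resolve this using the geometry of the MST. Let $v \to u_1 \to u_2 \to \cdots \to u_m$ be the optimal bottleneck TSP-path realising $b_v(C_j)$, and split $C_j$ into $\{u_1\}$ and $\{u_2,\dots,u_m\}$. The singleton contributes $d(v,u_1)$, which is an edge of that path and hence $\le b_v(C_j)$. For the remaining block I take the path $v \to u_2 \to u_3 \to \cdots \to u_m$: all of its edges except the first already appear in the original path and are $\le b_v(C_j)$, and for the newly introduced edge $(v,u_2)$ I invoke the cycle property of minimum spanning trees applied to the triangle $u_1,v,u_2$ (both $(v,u_1)$ and $(v,u_2)$ are tree edges, so the chord $(u_1,u_2)$ is the longest edge of that cycle). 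This gives $d(v,u_2) \le d(u_1,u_2) \le b_v(C_j)$, so the new path, and therefore the optimal one through $\{u_2,\dots,u_m\}$, has bottleneck $\le b_v(C_j)$.

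Combining the pieces, both new blocks have bottleneck at most $b_v(C_j) \le b_{Q'}$, whence $b_Q \le b_{Q'}$, proving the displayed claim. Feeding this back through the node-by-node comparison yields that the longest edge of the $(\delta,k)$-PKRY output never exceeds that of the $(\delta,k-1)$-PKRY output, as required. The only delicate points to verify carefully are that the recursion trees genuinely coincide (so that the comparison is legitimately local) and the cycle-property bound on $d(v,u_2)$; the remaining bookkeeping is routine.
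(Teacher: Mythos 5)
Your proof is correct and takes essentially the same approach as the paper's: both reduce the lemma to a node-local comparison at swap nodes and then refine the optimal $(k-1)$-partition into a feasible $k$-partition by splitting one multi-element block into a singleton plus the remainder, bounding the single newly introduced edge via the MST cycle property applied to two children of $v$. The only cosmetic difference is that you split off the \emph{first} vertex $u_1$ of the block's path (so the new edge is $(v,u_2)$, bounded by $d(u_1,u_2)$), whereas the paper splits off the \emph{far endpoint} $v_i$ (so the new edge is $(v,v_i)$, bounded by $d(v_{i-1},v_i)$); these are mirror images of the same argument.
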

\begin{proof}
Consider a root node $v$ in the rooted MST for $P$ such that $(\delta,k-1)$-PKRY will employ local edge swaps with respect to $v$. This means that the number of children of $v$ is at least $\delta - 1$, a number which is strictly greater than $k - 1$. Hence, whichever edges $(\delta,k-1)$-PKRY chooses to swap in or out, we can guarantee that there will exist a path starting at $v$ through two or more children of $v$ after the swaps. Choose such a path and remove the endpoint that was not $v$, say $v_i$, by deleting the edge between $v_i$ and its neighbour in the path, which we will denote $v_{i-1}$. Since $v_i$ is now isolated, add the edge $(v,v_i)$ to complete the subtree. This new subtree is a feasible output of the $(\delta,k)$-PKRY algorithm since it uses $k$ partitions (the newly created set in the partition is $\{v_i\}$). Additionally, the new subtree cannot have a greater bottleneck value than the previous subtree since this would imply that $d(v,v_i) > d(v_{i-1},v_i)$, which contradicts the optimality of the original MST. Thus the result is proven. 
\end{proof}

An obvious drawback of the $(\delta,k)$-PKRY algorithm is the size of the search space. Whilst the sizes of the search space are always technically bounded above by a constant since the degrees of vertices in an E3MST are bounded above by a constant, they can be considerably large. This is because the sizes of our search spaces are Stirling numbers of the second kind which grow exponentially in the number of children. For instance, the number of ways of partitioning $10$ children into $5$ sets is the Stirling number $S(10,5)$ which is equal to $42525$. For the $(\delta,k)$-PKRY algorithm to be implemented in a practical setting, an appropriate value of $k$ should be chosen that balances both the need for accuracy and the need for time efficiency.\\\\

\subsection{Performance Ratio of the $(\delta,k)$-PKRY Algorithm}
As in the previous subsections, we wish to perform some analysis of the worst case point arrangements for the $(\delta,k)$-PKRY algorithm. Once again we need only consider the worst-case local arrangement of children around a root vertex due to recursive nature of this bottleneck algorithm. In addition, Theorem~\ref{thm:equidistant} still applies and we can once again assume that the worst case arrangement occurs when the children are arranged on the surface of the unit sphere. However unlike previously, we will not consider all possible values of $\delta$. This is because the large search spaces present when $k >1$ result in our computational experimentation being inefficient in both computation time and in accuracy. Similarly, we will not be considering all possible values of $k$. When $k = 2$ and we perform a swap with respect to a partition of $3$ children (hence $\delta =4$, since $\delta =3$ violates the upper bound for $k$ and we would not perform the swap for $\delta \geq 5$), we have the following result.

\begin{theorem}
The largest possible bottleneck increase resulting from edge swaps with respect to a root node with 3 children for the $(4,2)$-PKRY algorithm is a factor of $\sqrt{3}$.
\end{theorem}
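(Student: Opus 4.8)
The plan is to reduce the statement to a simple optimisation problem about spreading three points on a sphere. First I would fix a root node $v$ with three children $v_1,v_2,v_3$ and invoke Theorem~\ref{thm:equidistant} to assume, without loss of generality, that all three children lie at unit radial distance from $v$, so that the bottleneck of the original star is $1$ and the bottleneck \emph{increase} is exactly the length of the longest edge that $(4,2)$-PKRY swaps in. Since $k=2$, the algorithm ranges over the three partitions of $\{v_1,v_2,v_3\}$ into a singleton and a pair, namely $\{\{v_1\},\{v_2,v_3\}\}$, $\{\{v_2\},\{v_1,v_3\}\}$ and $\{\{v_3\},\{v_1,v_2\}\}$, and selects the one whose worst path-bottleneck is least.

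Next I would evaluate the bottleneck of each partition. The singleton set $\{v_i\}$ contributes only the radial edge $(v,v_i)$ of length $1$. For a pair $\{v_j,v_k\}$, the optimal bottleneck TSP-path starting at $v$ is one of $v\to v_j\to v_k$ or $v\to v_k\to v_j$; since both radial edges have length $1$, each ordering has longest edge $\max\{1,d(v_j,v_k)\}$, and by Lemma~\ref{lem:krycor} the angle $\angle v_j v v_k\geq 60^\circ$ forces $d(v_j,v_k)=\sqrt{2-2\cos\angle v_j v v_k}\geq 1$. Hence the bottleneck of the pair is exactly $d(v_j,v_k)$, and the bottleneck of the partition containing that pair is $\max\{1,d(v_j,v_k)\}=d(v_j,v_k)$. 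Taking the minimum over the three partitions, the bottleneck increase produced by the edge swap is precisely $\min\{d(v_1,v_2),d(v_1,v_3),d(v_2,v_3)\}$, the smallest pairwise distance among the three children.

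It therefore remains to maximise the minimum pairwise distance of three points on the unit sphere centred at $v$. The three points determine a plane that meets the sphere in a circle of radius $r\le 1$, with $r=1$ exactly when the plane passes through $v$ (a great circle). For three points on a circle of radius $r$, the minimum pairwise distance is maximised by equal $120^\circ$ spacing, giving a common distance of $r\sqrt 3$; this is in turn maximised at $r=1$, yielding $\sqrt 3$ for an equilateral triangle inscribed in a great circle. At this configuration every pair subtends $120^\circ\ge 60^\circ$ at $v$, so the arrangement is a valid star and the constraint is automatically satisfied. Consequently the largest possible bottleneck increase is $\sqrt 3$, as claimed.

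The routine parts here are the per-partition bottleneck computations; the only substantive step is the spherical optimisation, i.e. showing that the maximum of the minimum pairwise distance of three points on the unit sphere is $\sqrt 3$. I expect this to be the main obstacle to state rigorously, but it is a small ($n=3$) instance of the classical Tammes problem and can be settled cleanly by the ``plane section'' argument above: reduce to a circle of radius $r\le 1$, use equal spacing to fix the in-circle optimum at $r\sqrt 3$, and then observe monotonicity in $r$ to push the plane to a great circle. No heavier machinery should be needed.
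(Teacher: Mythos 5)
Your proposal is correct and follows essentially the same route as the paper's proof: both reduce the problem to observing that with $k=2$ the algorithm swaps in exactly the shortest edge of the triangle $(v_1,v_2,v_3)$, so the worst case maximises the minimum pairwise distance of three points on the unit sphere, achieved by an equilateral triangle on a great circle with side $\sqrt{3}$. The only difference is that you spell out the per-partition bottleneck computation and give the plane-section argument for the spherical optimisation, steps the paper dispatches with ``it is not difficult to see''.
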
 
\begin{proof}
Let $v$ be the root node with children $v_1,v_2,v_3$. As before, we assume that $v$ lies at the centre of a unit sphere with its children on the surface such that every child is at least unit distance away from every other child. Since we can use up to two partitions, we will only need to swap in exactly one of the edges of $\{(v_1,v_2),(v_2,v_3),(v_1,v_3)\}$. This is because $v$ is allowed to have a degree of 2 in the subtree after the swaps. Hence, the algorithm will always choose the shortest edge from $\{(v_1,v_2),(v_2,v_3),(v_1,v_3)\}$ to swap in and so the worst case will occur when the length of the shortest edge in the triangle $(v_1,v_2,v_3)$ is maximised. It is not difficult to see that the smallest edge will be maximised when the triangle is equilateral. The largest possible equilateral triangle will is one in which the vertices are placed around an equator of the sphere, and the edge length of such a triangle is $\sqrt{3}$.
\end{proof}
We know from previous results that if $k = 1$, then the worst case for three children would represent an increase of the bottleneck length by a factor of approximately 1.931. Therefore an increase in $k$ has improved the performance of the algorithm for three children.\\~\\ 
We performed computational experiments for the cases with $\delta =5$ (4 children), i.e., $(5,2)$-PKRY and $(5,3)$-PKRY.
When there are 4 children and $k = 2$, the worst-case obtained by our computational search is the point set presented in Table~\ref{tab:4part2} and Figure~\ref{fig:4part2} and yields bottleneck value of approximately 1.6829. When there are 4 children and $k = 3$, the worst-case obtained by our computational search is the point set presented in Table~\ref{tab:4part3} and Figure~\ref{fig:4part3} and yields bottleneck value of approximately 1.6330. 

\begin{table}[H]
\centering
\caption{A set of 4 points on the surface of a unit sphere such that the approximation factor for the $(5,2)$-PKRY algorithm is 1.6829.}
\label{tab:4part2}
\begin{tabular}{@{}ccc@{}}
\toprule
 $x$                   &  $y$                   & $z$                    \\ \midrule
0.67484 & -0.60944  & -0.41615 \\
-0.90907 & 0.020503   & -0.41615  \\
0.30056   & 0.85073   & 0.43118  \\
0  & 0  & 1 \\ \bottomrule
\end{tabular}
\end{table}

\begin{figure}[H]
\centering
\begin{tikzpicture}
\def\R{1} 

 \begin{axis}[view = {35}{16},xmin = -1.2, xmax = 1.2,ymin = -1.2, ymax = 1.2,zmin = -1.2, zmax = 1.2]
\addplot3[black,only marks]
coordinates {
	(0.67484 , -0.60944  , -0.41615)
	(-0.90907 , 0.020503   , -0.41615)
	(0.30056   , 0.85073   , 0.43118)
	(0  , 0  , 1)
	
};
\filldraw[ball color=white, opacity=0.3] (axis
            cs:0,0,0)  circle [radius = 1.85cm];
\end{axis}
\end{tikzpicture}
\caption{The point set given in Table~\ref{tab:4part2}.}
 \label{fig:4part2} 
\end{figure}

\begin{table}[H]
\centering
\caption{A set of 4 points on the surface of a unit sphere such that the approximation factor for the $(5,3)$-PKRY algorithm is 1.6330.}
\label{tab:4part3}
\begin{tabular}{@{}ccc@{}}
\toprule
 $x$                   &  $y$                   & $z$                    \\ \midrule

-0.70503 & 0.32904 & -0.62822\\
0.26587 & -0.92974 & -0.25475\\
0.798 & 0.60054 & -0.050429\\
-0.35884 & 0.00015484 & 0.9334 \\ \bottomrule
\end{tabular}
\end{table}

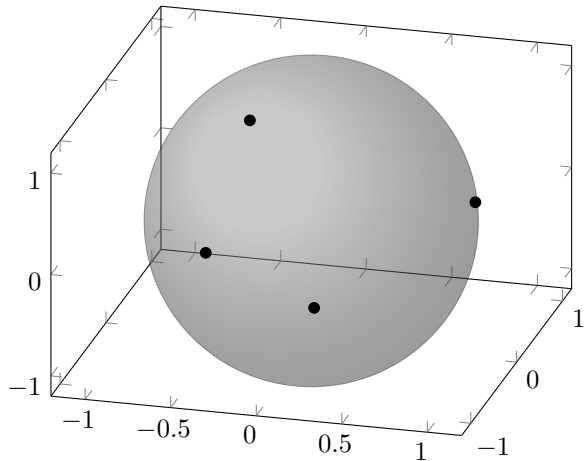
\begin{figure}[H]
\centering
\begin{tikzpicture}
\def\R{1} 

 \begin{axis}[view = {15}{32},xmin = -1.2, xmax = 1.2,ymin = -1.2, ymax = 1.2,zmin = -1.2, zmax = 1.2]
\addplot3[black,only marks]
coordinates {
	(-0.70503 , 0.32904 , -0.62822)
	(0.26587 , -0.92974 , -0.25475)
	(0.798 , 0.60054 , -0.050429)
	(-0.35884 , 0.00015484 , 0.9334)
	
};
\filldraw[ball color=white, opacity=0.3] (axis
            cs:0,0,0)  circle [radius = 2.2cm];
\end{axis}
\end{tikzpicture}
\caption{The point set given in Table~\ref{tab:4part3}.}
 \label{fig:4part3} 
\end{figure}

Finally, we performed computational experiments for the $\delta = 6$ cases. For 5 children, the worst case bottleneck values obtained were 1.6330, 1.4991 and 1.3834 for $k = 2,3$ and $4$ respectively. These are presented in Table \ref{tab:5part2}, \ref{tab:5part3} and \ref{tab:5part4}, and Figure \ref{fig:5part2}, \ref{fig:5part3} and \ref{fig:5part4}.

\begin{table}[H]
\centering
\caption{A set of 5 points on the surface of a unit sphere such that the approximation factor for the $(6,2)$-PKRY algorithm is 1.6330.}
\label{tab:5part2}
\begin{tabular}{@{}ccc@{}}
\toprule
 $x$                   &  $y$                   & $z$                    \\ \midrule

0.14108 &	-0.0031796 &	-0.98999\\
0.5301 &	-0.77315 &	-0.34818\\
0.47115 &	0.85383 &	-0.22133\\
-0.90627 &	0.00011443 &	-0.4227\\
-0.094952 &	-0.080796 &	0.9922\\ \bottomrule
\end{tabular}
\end{table}

\begin{figure}[H]
\centering
\begin{tikzpicture}
\def\R{1} 

 \begin{axis}[view = {-22}{5},xmin = -1.2, xmax = 1.2,ymin = -1.2, ymax = 1.2,zmin = -1.2, zmax = 1.2]
\addplot3[black,only marks]
coordinates {
	(0.14108 ,	-0.0031796 ,	-0.98999)
	(0.5301 ,	-0.77315 ,	-0.34818)
	(0.47115 ,	0.85383 ,	-0.22133)
	(-0.90627 ,	0.00011443 ,	-0.4227)
	(-0.094952 ,	-0.080796 ,	0.9922)
	
};
\filldraw[ball color=white, opacity=0.3] (axis
            cs:0,0,0)  circle [radius = 2.2cm];
\end{axis}
\end{tikzpicture}
\caption{The point set given in Table~\ref{tab:5part2}.}
 \label{fig:5part2} 
\end{figure}

\begin{table}[H]
\centering
\caption{A set of 5 points on the surface of a unit sphere such that the approximation factor for the $(6,3)$-PKRY algorithm is 1.4991.}
\label{tab:5part3}
\begin{tabular}{@{}ccc@{}}
\toprule
 $x$                   &  $y$                   & $z$                    \\ \midrule

0.25438 &	-0.72417 &	0.641\\
0.07693 &	0.97765 &	-0.19566\\
-0.88114 &	-0.35965 &	-0.30698\\
0.54273 &	-0.32431 &	-0.77477\\
-0.23211 &	0.08564 &	0.96891\\ \bottomrule
\end{tabular}
\end{table}

\begin{figure}[H]
\centering
\begin{tikzpicture}
\def\R{1} 

 \begin{axis}[view = {-51}{10},xmin = -1.2, xmax = 1.2,ymin = -1.2, ymax = 1.2,zmin = -1.2, zmax = 1.2]
\addplot3[black,only marks]
coordinates {
	(0.25438 ,	-0.72417 ,	0.641)
	(0.07693 ,	0.97765 ,	-0.19566)
	(-0.88114 ,	-0.35965 ,	-0.30698)
	(0.54273 ,	-0.32431 ,	-0.77477)
	(-0.23211 ,	0.08564 ,	0.96891)
	
};
\filldraw[ball color=white, opacity=0.3] (axis
            cs:0,0,0)  circle [radius = 2cm];
\end{axis}
\end{tikzpicture}
\caption{The point set given in Table~\ref{tab:5part3}.}
 \label{fig:5part3} 
\end{figure}

\begin{table}[H]
\centering
\caption{A set of 5 points on the surface of a unit sphere such that the approximation factor for the $(6,4)$-PKRY algorithm is 1.3834.}
\label{tab:5part4}
\begin{tabular}{@{}ccc@{}}
\toprule
 $x$                   &  $y$                   & $z$                    \\ \midrule

0.16008 &	0.98616 &	0.043096\\
0.90272 &	-0.084403 &	-0.42186\\
-0.94994 &	0.20809 &	-0.23307\\
-0.17824 &	-0.93847 &	-0.2958\\
0 &	0 &	1\\ \bottomrule
\end{tabular}
\end{table}

\begin{figure}[H]
\centering
\begin{tikzpicture}
\def\R{1} 

 \begin{axis}[view = {8}{16},xmin = -1.2, xmax = 1.2,ymin = -1.2, ymax = 1.2,zmin = -1.2, zmax = 1.2]
\addplot3[black,only marks]
coordinates {
	(0.16008 ,	0.98616 ,	0.043096)
	(0.90272 ,	-0.084403 ,	-0.42186)
	(-0.94994 ,	0.20809 ,	-0.23307)
	(-0.17824 ,	-0.93847 ,	-0.2958)
	(0 ,	0 ,	1)
	
};
\filldraw[ball color=white, opacity=0.3] (axis
            cs:0,0,0)  circle [radius = 2.45cm];
\end{axis}
\end{tikzpicture}
\caption{The point set given in Table~\ref{tab:5part4}.}
 \label{fig:5part4} 
\end{figure}
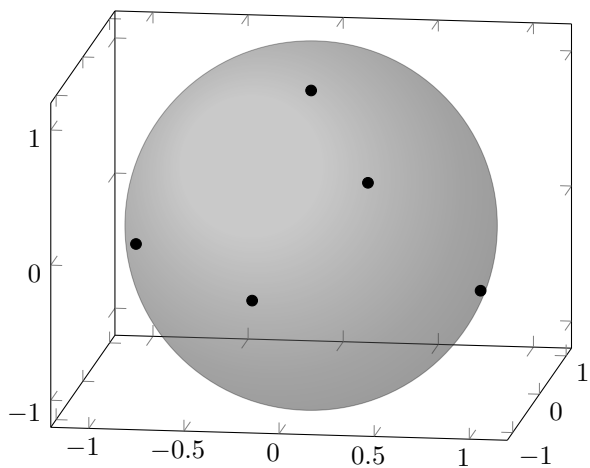

As in the previous section, we also conjecture that there is a monotonicity with respect the length of the bottleneck edge in the worst case arrangements of $m$ points, i.e., the bottleneck value of the worst-case arrangement of $m$ children for the $(\delta,k)$-PKRY algorithm is an upperbound for the bottleneck value of the worst-case arrangement of $m+1$ children for the same algorithm. We present this statement as the following conjecture, which is a stronger version of Conjecture~\ref{conj:monotone}.

\begin{conjecture} \label{conj:monotone2}
Let $S_m$ be a star with $m$ children rooted at the centre vertex, such that $S_m$ maximises the length of the longest edge in the tree given by the $(\delta,k)$-PKRY algorithm when given $S_m$ as input, where the children of $S_m$ are unit distance from the centre vertex and no pair of children have a distance of less than 1 from one another. Let $S_{m+1}$ be defined similarly, but with $m+1$ children.\\ 
Let $b$,$b'$ be the bottleneck values of the outputs of $(\delta,k)$-PKRY when performed on $S_m$ and $S_{m+1}$ respectively. Then $b \geq b'$.
\end{conjecture}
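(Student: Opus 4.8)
The plan is to prove this monotonicity by a \emph{deletion-and-repositioning} argument: starting from a worst-case $(m+1)$-child star, I would remove one child and then exploit the extra angular room to reposition the remaining $m$ children so that the $(\delta,k)$-PKRY output bottleneck is pushed back up to at least the value $b'$ attained by $S_{m+1}$. Since $S_m$ is by definition the worst $m$-child star, this would give $b \geq b'$. First I would record that the worst-case value is well defined: the set of valid $(m+1)$-child arrangements (points on the unit sphere with pairwise angular separation at least $60^\circ$) is compact, and the map sending an arrangement to its $(\delta,k)$-PKRY output bottleneck $\min_{Q}\max_i b_v(C_i)$ is continuous, so a maximiser $S_{m+1}$ exists; by Theorem~\ref{thm:equidistant} we may keep all children on the sphere throughout.

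The core of the argument is a deletion lemma in the spirit of Lemma~\ref{lem:partitionlemma}. For a fixed arrangement, removing a single child cannot \emph{increase} the quantity $\min_Q \max_i b_v(C_i)$: taking the optimal partition $Q^\ast$ of the full child set and deleting the child from whichever part contains it yields a partition of the reduced set into at most $k$ parts, and dropping a vertex from a part can only leave unchanged or decrease the bottleneck of the optimal bottleneck TSP-path through that part, so the min-over-partitions cannot increase. This already shows monotonicity under deletion, but \emph{in the wrong direction} for our purposes. The substance of the conjecture is therefore entirely contained in the repositioning step: one must show that the angular room vacated by the deleted child can always be used to move the remaining $m$ children apart so as to restore the bottleneck to at least $b'$. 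I would attempt to formalise this by choosing the deleted child to lie in a non-critical part of $Q^\ast$, so that the critical part achieving $b_v(C_{i^\ast}) = b'$ is left intact, and then arguing that the geometry of that critical configuration can be replicated among only $m$ children.

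The hard part --- and the reason this is stated as a conjecture rather than a theorem --- is that neither the optimal partition $Q^\ast$ nor the optimal bottleneck TSP-path within a part behaves stably under either deletion or repositioning: both solve combinatorial min-max problems and can change structure discontinuously as the children move, so there is no exchange or continuity lemma guaranteeing that repositioning recovers the lost bottleneck. In particular, after deleting a child the optimiser for the reduced set may redistribute the remaining children across the $k$ parts to balance the load and strictly lower the max, and it is precisely this rebalancing that the repositioning step must defeat. Since the $k=1$ specialisation of Conjecture~\ref{conj:monotone2} is exactly Conjecture~\ref{conj:monotone}, any complete proof here would in particular settle the weaker conjecture, which the difficulty of the bottleneck TSP-path problem has so far prevented. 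I would therefore expect a full proof to demand new structural control over how the worst-case bottleneck TSP-path configuration evolves with the number of children, rather than a purely local perturbation argument.
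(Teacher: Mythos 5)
First, note what the paper does with this statement: nothing. It is Conjecture~\ref{conj:monotone2}, stated and left open; the authors say explicitly that they could not prove even the weaker Conjecture~\ref{conj:monotone} ``due to the difficulty of the bottleneck TSP-path problem and the lack of assumptions that can be made in regards to how the optimal path changes when local changes are applied.'' Your proposal does not prove it either, and you say so candidly; your diagnosis of the obstruction --- that neither the optimal partition nor the optimal bottleneck TSP-path within a part admits any stability or exchange control under deletion and repositioning --- is essentially the same obstruction the authors cite. So there is no proof in the paper for you to have missed or diverged from, and your overall assessment of why this remains a conjecture is sound.

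However, the one step you present as established --- the ``deletion lemma'' --- is false, and it should not appear in any write-up. You claim that dropping a vertex from a part can only leave unchanged or decrease the bottleneck of the optimal bottleneck TSP-path through that part. But intermediate vertices act as stepping stones, and removing one can force a longer edge: place three children on a great circle of the unit sphere at angular positions $0^{\circ}$, $60^{\circ}$, $120^{\circ}$ (a valid star, since all pairwise chords are at least $1$). The optimal bottleneck path from the root $v$ through all three has bottleneck $2\sin 30^{\circ}=1$, yet after deleting the middle child every path from $v$ through the remaining two must use the chord $2\sin 60^{\circ}=\sqrt{3}$, so the value strictly increases. Since for $k=1$ (i.e.\ $\delta$-NKRY, which the paper identifies with $(\delta,1)$-PKRY) the quantity $\min_{Q}\max_{i}b_v(C_i)$ is exactly this path value, the lemma's conclusion fails for the full min-over-partitions objective, not merely part-by-part. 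Note also that deletion can strictly \emph{decrease} the value (delete one of two antipodal children and the bottleneck drops from $2$ to $1$), so deletion monotonicity fails in both directions. This is worth internalising: it rules out any attack on the conjecture via fixed-arrangement monotonicity --- including the first half of your plan --- and confirms that the entire burden falls on the repositioning step, for which, as you concede, no argument is currently known.
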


We did not obtain results for $\delta \geq 7$ due the Stirling numbers becoming too large, however our results thus far seem to provide evidence for the monotonicity statement of Conjecture~\ref{conj:monotone2} whilst also providing evidence for the intuition that increasing the number of sets in the partition improves the performance of the $(\delta,k)$-PKRY algorithm.

\section{Approximation Algorithms for the $\delta$-R3MBST Problem}
In this section, we describe and analyse ways of extending the KRY algorithm for application in 3-dimensional space under the 3-dimensional rectilinear metric, $L_1$. For two points $p_1 = (x_1,y_1,z_1)$ and $p_2 = (x_2,y_2,z_2)$ in $R^3$, the rectilinear or Manhattan distance between the points is \[d_R(p_1,p_2) = |x_1-x_2|+|y_1-y_2|+|z_1-z_2|.\]
Since all other aspects of the problem besides the distance metric remain unchanged, our algorithms will be identical in implementation for the rectilinear and Euclidean version of the problem.\\~\\
First we consider the worst case performances of the $\delta$-NKRY algorithm in rectilinear space. In the previous section, it was determined that the worst case instances for the $\delta$-NKRY algorithm in Euclidean space could be assumed to be stars with $\delta -1$ children, each child being equidistant from the centre vertex. This assumption is reliant on Lemma~\ref{lem:sidez}, and so in order to make equivalent assumptions for the rectilinear  version of the problem, we need an analogous result in rectilinear space. We present such a result as the following lemma.

\begin{lemma}
Let $v,v_1,v_2$ be points in $R^3$ such that $d_R(v_1,v_2) \geq \max(d_R(v,v_1),$ $d_R(v,v_2))$. Then moving the point $v_1$ away from $v$ in the direction of the line through $v$ and $v_1$ will not decrease the distance $d_R(v_1,v_2)$.
\end{lemma}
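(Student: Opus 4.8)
The plan is to reduce the statement to a one–dimensional convexity fact about the function that records $d_R(v_1,v_2)$ as $v_1$ slides outward along its ray from $v$. First I would translate so that $v$ sits at the origin of $R^3$ and write $\mathbf{a}=v_1$ and $\mathbf{b}=v_2$ as coordinate vectors $\mathbf{a}=(a_1,a_2,a_3)$, $\mathbf{b}=(b_1,b_2,b_3)$. Moving $v_1$ away from $v$ along the line through $v$ and $v_1$ corresponds exactly to replacing $v_1$ by $t\mathbf{a}$ for a scaling factor $t\ge 1$. Hence it suffices to study
\[
g(t)=d_R(t\mathbf{a},\mathbf{b})=\sum_{i=1}^{3}\lvert t a_i-b_i\rvert
\]
and to show that $g(t)\ge g(1)$ for every $t\ge 1$.

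The key observation is that each summand $\lvert t a_i-b_i\rvert$ is the absolute value of an affine function of $t$ and is therefore convex, so $g$ is convex on $[0,\infty)$ as a finite sum of convex functions. I would then record the two endpoint values that the hypothesis controls, namely $g(1)=d_R(v_1,v_2)$ and $g(0)=\lVert\mathbf{b}\rVert_1=d_R(v,v_2)$. The assumption $d_R(v_1,v_2)\ge\max\!\big(d_R(v,v_1),d_R(v,v_2)\big)$ yields in particular $g(1)\ge g(0)$, which is the only part of the $\max$ that the argument actually consumes.

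Finally I would invoke the monotone–slope property of convex functions: for $0<1<t$, convexity gives
\[
\frac{g(1)-g(0)}{1-0}\le\frac{g(t)-g(1)}{t-1}.
\]
Since the left-hand side is non-negative by $g(1)\ge g(0)$, the right-hand side is non-negative, and because $t-1>0$ this forces $g(t)\ge g(1)$. Translating back, pushing $v_1$ outward along its ray from $v$ does not decrease $d_R(v_1,v_2)$, which is precisely the $L_1$ counterpart of the cosine-rule convexity argument used in Lemma~\ref{lem:sidez}. There is no genuine analytic difficulty here; the only point requiring care is conceptual, namely recognising that the prescribed radial motion is exactly a scaling $t\mapsto t\mathbf{a}$ and that convexity of the $\ell_1$ norm along a line, together with the single endpoint inequality $g(1)\ge g(0)$, already settles the claim. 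This lets me avoid the distracting case analysis on the signs of the coordinates $t a_i-b_i$ or on the ordering of their breakpoints that a direct piecewise-linear computation would otherwise demand.
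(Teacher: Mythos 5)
Your proof is correct, and it takes a genuinely different (more analytic) route than the paper. The paper's proof is geometric: it considers the set $\{p : d_R(p,v_2) \le d_R(v,v_2)\}$, which in the rectilinear metric is an octahedron centred at $v_2$; the hypothesis places $v$ on the boundary of this octahedron and $v_1$ on or outside it, and convexity of the octahedron is then invoked to argue that moving $v_1$ outward along the ray from $v$ through $v_1$ keeps it on or outside the octahedron, so that $v_1$ ``would not be moving closer to the centre.'' Note that the paper, like you, consumes only the single inequality $d_R(v_1,v_2) \ge d_R(v,v_2)$ from the $\max$ hypothesis. The two arguments rest on the same underlying fact --- convexity of the $L_1$ norm --- but your function-level version is the more airtight of the two: the paper's final step, passing from ``$v_1$ stays on or outside the ball of radius $d_R(v,v_2)$'' to ``the distance $d_R(v_1,v_2)$ does not decrease,'' is exactly the chord-slope inequality you invoke for the convex function $g$, and the paper leaves that step at an intuitive geometric level (staying outside a ball of radius $d_R(v,v_2)$ does not by itself give monotonicity once $d_R(v_1,v_2) > d_R(v,v_2)$). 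What your approach buys is a self-contained argument that generalises verbatim to any norm in any dimension, making it the exact $L_1$ counterpart of Lemma~\ref{lem:sidez}; what the paper's buys is a picture of why the octahedral ball plays the role of the Euclidean disc, which is the geometric intuition driving that section.
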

\begin{proof}
Let $l_2 = d_R(v,v_2)$. Consider the ball around $v_2$ with radius $l_2$, i.e., consider the set of points $\{p:d_R(p,v_2) \leq l_2 \}$. In 3-dimensional rectilinear space, such a set defines an octahedron centred at $v_2$. Since $d_R(v,v_2) = l_2$, the point $v$ will lie on the boundary of the octahedron, and since $d_R(v_1,v_2) \geq \max(d_R(v,v_1),d_R(v,v_2))$, $v_1$ must lie either outside or on the boundary the octahedron. Then, by convexity, if we were to move $v_1$, away from $v$ in the direction of the line through $v$ and $v_1$, then we would either be moving $v_1$ along the boundary of the octahedron or away from the octahedron. In either case, $v_1$ would not be moving closer to the centre of the octahedron and so the distance $d_R(v_1,v_2)$ would not decrease. 
\end{proof}

Since we know from Section~\ref{sec:Hadwiger} that when $\delta \geq 14$, there always exists an RMBST whose maximum degree is no more than $\delta$, we could mimic the approach of the previous section to find the worst case point arrangements for the $\delta$-NKRY. That is, for each $k \in \{2,\dots ,13\}$ we could attempt to find a set of $k$ points arranged in the surface of a unit ball (in this case an octahedron) such that the longest edge in a minimum bottleneck TSP-path through the points is of maximum length and the distance between every pair of points is at least 1. However, it turns out that for this rectilinear variant of the problem, such a strategy is unnecessary. This is because the length of the bottleneck edge does not decrease monotonically as $k$ increases, as appeared to be the case for the Euclidean version. Instead, it always possible to arrange the sets of points such that the longest edge is of length 2. We give the points in the following theorem.

\begin{theorem}
The worst case performance ratio for the $\delta$-NKRY algorithm in 3-dimensional rectilinear space is 2, for every $\delta \in \{3, \dots, 13\}$.
\end{theorem}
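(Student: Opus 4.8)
The plan is to establish the value $2$ exactly by proving a matching upper and lower bound. For the upper bound, I would note that $2$ can never be exceeded, in any metric space and for every $\delta$: whenever $\delta$-NKRY swaps out an edge $(v,v_i)$ in favour of an edge $(v_i,v_{i+1})$ between two children, the triangle inequality gives $d_R(v_i,v_{i+1}) \le d_R(v,v_i) + d_R(v,v_{i+1}) \le 2B$, where $B$ is the bottleneck of the starting MST, while every retained MST edge has length at most $B$. Hence the output bottleneck is at most $2B$, which generalises the known $2$-factor bound for the $3$-NKRY algorithm. It then remains only to construct, for each $k = \delta - 1 \in \{2,\dots,12\}$, a valid star whose children lie on the unit octahedron and for which the optimal bottleneck TSP-path starting at the root has longest edge equal to $2$; by Theorem~\ref{thm:equidistant} such equidistant stars suffice to witness the worst case, and the root-to-child edges all have length $1$ so the star is a genuine MST of bottleneck $1$.

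The key geometric fact I would exploit is the characterisation of diametral pairs: for two points $p,q$ on the surface of the unit octahedron one has $d_R(p,q) = 2$ exactly when $p_iq_i \le 0$ for every coordinate $i$. In particular, any point lying on the positive-octant face $\{x,y,z \ge 0,\ x+y+z=1\}$ is at distance exactly $2$ from any point on the antipodal face $\{x,y,z \le 0,\ x+y+z=-1\}$. The plan is therefore to partition the $k$ children into two non-empty clusters, placing $a$ of them on the positive face and $b = k-a$ on the negative face with $1 \le a,b \le 6$, which is possible for every $2 \le k \le 12$. Since every cross-cluster distance equals $2$, which is the rectilinear diameter of the ball, and both clusters are non-empty, any Hamiltonian path through the children must traverse at least one cross-cluster edge, forcing its longest edge to be exactly $2$. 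Together with the length-$1$ root edge this shows the optimal bottleneck path has bottleneck $2$, giving a ratio of $2$ against the MST bottleneck of $1$.

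The main obstacle is ensuring that each face can host as many as six children while keeping all within-cluster distances at least $1$, which is required both for the star to be valid and for the argument to reach $k$ as large as $12$. I would resolve this by placing on each face its three vertices together with its three edge-midpoints, e.g. $\{(1,0,0),(0,1,0),(0,0,1),(\tfrac12,\tfrac12,0),(\tfrac12,0,\tfrac12),(0,\tfrac12,\tfrac12)\}$ on the positive face and the coordinate negations of these on the negative face, and then verifying by direct computation that every within-face pair sits at distance exactly $1$ or $2$. Choosing the appropriate sub-collection of these twelve points for each value of $k$ yields the explicit point sets promised in the statement, and the upper and lower bounds together establish that the worst-case ratio equals $2$ throughout $\delta \in \{3,\dots,13\}$.
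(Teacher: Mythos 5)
Your proposal is correct, and it rests on the same geometric mechanism as the paper's own proof: place children on the surface of the unit octahedron, pairwise at distance at least $1$, arranged so that every bottleneck TSP-path through them is forced to use an edge of length $2$ (a rectilinear diameter). The execution differs, though, in ways worth recording. The paper exhibits a \emph{single} configuration of $13$ points (the six octahedron vertices plus seven edge midpoints), chosen so that the distinguished point $(0,0,1)$ has every other child in the half-space $z \le 0$ and is therefore at distance exactly $2$ from all of them; since a $13$-child star triggers a swap for every $\delta \le 13$, this one instance serves all cases simultaneously. You instead build a separate star for each $\delta$, with exactly $\delta-1$ children split between two antipodal faces, and your coordinate-wise characterisation of diametral pairs ($p_iq_i \le 0$ in every coordinate) is a clean, explicit statement of the fact the paper uses only implicitly. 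Your per-$\delta$ construction has a genuine advantage: a star with $\delta-1$ children is itself a feasible $\delta$-MBST, so the optimum equals $1$ trivially and the ratio of $2$ is immediate, whereas in the paper's configuration the $13$-child star is degree-infeasible for $\delta < 13$, and one must additionally check (the paper does not) that the point set admits a unit-length Hamiltonian path so that the optimum is still $1$ --- fortunately it does. You also supply the matching upper bound of $2$ explicitly via the triangle inequality, which the paper's proof leaves to the earlier-cited $2$-factor result for Algorithm~\ref{alg:khuller}, a result stated only for $\delta = 3$. Two small imprecisions on your side: the performance ratio should be measured against the optimal $\delta$-MBST value rather than ``the MST bottleneck'' (harmless here, since your star's degree-feasibility makes the two coincide), and Theorem~\ref{thm:equidistant} is a Euclidean statement whose rectilinear analogue is a separate lemma in the paper --- but your lower-bound construction never actually needs it, since you exhibit the worst-case star directly.
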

\begin{proof}
Suppose we have a unit octahedron in 3-dimensional rectilinear space, whose centre lies at the origin. The following point set on the surface of the octahedron consists of 13 points such that minimum bottleneck path through the points must use an edge of length 2, and the distance between every pair of points is at least 1;

\begin{table}[H]
\centering
\begin{tabular}{cc}
$(0.5,0.5,0)$ & $(0,0,1)$ \\
$(-0.5,0.5,0)$ & $(1,0,0)$ \\
$(-0.5,-0.5,0)$ & $(0,1,0)$ \\
$(0.5,-0.5,0)$ & $(-1,0,0)$ \\
$(0.5,0,-0.5)$ & $(0,-1,0)$ \\
$(0,0.5,-0.5)$ & $(0,0,-1)$ \\
$(-0.5,0,-0.5)$ & 
\end{tabular}
\end{table}

This point set consists of all 6 vertices of the octahedron and midpoints of 7 of the 12 edges of the octahedron. In this arrangement, the point $(0,0,1)$ is always incident to the bottleneck edge. Note that we could extend this set to 14 points by adding another midpoint, $(0,-0.5,-0.5)$, and the point set would still have the same properties as before. The 14 point set is shown in Figure~\ref{fig:points14}.
\end{proof}

\begin{figure}[ht!]
\centering
\begin{tikzpicture}
  
  \fill[fill=black] (0,2,0) circle (2pt);
  \fill[fill=black] (2,0,0) circle (2pt);
  \fill[fill=black] (0,0,2) circle (2pt);
  \fill[fill=black] (-2,0,0) circle (2pt);
  \fill[fill=black] (0,0,-2) circle (2pt);
  \fill[fill=black] (1,0,1) circle (2pt);
  \fill[fill=black] (-1,0,1) circle (2pt);
  \fill[fill=black] (-1,0,-1) circle (2pt);
  \fill[fill=black] (1,0,-1) circle (2pt);
  \fill[fill=black] (1,-1,0) circle (2pt);
  \fill[fill=black] (0,-1,1) circle (2pt);
  \fill[fill=black] (-1,-1,0) circle (2pt);
  \fill[fill=black] (0,-1,-1) circle (2pt);
  \fill[fill=black] (0,-2,0) circle (2pt);
  
  \draw (0,2,0) -- (2,0,0);
  \draw (0,2,0) -- (0,0,2);
  \draw (0,2,0) -- (-2,0,0);
  \draw (0,2,0) -- (0,0,-2);
  
  \draw (2,0,0) -- (0,0,2);
  \draw (0,0,2) -- (-2,0,0);
  \draw (-2,0,0) -- (0,0,-2);
  \draw (0,0,-2) -- (2,0,0);
  
  \draw (0,-2,0) -- (2,0,0);
  \draw (0,-2,0) -- (0,0,2);
  \draw (0,-2,0) -- (-2,0,0);
  \draw (0,-2,0) -- (0,0,-2);
  

\end{tikzpicture}
\caption{An arrangement of 14 points on a unit octahedron such that the minimum rectilinear bottleneck path length is 2.}
 \label{fig:points14} 
\end{figure}
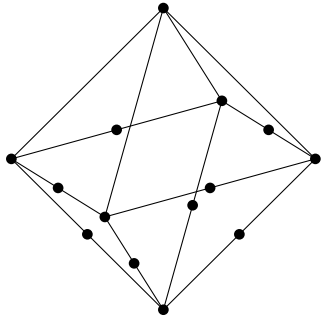 

Finally, we analyse the application of the generalised version of the KRY algorithm, the $(\delta, k)$-PKRY algorithm from Section~\ref{EPKRY}, to the rectilinear version of the problem. As in the case of $\delta$-NKRY, we also obtain a result which eliminates the need to consider certain cases, although it does not cover all the cases.

\begin{theorem}\label{thm:rect6}
The worst case performance ratio for the $(\delta,k)$-PKRY algorithm in 3-dimensional rectilinear space is 2, for every $\delta \in \{3, \dots, 7\}$, where $k \in [1,\delta -2 ]$.
\end{theorem}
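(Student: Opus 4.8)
The plan is to pin the worst-case ratio at exactly $2$ by proving a matching upper and lower bound. For the upper bound I would reuse the triangle-inequality argument from the $\delta$-NKRY case: every edge that $(\delta,k)$-PKRY swaps in joins two children of a common root $v$, and since each such child is an MST-neighbour of $v$ it lies at $d_R$-distance at most $B$ from $v$, where $B$ is the bottleneck of the starting MST. Hence each swapped-in edge has length at most $2B$ by the triangle inequality, and because no edge is ever swapped out after being inserted, the output bottleneck is at most $2B$. Since the optimal $\delta$-R3MBST value $\mathrm{OPT}_\delta$ satisfies $\mathrm{OPT}_\delta \ge B$ (the MST bottleneck is minimal among all spanning trees), the ratio never exceeds $2$. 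This also follows from Lemma~\ref{lem:partitionlemma} together with the corresponding $\delta$-NKRY result, since $(\delta,k)$-PKRY is never worse than $(\delta,1)$-PKRY, which is exactly $\delta$-NKRY.

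For tightness I would exhibit, for each $\delta\in\{3,\dots,7\}$ and each $k\in[1,\delta-2]$, a single worst-case star forcing ratio $2$; by Theorem~\ref{thm:equidistant} and the reduction of the worst case to one root with equidistant children, treating one star suffices. I would place the root $v$ at the origin and take its children to be any $\delta-1$ of the six vertices $(\pm1,0,0),(0,\pm1,0),(0,0,\pm1)$ of the unit octahedron. Each such vertex is at $d_R$-distance $1$ from $v$, and a direct check shows any two of them are at $d_R$-distance exactly $2$: for points on the octahedron one has $d_R(p,q)=2$ precisely when in every coordinate the two entries have opposite or zero signs, which holds for all pairs of distinct coordinate vertices. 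Consequently the star of unit edges is the unique bottleneck-$1$ spanning tree, its centre has degree $\delta-1\le\delta$, and therefore $\mathrm{OPT}_\delta=1$.

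It then remains to show the algorithm cannot beat bottleneck $2$ on this instance. The swap is triggered because the root has $\delta-1$ children, meeting the threshold $\delta-1$ in Algorithm~\ref{alg:PKRY}. Now consider any partition of the $\delta-1$ children into $k$ blocks. Since $k\le\delta-2<\delta-1$, the pigeonhole principle forces some block to contain at least two children. The optimal bottleneck TSP-path for that block starts at $v$ and visits its members in sequence, so it must traverse at least one child-to-child edge, and every such edge has length $2$. Thus $b_{Q^*}=2$ for every admissible partition, the output bottleneck is $2$, and the ratio is $2/\mathrm{OPT}_\delta=2$.

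The main obstacle, and the reason the statement stops at $\delta=7$, is the geometric-combinatorial fact that at most six points on the unit octahedron can be pairwise at $d_R$-distance $2$: the distance-$2$ condition permits, in each of the three coordinates, at most one point with a positive entry and one with a negative entry, capping the total at six. Hence the construction supplies the required $\delta-1$ mutually distant children only while $\delta-1\le 6$, i.e. $\delta\le 7$; for larger $\delta$ one can no longer guarantee that every multi-element block of a $k$-partition is forced onto a length-$2$ edge, so a genuinely different construction would be needed. I would therefore concentrate the verification on the pairwise-distance-$2$ property and the pigeonhole step, since together they certify that no admissible partition lets the algorithm escape a bottleneck of $2$.
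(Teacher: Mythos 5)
Your proof is correct, and it rests on the same construction the paper uses: a star rooted at the centre of the unit octahedron whose children are octahedron vertices, any two of which are at rectilinear distance exactly $2$, so that every admissible edge swap is forced onto a length-$2$ edge. The differences are in rigour, and one of them genuinely matters. The paper's proof takes all six vertices as children for every $\delta\in\{3,\dots,7\}$; for $\delta\in\{3,4,5\}$ that star has centre degree $6>\delta$ and is therefore not itself a feasible solution, and in fact every spanning tree of those seven points with maximum degree at most $5$ must already contain a child-to-child edge of length $2$, so the degree-constrained optimum is $2$ and that instance only certifies a ratio of $1$ against it. Your choice of exactly $\delta-1$ vertices keeps the star feasible, gives $\mathrm{OPT}=1$, and is what actually establishes the ratio $2$ for the smaller values of $\delta$; in this respect your argument repairs a weakness in the paper's own proof. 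You also supply two steps the paper leaves implicit: the pigeonhole observation that any partition of $\delta-1$ children into $k\le\delta-2$ blocks has a block of size at least two (hence a forced length-$2$ edge), and the matching upper bound (every swapped-in edge joins two points within distance $B$ of a common root, so the output bottleneck is at most $2B\le 2\,\mathrm{OPT}$, or alternatively Lemma~\ref{lem:partitionlemma} combined with the $\delta$-NKRY bound), without which the assertion that the worst case \emph{equals} $2$ would only be a lower bound. Finally, your cap of six on the number of pairwise-distance-$2$ points on the octahedron---at most one point per coordinate-sign slot---is a correct and clean explanation of why this construction cannot extend past $\delta=7$, consistent with the paper's subsequent partial results for $\delta\ge 8$.
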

\begin{proof}
Consider the point set containing 6 points in which each point is placed at a different vertex of the unit octahedron. Every point in this configuration is diametrically opposite to every other point, hence $(\delta,k)$-PKRY will yield an approximation factor of 2 for $\delta \in \{3, \dots, 7\}$ and $k \in [1,\delta -2 ]$ when given the rooted star implied by this configuration.   
\end{proof}

By adding points to the configuration of Theorem~\ref{thm:rect6}, we obtain some results for larger values of $\delta$.

\begin{proposition}\label{lem:rect7}
The worst case performance ratio for the $(8,k)$-PKRY algorithm in 3-dimensional rectilinear space is 2, for $k \in \{2,3,4 \}$.
\end{proposition}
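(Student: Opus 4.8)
The plan is to reuse the six octahedron-vertex configuration of Theorem~\ref{thm:rect6} and enlarge it by a single point, so that the root has exactly $7$ children --- precisely the number needed to trigger an edge swap in $(8,k)$-PKRY, since a swap occurs only when the number of children $c$ satisfies $c \geq \delta - 1 = 7$. I would place the root at the origin and take as children the six vertices $A=(1,0,0)$, $B=(-1,0,0)$, $C=(0,1,0)$, $D=(0,-1,0)$, $E=(0,0,1)$, $F=(0,0,-1)$ of the unit octahedron together with the edge-midpoint $M=(\frac12,\frac12,0)$. The first step is to check admissibility: $M$ satisfies $d_R(\mathbf{0},M)=1$, so all children lie on the octahedron surface, and every pair of children is at distance at least $1$ (the six vertices are pairwise at distance $2$, while $M$ is at distance $1$ from $A$ and $C$ and at distance $2$ from $B,D,E,F$). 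Hence the star joining the root to the seven children is an MST of bottleneck length $1$.

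The heart of the argument is a counting claim: for $k \in \{2,3,4\}$, every partition $Q$ of the seven children into $k$ blocks satisfies $b_Q = 2$. The decisive observation to record is that the only child-to-child distances strictly below $2$ are $d_R(M,A)=d_R(M,C)=1$; all other child pairs are at distance $2$, which is also the largest distance attainable between two points of the unit octahedron. It follows that a block admits a bottleneck TSP-path from the root with longest edge below $2$ if and only if it is a singleton or a subset of $\{A,M,C\}$ routed through $M$, the largest such block being $\{A,M,C\}$ itself (via the path from $v$ through $A$, $M$, $C$). Because $M$ lies in exactly one block, at most one block can be a non-singleton of this inexpensive kind, so the blocks with longest edge below $2$ can jointly cover at most $3+(k-1)$ children. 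For $k \leq 4$ this is at most $6<7$, so in every partition some block contains two octahedron vertices and its path is forced to use an edge of length $2$; thus $b_Q=2$ for all partitions, and $(8,k)$-PKRY must insert an edge of length $2$.

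Putting the pieces together, the algorithm returns a tree of bottleneck $2$ against an MST of bottleneck $1$, giving ratio $2$; and since $2$ is the diameter of the unit octahedron this ratio cannot be exceeded, so it is the worst case. I expect the partition-counting step to be the main obstacle, as one must argue precisely that inexpensive blocks are confined to subsets of $\{A,M,C\}$ and that only one of them can be non-trivial. This is exactly what restricts the conclusion to $k \in \{2,3,4\}$: for $k=5$ the bound $3+(k-1)=7$ is met with equality by the partition $\{A,M,C\},\{B\},\{D\},\{E\},\{F\}$, which escapes with $b_Q=1$, so the seven-point construction no longer forces ratio $2$ and larger $k$ fall outside the statement.
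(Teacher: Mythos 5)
Your proof is correct and takes essentially the same approach as the paper: augment the six-vertex octahedron configuration of Theorem~\ref{thm:rect6} with a single edge-midpoint (yours at $(\tfrac12,\tfrac12,0)$, the paper's at $(0.5,0,0.5)$, equivalent by symmetry), then observe that any path avoiding length-$2$ edges covers at most the three points incident to the midpoint, so $k \leq 4$ blocks cannot cover all seven children without a length-$2$ edge. Your explicit counting bound $3+(k-1)$ and the $k=5$ remark merely spell out what the paper's terser proof leaves implicit.
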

\begin{proof}
Consider the point set in Theorem~\ref{thm:rect6} on the surface of a unit octahedron. If we add a point at coordinates $(0.5,0,0.5)$, then the largest number of vertices we can have in a path that does not use an edge of length 2 is 3 (it contains the vertices $(0,0,1),(0.5,0,0.5)$ and $(1,0,0)$). Hence if $k \leq 4$, the $(8,k)$-PKRY algorithm must use an edge of length 2. 
\end{proof}

\begin{proposition}\label{lem:rect8}
The worst case performance ratio for the $(9,k)$-PKRY algorithm in 3-dimensional rectilinear space is 2, for $k \in \{2,3 \}$.
\end{proposition}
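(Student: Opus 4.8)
The plan is to follow the pattern of Theorem~\ref{thm:rect6} and Proposition~\ref{lem:rect7} and exhibit a single star on the unit octahedron that forces $(9,k)$-PKRY to insert an edge of length $2$ whenever $k\in\{2,3\}$. Since $\delta-1=8$, the algorithm performs a swap exactly when the root has at least eight children, so I would take the six octahedron vertices $(\pm1,0,0),(0,\pm1,0),(0,0,\pm1)$ together with the midpoints of two non-adjacent edges, say $(0.5,0,0.5)$ and $(-0.5,0,-0.5)$, giving exactly eight children with the root at the origin. The first step is the routine verification that this is a valid star: each child lies on the unit octahedron and so is at rectilinear distance $1$ from the root, and a direct check of the pairwise distances shows each is at least $1$, so the radial edges constitute an MST of bottleneck $1$.

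The heart of the argument is combinatorial. Because the root is at distance $1$ from every child, the path traversing a single part begins at the root and its bottleneck equals the longest child--child edge it uses; hence a part avoids a length-$2$ edge if and only if its children admit a spanning path using only length-$1$ edges. The key geometric fact I would record is that on this eight-point set every pair of children is at rectilinear distance either $1$ or $2$, so avoiding a length-$2$ edge across the whole swap is equivalent to covering the \emph{length-$1$ graph} (children adjacent when at distance $1$) by $k$ vertex-disjoint paths. I would then read off the components of this graph: each midpoint is at distance $1$ only from the two octahedron vertices of its own edge, producing two disjoint three-vertex paths, while $(0,1,0)$ and $(0,-1,0)$ are at distance $2$ from every other child and are therefore isolated. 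Thus the length-$1$ graph has precisely four connected components, so every path cover needs at least four paths.

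The conclusion is then immediate: for $k\in\{2,3\}$ we have $k<4$, so no admissible $k$-partition can avoid a length-$2$ edge, and because the algorithm selects the partition minimising the longest inserted edge, the output bottleneck is forced to $2$ against an MST bottleneck of $1$. Together with the standard upper bound of $2$ implicit in Theorem~\ref{thm:rect6} --- every inserted edge joins two points of the unit octahedron and hence has length at most $2$ --- this pins the worst-case ratio at exactly $2$. The count of four components is also what makes the range sharp: with $k=4$ one could take one path per component and avoid every long edge, which is why the statement stops at $k=3$.

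The main obstacle is the lower bound on the path cover, and the delicate point within it is verifying that $(0,1,0)$ and $(0,-1,0)$ stay isolated once the midpoints are added, since this is exactly what keeps the component count at four and rules out $k=3$. A cruder bound that only limits each part to three children (the longest length-$1$ path) would not suffice here, just as it does not by itself settle the boundary case in Proposition~\ref{lem:rect7}; the path-cover viewpoint is what makes the obstruction precise.
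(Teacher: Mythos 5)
Your proof is correct and takes essentially the same approach as the paper: the paper likewise augments the seven-point configuration of Proposition~\ref{lem:rect7} with one additional edge midpoint (it chooses $(-0.5,0,0.5)$, so that the unit-distance graph on the eight children consists of one five-vertex path plus three isolated vertices, again four components) and concludes that any partition into $k \leq 3$ parts must use an edge of length $2$. Your variant with the midpoint $(-0.5,0,-0.5)$, giving two three-vertex paths and two isolated children, together with the explicit path-cover formulation, is an equally valid instance and if anything makes the counting cleaner.
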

\begin{proof}
Consider the point set with seven points on the surface of the unit octahedron from Proposition~\ref{lem:rect7}. If we add the point $(-0.5,0,0.5)$ to the set, then path that uses the largest number of vertices and no edges of length 2 is the path through $(-1,0,0),(-0.5,0,0.5),(0,0,1),(0.5,0,0.5),(1,0,0)$ which uses 5 vertices. Hence if $k \leq 3$, the $(9,k)$-PKRY algorithm must use an edge of length 2. 
\end{proof}

\begin{proposition}\label{lem:rect10}
The worst case performance ratio for the $(\delta,2)$-PKRY algorithm in 3-dimensional rectilinear space is 2, for $\delta \in \{10,11\}$.
\end{proposition}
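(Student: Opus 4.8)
The plan is to continue the incremental construction used in Propositions~\ref{lem:rect7} and~\ref{lem:rect8}: start from the six vertices of the unit octahedron (which are pairwise at rectilinear distance $2$, as exploited in Theorem~\ref{thm:rect6}) and augment them with edge-midpoints clustered around a single pole. Concretely, I would take the point set consisting of the six vertices $(\pm 1,0,0),(0,\pm 1,0),(0,0,\pm 1)$ together with the four midpoints $(\pm 0.5,0,0.5)$ and $(0,\pm 0.5,0.5)$, giving $10$ children on the surface of the octahedron. A short check shows every pair of these points is at distance at least $1$ (each midpoint is at distance $1$ from its two incident vertices and from the other three midpoints, and at distance $2$ from everything else), so the configuration is a legitimate worst-case star. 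Since the swap step of $(\delta,k)$-PKRY fires once the root has at least $\delta-1$ children, this single $10$-point star triggers an edge swap for both $\delta=10$ (needing $\ge 9$ children) and $\delta=11$ (needing $\ge 10$ children); for the $\delta=10$ case one may equally drop one midpoint and argue with $9$ children.

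Next I would analyse the graph $H$ on the children whose edges join pairs at distance strictly less than $2$ (equivalently, at distance exactly $1$), since a class of the partition avoids a length-$2$ edge precisely when it admits a path in $H$ starting from the root. Two features of $H$ drive the argument. First, the pole $(0,0,-1)$ is at distance $2$ from every other chosen point, so it is isolated in $H$ and must form a path on its own. Second, each of the four outer vertices $(\pm 1,0,0),(0,\pm 1,0)$ has exactly one neighbour in $H$, namely its incident midpoint, so each of these four vertices has degree $1$ and must therefore be an endpoint of whichever path contains it.

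The conclusion then follows from a count of available path endpoints. With $k=2$ we have only two paths, and one of them is forced to be the singleton $(0,0,-1)$; the remaining single path would then have to cover all four degree-$1$ outer vertices, which is impossible since a single path offers only two endpoints. Hence no admissible partition into $k=2$ classes can avoid an edge of length $2$, and since the optimal bottleneck value for this star is $1$, the swap forces a bottleneck of $2$, giving the approximation ratio $2$ for $\delta\in\{10,11\}$ with $k=2$. The step I expect to require the most care is exactly this lower bound on the number of paths: one must confirm that the mutual distance-$1$ adjacencies among the four midpoints do not open an alternative route letting a single path absorb extra vertices. The saving observation is that, although the midpoints are interconnected, none of them is adjacent to more than one outer vertex, so the four outer vertices genuinely retain degree $1$ in $H$ and the endpoint count cannot be circumvented.
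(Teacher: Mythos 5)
Your proof is correct, and it follows the same overall strategy as the paper -- augment the six octahedron vertices with four edge-midpoints to get a $10$-child star whose distance-$1$ graph cannot be covered by two root-anchored paths -- but the configuration and the combinatorial obstruction are genuinely different. The paper literally continues the incremental construction of Propositions~\ref{lem:rect7} and~\ref{lem:rect8}: it adds $(0.5,0,-0.5)$ and $(-0.5,0,-0.5)$, so that all four midpoints lie in the $xz$-plane. The eight coplanar points then form a traceable octagon in the distance-$1$ graph, while the two off-plane poles $(0,\pm 1,0)$ are isolated; the obstruction is simply that two isolated vertices would each require their own singleton class, which is impossible with $k=2$ and eight further children to place. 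You instead cluster the midpoints around the north pole, which isolates only $(0,0,-1)$ but forces the four equatorial vertices to have degree $1$ in $H$, and your obstruction is an endpoint count: the singleton class is consumed by the isolated pole, and the one remaining path has only two endpoints for four degree-$1$ vertices. Both arguments are valid; the paper's configuration buys economy (a one-line extension of the previous propositions, with the longest length-$<2$ path exhibited explicitly), while yours buys a more self-contained and explicitly graph-theoretic justification -- in particular, you spell out why the midpoint--midpoint adjacencies cannot rescue a single path, a point the paper leaves implicit in its claim about the maximum path length. One small note: like the paper, you establish only the lower bound of $2$ for this star; the matching upper bound is inherited from the general $2$-factor guarantee of the KRY-type edge swaps, so your proof sits at exactly the paper's level of completeness.
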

\begin{proof}
Consider the point set with eight points on the surface of the unit octahedron from Proposition~\ref{lem:rect8}. If we add the points $(0.5,0,-0.5)$ and $(-0.5,0,-0.5)$ to the set, then path that uses the largest number of vertices and no edges of length 2 is the path through $(-1,0,0)$, $(-0.5,0,0.5)$, $(0,0,1)$, $(0.5,0,0.5)$, $(1,0,0)$, $(0.5,0,-0.5)$, $(0,0,-1)$, $(-0.5,0,-0.5)$ which uses 8 vertices. Hence if $k \leq 2$, both $(10,k)$-PKRY and $(11,k)$-PKRY must use an edge of length 2. 
\end{proof}

Since so many values of $\delta$ and $k$ can be proven to yield the same approximation ratio of 2, the next obvious question is whether or not this trend continues for the other values of $\delta$ and $k$. Whilst prohibitively large Stirling numbers prevented us from investigating all the remaining cases, we did manage to find a case for which our computational search was unable to produce a local arrangement of points that yields an approximation ratio of 2 for the $(\delta,k)$-PKRY algorithm.\\~\\
When $\delta = 8$ and $k = 6$, the arrangement of 7 points on the unit octahedron found by our search to have the greatest approximation factor for the $(8,6)$-PKRY algorithm is presented in Table~\ref{tab:8part6Rec}. The approximation ratio given by this set is approximately 1.2732.

\begin{table}[H]
\centering
\caption{A set of 7 points on the surface of a unit octahedron such that the approximation factor for the $(8,6)$-PKRY algorithm is 1.2732.}
\label{tab:8part6Rec}
\begin{tabular}{@{}ccc@{}}
\toprule
 $x$                   &  $y$                   & $z$                    \\ \midrule
0.30255 &	0.33407 &	0.36338  \\
-0.024044 &	0.70272 &	-0.27324  \\
-0.079598 &	0.010543 &	-0.90986 \\
0.8282 &	-0.12673 &	0.04507  \\
-0.57261 &	-0.064006 &	0.36338  \\
-0.19157 &	-0.53519 &	-0.27324  \\
0 &	0 &	1 \\ \bottomrule
\end{tabular}
\end{table}

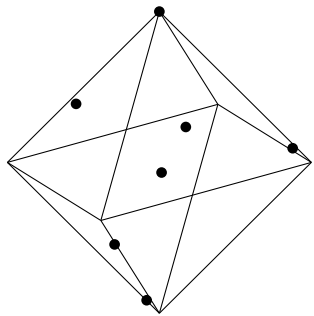
\begin{figure}[ht!]
\centering
\begin{tikzpicture}
  
  \fill[fill=black] (0.6051 ,   0.7268  ,  0.6681) circle (2pt);
  \fill[fill=black] (-0.0481  , -0.5465  ,  1.4054) circle (2pt);
  \fill[fill=black] (-0.1592 ,  -1.8197  ,  0.0211) circle (2pt);
  \fill[fill=black] (1.6564  ,  0.0901 ,  -0.2535) circle (2pt);
  \fill[fill=black] (-1.1452  ,  0.7268 ,  -0.1280) circle (2pt);
  \fill[fill=black] ( -0.3831 ,  -0.5465 ,  -1.0704) circle (2pt);
  \fill[fill=black] (0   , 2.0000      ,   0) circle (2pt);
  
  \draw (0,2,0) -- (2,0,0);
  \draw (0,2,0) -- (0,0,2);
  \draw (0,2,0) -- (-2,0,0);
  \draw (0,2,0) -- (0,0,-2);
  
  \draw (2,0,0) -- (0,0,2);
  \draw (0,0,2) -- (-2,0,0);
  \draw (-2,0,0) -- (0,0,-2);
  \draw (0,0,-2) -- (2,0,0);
  
  \draw (0,-2,0) -- (2,0,0);
  \draw (0,-2,0) -- (0,0,2);
  \draw (0,-2,0) -- (-2,0,0);
  \draw (0,-2,0) -- (0,0,-2);
  

\end{tikzpicture}
\caption{The point set given in Table~\ref{tab:8part6Rec}.}
 \label{fig:8part6Rec} 
\end{figure} 

\section{Conclusion}
In this paper, we analysed the complexity of the $\delta$-E3MBST and $\delta$-R3MBST problems. We first proved that the $5$-E3MST problem is NP-complete, then modified the proof for the bottleneck versions, to establish that the $\delta$-E3MBST and $\delta$-R3MBST problems are NP-complete for $\delta \in \{2,3,4,5\}$. We showed that the $\delta$-E3MBST problem, where $\delta \in \{4,5\}$, cannot be approximated within a factor of 1.0009 if $P \neq NP$, and we established an equivalent inapproximability constant of 1.2 for the rectilinear versions. It still remains an open problem to show that the $4$-E2MBST is NP-complete \cite{andersen2016minimum}, and if such a proof exists, we hypothesise that it could be extended to 3 dimensions to show that the $\delta$-E3MBST problem is NP-complete for $\delta = 6$ or larger.\\~\\
We also described approximation algorithms for the $\delta$-E3MBST and $\delta$-R3MBST problems; the simple $\delta$-NKRY algorithm, and the more general $(\delta,k)$-PKRY algorithm. We then analysed the algorithms by considering arrangements of points on a unit sphere that would give the worst possible performance ratios for the algorithms. By finding these pathological instances through either analytic or experimental means, we were able to give lower bounds of the performance ratios of the algorithms. For $\delta$-NKRY in 3-dimensional Euclidean space, we showed that the worst case performance ratio for $\delta =3$ is 2, and we were able to give lower bounds for each $\delta \in \{3, \dots, 11 \}$. We also showed that $\delta$-NKRY yields a worst-case performance ratio of 2 for the $\delta$-R3MBST problem for each $\delta \in \{3, \dots, 13\}$. For $(\delta,k)$-PKRY in 3-dimensional Euclidean space, we were able to give lower bounds for  $\delta \in \{4,5,6\}$ for each valid value of $k$. In 3-dimensional rectilinear space, we showed that $(\delta,k)$-PKRY gives a worst-case performance ratio of 2 for $\delta \in \{3, \dots, 7\}$ for each valid $k$. Additionally, we showed that $(\delta,k)$-PKRY gives a worst-case performance ratio of 2 in 3-dimensional rectilinear space for $\delta = 8$ and $k \in \{2,3,4\}$, for $\delta = 9$ and $k \in \{2,3\}$, and for $\delta \in \{10,11\}$ and $k =2$. If Conjecture~\ref{conj:monotone} is true, then the worst-case instances that were proven analytically would also provide upper bounds for the performance ratios of the $\delta$-NKRY algorithm, and similarly for Conjecture~\ref{conj:monotone2} and the $(\delta,2)$-PKRY algorithm. In particular, this would give exact worst-case performance ratios for the $\delta$-NKRY algorithm for $\delta \in \{4,5\}$ and the $(4,2)$-PKRY algorithm. Our results seem to corroborate these conjectures, and we hope that they are able to be proven in the future. It would also be worth having analytic proofs for the worst case-arrangements which were suggested by our computational experiments. The study of these worst-case TSP path instances may also be an interesting direction for future research.

\bibliographystyle{apalike}
\bibliography{ref}

\end{document}